\documentclass[12pt]{article}
\usepackage{amsmath}    
\usepackage{amssymb}    
\usepackage{amsthm}     
\usepackage{graphicx}   
\usepackage{caption}    
\usepackage{xcolor}
\usepackage{booktabs}
\usepackage{multirow}
\usepackage{url}
\usepackage{comment}
\usepackage{enumitem}
\usepackage[authoryear,round]{natbib}
\usepackage{multibib}
\usepackage{placeins}
\usepackage{hyperref}
\usepackage{bbm}
\usepackage{subcaption}
\usepackage{tikz}

\usepackage{algorithm}
\usepackage{algpseudocode}
\usepackage{float}

\newcommand{\E}{\mathbb{E}}

\newcommand{\adj}{\operatorname{adj}}

\newcites{main}{Main References}
\newcites{appendix}{Appendix References}

\makeatletter
\theoremstyle{plain}
\newtheorem{assumption}{\protect\assumptionname}
\theoremstyle{plain}

\theoremstyle{remark}

\newtheorem{remark}{\protect\remarkname}
\theoremstyle{plain}
\newtheorem{thm}{\protect\theoremname}
\theoremstyle{plain}
\newtheorem{cor}{\protect\corollaryname}
\theoremstyle{plain}

\theoremstyle{remark}
\newtheorem{example}{\protect\examplename}

\providecommand{\assumptionname}{Assumption}
\providecommand{\corollaryname}{Corollary}
\providecommand{\lemmaname}{Lemma}

\providecommand{\propositionname}{Proposition}
\providecommand{\remarkname}{Remark}
\providecommand{\theoremname}{Theorem}
\providecommand{\examplename}{Example}

\newcommand{\ii}{\imath}
\newcommand{\R}{\mathbb R}
\newcommand{\tOne}[1]{\widetilde{#1}}
\newcommand{\tTwo}[1]{\widetilde{\widetilde{#1}}}

\DeclareMathOperator{\rank}{rank}

\begin{document}

\def\spacingset#1{\renewcommand{\baselinestretch}{#1}\small\normalsize}
\spacingset{1}

\title{\bf Correlated Random Coefficient Distributions in Linear Panel Models\thanks{We thank Hide Ichimura, Hiro Kasahara, Laura Liu, Chris Muris, and Kevin Song for discussions and suggestions. This paper was presented at the California Econometrics Conference at UC Davis, the Conference in Honor of Bo Honor\'e at Princeton, and seminars at Ohio State, UCSD, San Diego, the University of Arizona, and UBC. Botosaru gratefully acknowledges financial support from the Canada Research Chairs Program.}}

\author{
Irene Botosaru\thanks{McMaster University, Department of Economics. Email: {\tt botosari@mcmaster.ca}}
\and
James L. Powell\thanks{University of Arizona, Department of Economics. Email: {\tt jlpowell@arizona.edu}}
}

\date{May 16, 2026}

\maketitle

\bigskip
\begin{abstract}
\noindent

We consider a static linear panel model with both correlated and uncorrelated random coefficients, where the former can depend arbitrarily on observable regressors while the latter are independent of them. We provide sufficient conditions for identification of the distributions of the random coefficients without imposing restrictions on the time-series structure of the error terms in short panels. Our framework applies to regular and irregular designs. The distribution of the correlated coefficients follows via a deconvolution argument. In irregular designs, identification relies on a stayer-based argument exploiting near-singular realizations of the regressor matrix.

We develop a two-step minimum distance sieve estimator, with tuning parameters selected by cross-validation. In an application to calorie-expenditure elasticities using data from the randomized evaluation of a conditional cash transfer program, we interpret the estimated distributions by program status as distributions of regime-specific structural calorie-expenditure elasticities. The estimated densities themselves reveal substantial heterogeneity in household-specific elasticities, with nontrivial mass concentrated near zero and a non-negligible share of negative realizations. This heterogeneity implies that responses to income or expenditure changes are not uniformly positive and vary widely across households. Taken together, these features support a framework in which households adjust along both quantity and quality margins, rather than conforming to a homogeneous Engel-curve response.
\end{abstract}

\noindent{\it Keywords:} irregular identification, correlated random coefficients, deconvolution.
\vfill

\newpage
\spacingset{1.9}

\section{Introduction}
\label{sec:intro}

Heterogeneity in individual behavior is pervasive in microeconometric applications,
and panel data are a natural way to accommodate it. A leading framework is the
correlated random coefficient (CRC) model (\citealt{Chamberlain82}), in which slope
coefficients are heterogeneous across units and may covary with the regressors. In
this setting, a natural target is the average partial effect (APE), given by the
population mean of the random coefficient vector. \cite{Chamberlain1992} studied
identification of the APE in the \emph{regular} design, where the number of
effective time periods $T$ exceeds the number of correlated coefficients $p$.
\cite{GrahamPowell2012} showed that the APE remains identified in the
\emph{irregular} design, where $T=p$, by exploiting the distinct identifying
content of \emph{movers} and \emph{stayers}.  \cite{ArellanoBonhomme2012} focus on the regular design and show various moments of the correlated random coefficients, and even their density, can be identified under ARMA-type restrictions on the error process. In this paper, we derive sufficient conditions under which the density of the correlated random coefficients is identified in both regular and irregular designs, without restricting the time-series properties of the error terms.

We consider models with the following outcome equation. For
$i=1,\dots,n$,
\begin{align}
\label{eq:outcome_stack_t}
Y_i &= X_i\beta_i + W_iD_i, \qquad
Y_i \in \mathbb R^T,\quad X_i \in \mathbb R^{T\times p},\quad W_i \in \mathbb R^{T\times d},\quad T<\infty,
\end{align}
where the correlated random coefficients
$\beta_i\in \mathbb R^p$ may depend arbitrarily on the observed regressors $(X_i,W_i)$, while the uncorrelated components $D_i\in\mathbb R^d$ are independent of $(\beta_i,X_i,W_i)$.\footnote{As explained in the next section, $D_i\in\mathbb R^d$ collects all remaining unobserved 
components, which includes both time-invariant random coefficients and time-varying idiosyncratic errors. The assumption can be weakened to conditional independence, e.g., \(\beta_i\perp D_i\mid (X_i,W_i)\) and \(D_i\perp X_i\mid W_i\). Then the results would be conditional on \(W_i\). We do not pursue this extension.} The coefficient $\beta_i$ represents the unit-level structural partial effect of $X_i$. Our parameter of interest is $f_\beta$, the cross-sectional density of $\beta_i$. Ours is a short-$T$ setting.

Identification proceeds in two steps. In the first step, we identify the characteristic function of $D_i$ by using a transformation, or annihilator matrix, that eliminates the $\beta_i$ component. The transformation depends on the relation between $T$ and $p$. In the regular design ($T > p$), there are multiple annihilation matrices; projection onto the orthogonal complement of the column space of $X_i$ achieves annihilation for all 
observations. In the irregular design ($T = p$), the annihilation matrix is the adjugate matrix of $X_i$; the $\beta_i$ component is then eliminated for \emph{stayers}, i.e. units with $\det(X_i)=0$. The adjugate--determinant algebra used in the irregular design shares a common origin with \cite{GrahamPowell2012}, but it is
embedded here in an argument based on characteristic functions.

In the second step, we apply another transformation to the outcome equation that isolates $\beta_i$.
We then recover $f_\beta$ by deconvolution, exploiting the independence restriction on $D_i$. Notably, we impose no restrictions on the time-series properties of the error terms.

We propose a sieve minimum-distance estimator for the density of the correlated
coefficients and study its finite-sample properties via Monte Carlo simulations. The estimators we propose specialize to the scalar irregular design $(T,p)=(1,1)$ and to the low-dimensional regular design $(T,p)=(2,1)$, with $W_i$ being the identity matrix; these are the configurations that arise in our application. 
Our estimator is a two-step procedure. In the first step, which is
design-specific, we construct a nonparametric estimator of the 
unconditional characteristic function of the correlated coefficients. This involves estimating a conditional 
characteristic-function ratio at each observation and then averaging 
over the sample to integrate out the covariates. In the irregular design, 
we partition the sample into stayers, used to estimate the characteristic 
function of $D_i$, and movers, used to estimate a trimmed ratio of 
characteristic functions, which is then averaged across the movers. In the
regular design, no mover--stayer partition is needed: the characteristic 
function of $D_i$ is estimated by smoothing over normalized directions on 
the unit sphere, and the ratio is averaged over the full sample. In the second step,
which is common across designs, we approximate the density by a finite-dimensional
expansion in orthonormal Hermite functions and minimize a weighted distance between
the Fourier transform of this expansion and the first-step characteristic function
estimate, subject to a unit-mass normalization. The Hermite basis is a natural
choice: Hermite functions are eigenfunctions of the Fourier transform, so the
criterion is quadratic in the sieve coefficients and admits a closed-form
constrained minimizer. We provide a heuristic rate decomposition that
guides the choice of tuning parameters, a cross-validation procedure for selection of the bandwidth and sieve dimension, and Monte Carlo evidence on finite-sample performance.\footnote{Large-sample theory for the two-stage estimator, including data-driven tuning and bootstrap validity, is left for future work.}

We revisit the application in \citet{GrahamPowell2012}, which uses panel data
from rural Nicaragua collected as part of the evaluation of the conditional cash transfer program Red de Protección Social (RPS). In this application, the correlated random coefficient is the household-specific calorie--expenditure elasticity. We estimate its full cross-sectional density. The estimated densities reveal substantial heterogeneity: there is
considerable dispersion, nontrivial mass near zero, and a non-negligible share of negative elasticities. These findings help reconcile mixed estimates in the
calorie-demand literature, where average elasticities range from values close to zero \citep{BehrmanDeolalikar1987,BouisHaddad1992,Ravallion1990} to values around 0.3--0.5 \citep{SubramanianDeaton1996}. They are also consistent with the view that households differ in how additional resources are allocated across quantity, quality, and nonfood margins
\citep{Deaton1997,JensenMiller2008,Skoufias2011,DeatonDreze2009}.

The experimental design of RPS gives the distributional estimates a causal interpretation. Since assignment was randomized at the village level and take-up was high, differences between the estimated elasticity distributions for recipient and nonrecipient households can be interpreted as causal
contrasts between regime-specific densities, subject to the maintained CRC assumptions. The three survey
waves also allow us to compare scalar irregular estimates separately for 2000--2001 and 2001--2002. Under the baseline model, which imposes a
time-invariant household-specific elasticity, these estimates should recover the same density. The leftward shift we find among RPS recipients is therefore suggestive evidence that elasticities may vary with cumulative program exposure, while the regular stacked estimator should be viewed as a pooled
benchmark under the common-\(\beta_i\) restriction.

Our paper contributes to the literature on CRC models in short panels. Existing work primarily focuses on low-dimensional functionals of the heterogeneity distribution. For example, \citet{GrahamPowell2012} study average partial effects in irregular designs, \citet{GrahamHahnPoirierPowell2018} study quantile effects under comonotonicity restrictions in both regular and irregular designs, while \citet{SasakiUra2026} study inference for average partial effects in irregular designs with slow movers. In regular designs, \citet{Verdier2020} considers identification of average
treatment effects for stayers, recovering conditional means via linear
extrapolation; \cite{Laage2024} studies average effects in the presence of
time-varying endogeneity through control variables; while \citet{MurisWacker2026}
analyze interaction effects, that is, how the correlated coefficient varies
with observable regressors. We instead identify and estimate the marginal distribution of $\beta_i$ in both regular and irregular designs. This allows us to recover higher-order moments and shape features of the heterogeneity distribution, including dispersion, skewness, and multimodality, which are not accessible through mean- or quantile-based approaches.

The closest related result is \citet{ArellanoBonhomme2012}, who also establish identification of the distribution of correlated random coefficients in regular designs. Their approach exploits restrictions on the time-series properties of the error terms. In contrast, we allow for unrestricted serial dependence in the error terms and instead exploit restrictions on the dependence between the errors and the observed regressors. This alternative source of variation delivers identification in both regular and irregular designs.

Building on \citet{ArellanoBonhomme2012}, \citet{BotosaruLiu2025} uses empirical Bayes methods to recover posterior means of the correlated random coefficients. In contrast, we recover the marginal distribution of $\beta_i$ without  imposing parametric restrictions, using a deconvolution-based estimator rather than empirical Bayes methods.

In a related but distinct framework, \citet{HoderleinWhite2012} study local
average responses for stayers in nonseparable panel models under a time-invariant
structural function. \citet{Chernozhukovetal2015} extend
this to quantile effects for stayers. We maintain linearity of the outcome equation and identify the distribution
of $\beta_i$ for the entire population.

Our estimation strategy builds on tools developed for random coefficient density estimation in cross-sectional settings. \citet{HoderleinKMammen} propose a Radon-transform-based estimator, while \citet{Breunig2021} develops a sieve minimum-distance estimator based on Hermite functions.  We also use a sieve-based weighted minimum-distance criterion, where the criterion function involves a ratio of characteristic functions evaluated at data-dependent frequencies.

More broadly, our approach is related to the deconvolution literature based on characteristic functions. In particular, \cite{KatoSasakiUra2021} reformulate Kotlarski's identity as a system of complex-valued moment restrictions and conduct inference via test inversion. This strategy yields inference that is robust to potential failure of the completeness condition. Our panel setting also implies moment restrictions derived from a characteristic-function factorization. Unlike their repeated-measurement model, which targets an unconditional characteristic function, our setting involves a conditional characteristic function indexed by the regressors. Moreover, the moments depend on a generated first-step estimator evaluated at  data-dependent frequencies. These features, discussed in detail in Remark~\ref{rem:inference}, lead us to base estimation on an explicit regularized deconvolution step rather than moment inversion.

Finally, our results connect to applications with continuously measured treatments. Recent work studies average effects in such settings using difference-in-differences (DID), both when stayers are available \citep{deChaisemartinetal2024b} and when all units change treatments between periods \citep{deChaisemartinetal2024a}; see also \citet{CallawayGBSantAnna} for a framework for continuous DID under parallel trends. Distributional effects have also been analyzed under unconfoundedness, e.g., \citet{GalvaoWang2015}, \citet{CallawayHuang2020}.

The remainder of the paper is organized as follows. Section~\ref{sec:model}
introduces the model and identifying assumptions. Section~\ref{sec:identification}
presents the identification results for both regular and irregular designs, with
worked examples. Section~\ref{sec:estimation} describes the sieve minimum-distance
estimator. Section~\ref{sec:simulations} reports
Monte Carlo evidence on the finite-sample performance of the estimator.
Section~\ref{sec:application} applies the methods to estimate the distribution of
calorie demand elasticities using the Nicaraguan household panel of Graham and Powell
(2012).

\subsubsection*{Notation} 

For a matrix \(M\), \(\adj(M)\) denotes the adjugate matrix and \(\det(M)\) its determinant. For a vector $v$, $\|v\|$ denotes the Euclidean norm, for a matrix $M$, \(\|M\|\) denotes the Frobenius norm. The joint characteristic function of an arbitrary \(q\times 1\) random vector \(Z=(Z_1,\dots,Z_q)'\) is
\[
\varphi_Z(s)\equiv \E\!\left[e^{\ii s'Z}\right],\qquad s\in\mathbb R^q, \qquad \ii=\sqrt{-1}.
\]
Note that we define characteristic functions on column vectors.

Whenever we condition on an event of probability zero, the conditional object is understood as a regular conditional distribution or conditional expectation whenever it exists. In particular, conditioning on the probability zero event \(\det(X_i)=0\) in the irregular design is interpreted through the limit as \(\det(X_i)\to 0\), under appropriate continuity assumptions stated in the paper.

For each $(s,u)\in\mathbb{S}^{d-1}\times\mathbb{R}$, let $\sigma_{s,u}$ denote the
$(d-1)$-dimensional Hausdorff measure on the hyperplane
$\{\delta\in\mathbb{R}^d:s'\delta=u\}$. Let $\mathcal{R}$ denote
the Radon transform:
\[
  \mathcal{R}g(s,u)
\equiv
\int_{\{\delta:\,s'\delta=u\}} g(\delta)\,d\sigma_{s,u}(\delta),
\]
for any integrable $g:\mathbb{R}^d\to\mathbb{R}$. 

\section{Model}
\label{sec:model}

The outcome equation \eqref{eq:outcome_stack_t} is our point of departure. We think of it as having been generated by, for example,
\begin{align}
\label{eq:outcome_levels}
    y_{it}
    &= \alpha_{i}+x^\prime_{it}\beta_{i}+w^\prime_{it}\delta_{i}+u_{it},\quad t=0,\dots,T,
\end{align}
where $x_{it}\in\mathbb{R}^p$ and $w_{it}\in\mathbb{R}^q$ are observed
covariates; $\beta_i\in\mathbb{R}^p$ and $\delta_i\in\mathbb{R}^q$ are vectors
of unit-specific random coefficients; and $u_{it}$ is an idiosyncratic error.
The random coefficients $(\alpha_i,\beta_i')$ may be arbitrarily correlated with
the covariates $\{x_{it},w_{it}\}_{t=0}^{T}$. The random coefficients
$\delta_i$ are independent of the covariates. Equation \eqref{eq:outcome_stack_t} is obtained by first-differencing \eqref{eq:outcome_levels} across adjacent periods to eliminate $\alpha_i$, stacking over the effective time periods $t=1,\dots,T$, and letting $Y_{it}\equiv y_{it}-y_{i,t-1}$, $X_{it}\equiv x_{it}-x_{i,t-1}$,
$W_{it}\equiv w_{it}-w_{i,t-1}$, $U_{it}\equiv u_{it}-u_{i,t-1}$, and
$$W_i \equiv
\bigl(\,\Delta{w}_i \;\; I_T\bigr), \; \Delta{w}_i \equiv (W_{i1},\ldots,W_{iT})' \in \mathbb{R}^{T\times q},\; D_i \equiv (\delta_i',\, U_{i1},\ldots,U_{iT})'\in\mathbb R^d, \; d = q + T.$$

\begin{assumption}
\label{assm:A1}
    (i) \(\beta_i\) is absolutely continuous with respect to Lebesgue measure, with density \(f_\beta(b)\), \(b\in\mathbb{R}^p\).  
    (ii) \(D_i\) is independent of \((\beta_i,X_i,W_i)\) drawn from distribution \(F_D(s)\), \(s\in\mathbb{R}^d\).
\end{assumption}

Assumption \ref{assm:A1}(i) ensures that the correlated random coefficients \(\beta_i\) admit a density function. This can be relaxed if interest is in the distribution function or the moments of $\beta_i$ instead.\footnote{Our identification strategy recovers the characteristic function of \(\beta_i\), which can be inverted via Fourier inversion to obtain either the density function or via L\'evy inversion to obtain the cumulative distribution function; see \cite{Lukacs1970}.} Assumption \ref{assm:A1}(ii) allows arbitrary correlation between $\beta_i$ and $(X_i,W_i)$, while restricting the joint distribution of $(\beta_i,D_i,X_i,W_i)$. In particular, the assumption has the following implications. First, $D_i$ is statistically independent of $(X_i,W_i)$.\footnote{Heteroskedasticity \(\operatorname{Var}(D_i\mid W_i)\) can be allowed by weakening the statistical independence assumption. Identification results would then be stated conditional on $W_i$, which we do not pursue.}  Second, $\beta_i$ and $D_i$ are conditionally independent given $(X_i,W_i)$. These implications allow us to, first, identify the characteristic function of $D_i$, and then that of $\beta_i$ via a deconvolution argument.

Assumption \ref{assm:A1}(ii) has observable implications when $q\neq0$. Let $x\in\mathbb{R}^{T\times p}$ and
$w\in\mathbb{R}^{T\times d}$ denote matrix realizations of $X_i,\;W_i$, respectively. For finite second moments, for almost every \((x,w)\) in the support of \((X_i,W_i)\), it follows that
\[
\operatorname{Var}(Y_i\mid X_i=x,W_i=w)
=
x\,\operatorname{Var}(\beta_i\mid x,w)\,x'
+
w\,\operatorname{Var}(D_i)\,w',
\]
where we used $\operatorname{Var}(D_i\mid X_i,W_i)=\operatorname{Var}(D_i),$ and $\operatorname{Cov}(\beta_i,D_i\mid X_i,W_i)=0.$ Therefore, the difference
\[
\operatorname{Var}(Y_i\mid x,w)
-
w\,\operatorname{Var}(D_i)\,w'\in\mathbb R^{T\times T}
\]
must be positive semidefinite for almost every \((x,w)\). A negative eigenvalue of the associated matrix provides evidence against Assumption \ref{assm:A1}(ii).

\begin{remark}[Interpretation of the correlated random coefficients]
The outcome equation \eqref{eq:outcome_stack_t} admits the structural potential-outcome interpretation
\[
  Y_i(x,w)=x\beta_i+wD_i,
\]
for counterfactual regressor histories $x\in\mathbb R^{T\times p},\; w\in\mathbb R^{T\times d}$, in the relevant support. Under this interpretation, \(\beta_i\) is a unit-level partial effect with respect to the \(x\)-coordinates. In particular, for two counterfactual histories \(x_1\) and
\(x_2\), holding \(w\) fixed,
\[
  Y_i(x_2,w)-Y_i(x_1,w)
  =
  (x_2-x_1)\beta_i .
\]
In the scalar case, this reduces to
\[
  \frac{\partial Y_i(x,w)}{\partial x}=\beta_i .
\]
Thus, in program-evaluation language, \(\beta_i\) is the unit-level causal
slope of the dose-response function with respect to the continuous regressor
\(x\).

The word ``correlated'' means that realized regressor histories may be
informative about the unit-level structural slope. Formally, although
\(\beta_i\) is fixed for unit \(i\) and does not vary with the counterfactual
values of \(x\) or \(w\), its conditional distribution may vary with the
observed history:
\[
  f_{\beta\mid X,W}(\cdot\mid x,w)\neq f_\beta(\cdot).
\]
Equivalently, the model allows selection on heterogeneous slopes: units
observed at different realized histories may have different distributions of
\(\beta_i\). This selection changes the composition of latent types across
histories, but \(\beta_i\) does not change when \(x\) is counterfactually changed
for the same unit. See \citet{GrahamPowell2012}. 
\end{remark}

\section{Identification}
\label{sec:identification}

The parameter of interest is $f_\beta$, the density of the correlated random coefficients. We sketch the logic behind our identification strategy below; the main result is in Section \ref{sec:densities}.

Identification of $f_\beta$ proceeds in two steps. In the first step, we apply a transformation to \eqref{eq:outcome_stack_t} that allows us to remove the $\beta_i$ component, which then yields the characteristic function of $D_i$. In the second
step, we apply a different transformation to \eqref{eq:outcome_stack_t} that isolates $\beta_i$. Given the identification of the characteristic function of $D_i$, this then allows us to recover
$f_\beta$ by deconvolution. The choice of transformations depends on the relation
between $T$ and the number of correlated random
coefficients $p$. Concrete examples are given in Section \eqref{sec:transformations}. 

Let $\tau_1(X_i)\in\mathbb R^{T\times T}$ denote a measurable
transformation. In the regular design $T>p$, it satisfies
\begin{equation}
\tau_1(X_i)X_i=0,
\label{tau_1}
\end{equation}
so that, premultiplying \eqref{eq:outcome_stack_t} by $\tau_1(X_i)$, yields:
\begin{equation}
\tau_1(X_i)Y_i=\tau_1(X_i)W_iD_i,
\qquad\text{equivalently}\qquad
\tOne Y_i=\tOne W_iD_i.
\label{tau1_W}
\end{equation}
We use \eqref{tau1_W} to identify the characteristic function of $D_i$ in the regular design.

In the irregular design $T=p$, we take
$\tau_1(X_i)=\adj(X_i)$. Using the identity
\begin{equation}
\label{eq:adjugate_rel}
\tau_1(X_i)X_i=\det(X_i)I_p,
\end{equation}
we premultiply \eqref{eq:outcome_stack_t} by $\tau_1(X_i)$ to obtain:
\begin{equation}
\label{transformed_adj}
\tOne Y_i=\det(X_i)\beta_i+\tOne W_iD_i.
\end{equation}
The $\beta_i$ component vanishes on the stayer set $\{\det(X_i)=0\}$. We use \eqref{transformed_adj} to recover the characteristic function of $D_i$ in the irregular design, for observations with $\det(X_i)=0$.

Let $\tau_2(X_i)\in\mathbb R^{p \times T}$ be any measurable mapping satisfying
\begin{equation}
\tau_2(X_i)X_i=I_p .
\label{tau_2}
\end{equation}
Premultiplying \eqref{eq:outcome_stack_t} by $\tau_2(X_i)$ gives
\begin{equation}
\tau_2(X_i)Y_i
=
\beta_i+\tau_2(X_i)W_iD_i,
\qquad\text{equivalently}\qquad
\tTwo Y_i=\beta_i+\tTwo W_iD_i.
\label{tau2_YW}
\end{equation}
Given the characteristic function of $D_i$ from the first step, we use \eqref{tau2_YW} to identify the density of
$\beta_i$ via a deconvolution argument.\footnote{%
The role of the transformations is not algebraic recovery of the random coefficients. Generally, neither \eqref{tau1_W} nor \eqref{tau2_YW} identifies $D_i$ or $\beta_i$. For example, 
$\widetilde W_i=\tau_1(X_i)W_i$ contains at most
$\operatorname{rank}(\tau_1(X_i))$ linearly independent equations for $d$ unknown components. Since
$\operatorname{rank}(\widetilde W_i)\le \operatorname{rank}(\tau_1(X_i))<d$
in the designs of interest, the system is underdetermined for each $i$.
Similarly,
$\widetilde{\widetilde Y}_i=\beta_i+\widetilde{\widetilde W}_iD_i$
provides $p$ equations in the $p+d$ latent components
$(\beta_i,D_i)$ and is likewise underdetermined at the unit level.
}

\subsection{Identification of the density function \texorpdfstring{$f_\beta$}{f\_beta}}
\label{sec:densities}

Throughout the analysis, we maintain the following assumptions, where $\tOne Y_i\in\mathbb R^T,\;\tOne W_i\in\mathbb R^{T\times d}$ are as defined in \eqref{tau1_W} and $\tTwo Y_i\in\mathbb R^p,\;\tTwo W_i\in\mathbb R^{p\times d}$ are as defined in \eqref{tau2_YW}. Note that $d\geq 1$ in the irregular design, while $d\geq 2$ in the regular design.\footnote{Consider the regular design with $d=1$. Then $T+q=1$, so it must be that $q=0$ and $T=1$, which implies that $p=0$.}

\begin{assumption}\label{assm:A1_prime}
The characteristic function $\varphi_D$ of $D_i$ is nowhere zero on $\mathbb{R}^{d}$.
\end{assumption}

This is a standard assumption in deconvolution, ensuring that division in the Fourier domain is well defined. It rules out, e.g., uniform and truncated normal distributions.

\begin{assumption}
\label{assm:A1_double}
Let $\tau_1(X_i)$ be as defined in \eqref{tau_1} or \eqref{eq:adjugate_rel} depending on the design,
and let $\mu_i = \mu(X_i,W_i) \in \mathbb{R}^T$ be a known measurable
function. Define
\[
  \lambda_i \equiv \widetilde{W}_i'\mu_i \in \mathbb{R}^{d},
  \qquad
  S_i \equiv \frac{\lambda_i}{\|\lambda_i\|}
  \in \mathbb{S}^{d-1}.
\]

\noindent(i) \emph{Regular design ($T>p$).} $\Pr(\lambda_i=0)=0$ and $\operatorname{supp}(S_i) = \mathbb{S}^{d-1}$.

\medskip
\noindent(ii) \emph{Irregular design ($T=p$).} When $d\geq1$,
$\Pr(\lambda_i=0\mid\det(X_i)=0)=0$; additionally, $\operatorname{supp}(S_i\mid\det(X_i)=0) = \mathbb{S}^{d-1}$ when $d\geq 2$.
\end{assumption}


Assumption~\ref{assm:A1_double} is the support condition that makes the
first-step characteristic-function argument informative in all relevant directions. Whenever the annihilated equation satisfies $\widetilde Y_i=\widetilde W_iD_i,$ we have, for any measurable projection vector \(\mu_i\),
\[
  \mu_i'\widetilde Y_i
  =
  \mu_i'\widetilde W_iD_i
  =
  \lambda_i'D_i .
\]
Since \(\lambda_i\) is measurable with respect to \((X_i,W_i)\) and
\(D_i\) is independent of \((X_i,W_i)\), it follows that, for every
\(v\in\mathbb R\),
\[
  \mathbb E\!\left[
    \exp\{ \iota v\mu_i'\widetilde Y_i\}
    \,\middle|\,
    \lambda_i=\lambda
  \right]
  =
  \varphi_D(v\lambda),
\]
with the conditioning interpreted conditionally on the stayer event in the
irregular design. Thus the first step identifies \(\varphi_D\) on the following set
\[
  \mathcal C_1
  :=
  \{v\lambda:\ v\in\mathbb R,\ 
  \lambda\in\operatorname{supp}(\lambda_i)\}.
\]
Equivalently, it identifies \(\varphi_D\) along the rays generated by the
normalized directions \(S_i\). Assumption~\ref{assm:A1_double} requires these
directions to be rich enough that
\[
  \overline{\operatorname{supp}(S_i)}
  =
  \mathbb S^{d-1}
\]
in the regular design, and
\[
  \overline{\operatorname{supp}(S_i\mid \det(X_i)=0)}
  =
  \mathbb S^{d-1}
\]
in the irregular design. Since characteristic functions are continuous, knowledge
of \(\varphi_D\) on a dense collection of rays identifies \(\varphi_D\) on all of
\(\mathbb R^d\). This is the sense in which Assumption~\ref{assm:A1_double}
ensures compatibility between the frequencies at which \(\varphi_D\) is recovered
in the first step and the frequencies at which it is needed in the deconvolution
step.

In the irregular design, \(\tau_1(X_i)=\adj(X_i)\). On the stayer set
\(\{\det(X_i)=0\}\), and under the generic rank condition
\[
  \rank(X_i)=p-1,
\]
the adjugate matrix has rank one. Hence $\widetilde W_i$ has row space of dimension at most one. Each stayer therefore contributes at most
one direction, up to scale. If a fixed row of \(\widetilde W_i\) is nonzero on the
relevant support, one may use that row. For instance, if the first row is nonzero
on the stayer set, one may take \(\mu_i=e_1\), giving
\[
  \lambda_i
  =
  \widetilde W_i'e_1
  =
  W_i'\adj(X_i)'e_1 .
\]
A sufficient support condition is then
\[
  \overline{\operatorname{supp}\left(
  \frac{\lambda_i}
       {\|\lambda_i\|}
  \,\middle|\,
  \det(X_i)=0
  \right)}
  =
  \mathbb S^{d-1}.
\]
If no fixed row is nonzero on the relevant support, one can instead define a
measurable row-selection rule. Thus, in the irregular design, directional
variation must come from the conditional distribution of \((X_i,W_i)\) on, or
near, the singular set.

In the regular design, \(\tau_1(X_i)X_i=0\), and the annihilated equation $\widetilde Y_i=\widetilde W_iD_i$
holds for every observation. If \(\rank(X_i)=p\), the standard annihilator has
rank \(T-p\), so
\[
  \rank(\widetilde W_i)\leq T-p.
\]
When the row space of \(\widetilde W_i\) has dimension greater than one, a single
observation generates a family of possible directions, and the projection vector
\(\mu_i\) selects one direction from that family. When the row space is
one-dimensional, each observation contributes only one direction up to scale, and
the required richness must come from cross-sectional variation in \((X_i,W_i)\).

A fixed projection vector need not exploit all available directional variation. For
example, if \(\mu_i\equiv e_1\), then $  \lambda_i=\widetilde W_i'e_1$ uses only one fixed linear combination of the rows of \(\widetilde W_i\). The resulting directions may lie in a strict subset of \(\mathbb S^{d-1}\), even if
the collection of row spaces of \(\widetilde W_i\) is itself rich. Allowing
\(\mu_i=\mu(X_i,W_i)\) to vary with the data enlarges the attainable set of
directions by selecting different linear combinations of the rows of
\(\widetilde W_i\). Assumption~\ref{assm:A1_double} holds when the normalized
selected directions are dense in the unit sphere.

\begin{remark}[Support condition for \(D_i\)]
Assumption~\ref{assm:A1_double} is sufficient for Theorem~\ref{main_result}.
It identifies \(\varphi_D\) along every ray generated by the support of \(S_i\);
continuity of characteristic functions then extends identification to all of
\(\mathbb R^d\). This is enough for the subsequent deconvolution step that
identifies \(f_\beta\).

If \(D_i\) admits a density \(f_D\), then \(f_D\) can also be identified through a
Radon-transform representation. That route requires a stronger condition, e.g., the
directions \(S_i\) must have a density bounded away from zero with respect to surface measure, see Corollary \ref{cor_fD}.
This condition is used only for identifying \(f_D\), not for the main
identification result for \(f_\beta\).

The full-support condition can be weakened if \(\varphi_D\) is real analytic on
\(\mathbb R^d\). In that case, knowledge of \(\varphi_D\) on any nonempty open
subset of \(\mathbb R^d\) determines \(\varphi_D\) globally. Hence it is enough
that the cone $\mathcal C_1$ contain a nonempty open subset of \(\mathbb R^d\). 
\end{remark}

\begin{remark}[Discrete regressors]
When \((X_i,W_i)\) have discrete support, identification depends on whether the
first-step directions cover the second-step frequencies required for
deconvolution. If the support is finite and \(d\geq 2\), then, for any
measurable choice of \(\mu_i\), the set of normalized directions \(S_i\) is finite
and therefore cannot be dense in \(\mathbb S^{d-1}\). Hence the first step
identifies \(\varphi_D\) only on a finite union of rays and does not identify the
full distribution of \(D_i\) without additional restrictions.

This does not mechanically rule out identification of \(f_\beta\). The
deconvolution step requires \(\varphi_D\) only on the second-step frequency set
\[
  \mathcal C_2
  =
  \{W_i'\tau_2(X_i)'u:
    u\in\mathbb R^p,\ (X_i,W_i)\in\operatorname{supp}(X_i,W_i)\}.
\]
Thus finite-support designs may still identify \(f_\beta\) if $\mathcal C_2\subseteq \overline{\mathcal C_1}$. When this compatibility fails, the denominator required in the deconvolution
step is not identified at some relevant frequencies, and \(f_\beta\) is not
point identified without additional variation or structural restrictions.
Section~\ref{ID_with_discrete_regressors} provides a formal discussion.
\end{remark}

\begin{assumption}\label{Fourier_inversion}
Let $\tau_2(X_i)$ be the measurable transformation defined in \eqref{tau_2}. 
For almost every realization $(x,w)$ of $(X_i,W_i)$, the ratio
\[
s\mapsto
\frac{\varphi_{\tTwo Y\mid X,W}(s\mid x,w)}
     {\varphi_{\tTwo W D\mid X,W}(s\mid x,w)}
\]
is absolutely integrable on $\mathbb{R}^p$.
\end{assumption}

Assumption~\ref{Fourier_inversion} is an integrability requirement ensuring that the conditional density \(f_{\beta\mid X,W}(\cdot\mid x,w)\) exists and that the characteristic function of $\beta_i$ is integrable so that $f_\beta$ can be obtained via Fourier inversion. 

\begin{assumption}\label{support}
\begin{enumerate}[label=(\roman*)]
\item \emph{Irregular design ($T=p$).}
The random variable $\det(X_i)$ admits a Lebesgue density that is
continuous and strictly positive in a neighborhood of $0$. Moreover, for every $\lambda$ in a dense subset of $\mathbb{R}^d\setminus\{0\}$, and every bounded continuous function $h:\mathbb{R}\to\mathbb{R}$, the mapping
    \[
        d_x \;\longmapsto\;
        \mathbb{E}\!\bigl[
          h\bigl(\mu_i'\widetilde{Y}_i\bigr)
          \,\big|\,
          \widetilde{W}_i'\mu_i = \lambda,\;
          \det(X_i)=d_x
        \bigr]
    \]
is well defined and continuous on a neighborhood of~$0$.

\item \emph{Regular design ($T>p$).}
$\Pr(\rank(X_i)=p)=1$.
\end{enumerate}
\end{assumption}

Assumption~\ref{support}(i) implies $\Pr(\det(X_i)=0)=0$, so stayers occur with probability zero.
Identification in the irregular case therefore exploits the fact that
$\det(X_i)$ can be arbitrarily close to zero with positive probability. The second part of (i) ensures that the joint conditional
distribution used in the proof of Theorem~\ref{main_result} (specifically,
the distribution of $\mu_i'\widetilde{Y}_i$ given both
$\widetilde{W}_i'\mu_i=\lambda$ and $\det(X_i)=d_x$) has a well-defined
limit as $d_x\to 0$.\footnote{The conditioning notation
$\mathbb{E}[\cdot\mid\widetilde{W}_i'\mu_i=\lambda,\,\det(X_i)=0]$ in
Theorem~\ref{main_result} is understood as the limit
$\lim_{d_x\to 0}
\mathbb{E}[\cdot\mid\widetilde{W}_i'\mu_i=\lambda,\,\det(X_i)=d_x]$.} This rules out pathological configurations in which
the conditional relationship between $\widetilde{Y}_i$ and
$\widetilde{W}_i'\mu_i$ changes discontinuously as the regressor matrix
approaches singularity. A sufficient primitive condition is that the transformation
\[
  (X_i,W_i,\beta_i)
  \mapsto
  \bigl(\mu_i'\beta_i,\widetilde W_i'\mu_i,\det(X_i)\bigr)
\]
induce a joint density that is continuous in a neighborhood of
\(\{\det(X_i)=0\}\), and that the marginal density of
\((\widetilde W_i'\mu_i,\det(X_i))\) be continuous and bounded away from zero
on the relevant conditioning set.

Assumption~\ref{support}(ii) is the standard full-rank condition ensuring that $X_i$ has a well-defined $p$-dimensional column space for all observations in the regular design. This ensures that both $\tau_1(X_i)$ and $\tau_2(X_i)$ exist for all units.

\begin{thm}\label{main_result}
Consider the outcome equation \eqref{eq:outcome_stack_t} with $T\ge p$, and suppose
Assumptions \ref{assm:A1}, \ref{assm:A1_prime}, \ref{assm:A1_double},
\ref{Fourier_inversion}, and \ref{support} hold.
Then the density of $\beta_i$ is point identified and given by
\begin{equation}
\label{fbeta_main}
f_\beta(b)
=
\Big(\tfrac{1}{2\pi}\Big)^p
\mathbb E\!\left[
\int_{\mathbb R^p}
e^{-\ii u^\prime b}\,
\frac{\varphi_{\tTwo Y\mid X,W}(u\mid X_i,W_i)}
     {\varphi_{\tTwo W D\mid X,W}(u\mid X_i,W_i)}
\,du
\right],
\qquad b\in\mathbb R^p,
\end{equation}
where
\begin{align}
\varphi_{\tTwo Y\mid X,W}(u\mid X_i,W_i)
&=
\varphi_{Y\mid X,W}\!\left(\tau_2(X_i)^\prime u\mid X_i,W_i\right),
\label{eq:phiY-tilde}
\\
\varphi_{\tTwo W D\mid X,W}(u\mid X_i,W_i)
&=
\varphi_D\!\left(W_i^\prime \tau_2(X_i)^\prime u\right).
\label{eq:phiWD-tilde}
\end{align}

Furthermore, $\varphi_D$ is point identified and given by:
\begin{equation}
\varphi_D(v\lambda)
=
\begin{cases}
\E\!\left[
\exp\!\big(\ii v\,\mu_i^\prime \tOne Y_i\big)
\,\middle|\,
\tOne W_i^\prime \mu_i=\lambda
\right],
& \text{regular design}, \\[0.5em]
\E\!\left[
\exp\!\big(\ii v\,\mu_i^\prime \tOne Y_i\big)
\,\middle|\,
\tOne W_i^\prime \mu_i=\lambda,\;
\det(X_i)=0
\right],
& \text{irregular design},
\end{cases}
\end{equation}
for any $v\in\mathbb{R}$ and for either any $\lambda\in\mathbb R^d$ in the support of
$\tOne W_i^\prime\mu_i$ in the regular design, or any $\lambda\in\mathbb R^d$ in the conditional support of
$\tOne W_i^\prime\mu_i$ given $\det(X_i)=0$ in the irregular design.\footnote{In this case, where $\Pr(\det(X_i)=0)=0$, the
conditioning notation is understood as shorthand for the corresponding
limit as $\det(X_i)\to 0$, as described in the notation section and
justified by Assumption~\ref{support}(i).}
\end{thm}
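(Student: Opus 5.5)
The proof proceeds in two steps, mirroring the structure laid out before the theorem: first identify $\varphi_D$ on all of $\mathbb R^d$, then deconvolve conditionally on $(X_i,W_i)$ and integrate out the regressors.

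\emph{Step 1 (identification of $\varphi_D$).} In the regular design, \eqref{tau1_W} holds for every unit. For any measurable $\mu_i$ we have $\mu_i'\tOne Y_i=\lambda_i'D_i$ with $\lambda_i$ a function of $(X_i,W_i)$. By Assumption~\ref{assm:A1}(ii), $D_i$ is independent of $\lambda_i$. Hence $\E[\exp(\ii v\mu_i'\tOne Y_i)\mid \lambda_i=\lambda]=\varphi_D(v\lambda)$ for $\lambda$ in the support of $\lambda_i$, which is the displayed formula.

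In the irregular design I would work with \eqref{transformed_adj} at $\det(X_i)=d_x$. There $\mu_i'\tOne Y_i=d_x\mu_i'\beta_i+\lambda_i'D_i$. The conditional expectation given $(\lambda_i=\lambda,\det(X_i)=d_x)$ is well defined and continuous at $d_x=0$ by Assumption~\ref{support}(i); since the density of $\det(X_i)$ is positive near $0$, the conditioning is nondegenerate. The term $d_x\mu_i'\beta_i\to0$ in probability as $d_x\to0$. This needs tightness of $\mu_i'\beta_i$ conditional on the event near singularity, which I would extract from the continuity assumption applied to bounded continuous $h$ (equivalently, from the primitive joint-density condition). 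Independence of $D_i$ from $(X_i,W_i,\beta_i)$ gives, for every $d_x$,
\[
\E[\exp(\ii v\lambda'D_i)\mid\lambda_i=\lambda,\det(X_i)=d_x]=\varphi_D(v\lambda).
\]
Taking the limit $d_x\to0$ then yields the stated formula. Applying the continuity requirement to $h=\cos,\sin$, the limit defining the shorthand $\det(X_i)=0$ coincides with $\varphi_D(v\lambda)$.

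In both designs this identifies $\varphi_D$ on the cone $\mathcal C_1$. Assumption~\ref{assm:A1_double} makes the directions $S_i$ dense in $\mathbb S^{d-1}$; it also excludes $\lambda_i=0$, so normalization is legitimate. Every $s\in\mathbb R^d$ is then a limit of points $v\lambda\in\mathcal C_1$, and continuity of $\varphi_D$ pins down $\varphi_D(s)$.

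\emph{Step 2 (deconvolution).} Premultiplying by $\tau_2(X_i)$ gives \eqref{tau2_YW}. In the regular design $\tau_2$ exists for all units by Assumption~\ref{support}(ii). In the irregular design $\tau_2=X_i^{-1}$, which exists almost surely since $\Pr(\det X_i=0)=0$. Conditional on $(X_i,W_i)=(x,w)$, $\beta_i$ and $D_i$ are independent by Assumption~\ref{assm:A1}(ii), and $D_i$ keeps its marginal law. Therefore
\[
\varphi_{\tTwo Y\mid X,W}(u\mid x,w)=\varphi_{\beta\mid X,W}(u\mid x,w)\,\varphi_D(w'\tau_2(x)'u).
\]
Identities \eqref{eq:phiY-tilde}--\eqref{eq:phiWD-tilde} are immediate from $u'\tau_2(x)Y=(\tau_2(x)'u)'Y$. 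The denominator is known from Step 1 and is nonzero by Assumption~\ref{assm:A1_prime}, so $\varphi_{\beta\mid X,W}(\cdot\mid x,w)$ is identified as the ratio. By Assumption~\ref{Fourier_inversion} the ratio is integrable, so Fourier inversion gives $f_{\beta\mid X,W}(b\mid x,w)=(2\pi)^{-p}\int e^{-\ii u'b}(\text{ratio})\,du$ for almost every $(x,w)$. Finally, $f_\beta(b)=\E[f_{\beta\mid X,W}(b\mid X_i,W_i)]$ by the law of iterated expectations, which gives \eqref{fbeta_main}.

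\emph{Main obstacle.} The delicate step is the irregular-design limit. It requires justifying that $d_x\mu_i'\beta_i$ is asymptotically negligible under the conditional law as $d_x\to0$, and that the conditional expectation at the probability-zero stayer event is the intended limit. I would handle this by writing
\[
\E[e^{\ii v\mu_i'\tOne Y_i}\mid\cdot,d_x]=\varphi_D(v\lambda)\,\E[e^{\ii v d_x\mu_i'\beta_i}\mid\cdot,d_x],
\]
which follows from the factorization implied by independence. I would then use the assumed continuity, together with tightness of $\mu_i'\beta_i$ near the stayer set, to send the second factor to $1$. A secondary point is the measure-theoretic care in the final averaging step, namely using Fubini/Tonelli with the integrability from Assumption~\ref{Fourier_inversion}.
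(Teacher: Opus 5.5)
You follow the paper's proof step for step. You annihilate $\beta_i$, use independence to read off $\varphi_D$ along rays, and extend by density of directions and continuity. You then factor the conditional characteristic function of $\tTwo Y_i$, divide, invert, and average over $(X_i,W_i)$. The regular design and Step~2 are fine.

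The gap is exactly where you locate it, and your proposed repair does not work. The factorization
\[
\mathbb{E}\bigl[e^{\ii v\mu_i'\tOne Y_i}\mid\lambda_i=\lambda,\det(X_i)=d_x\bigr]
=\varphi_D(v\lambda)\,\mathbb{E}\bigl[e^{\ii v d_x\mu_i'\beta_i}\mid\lambda_i=\lambda,\det(X_i)=d_x\bigr]
\]
is correct. But the tightness of $\mu_i'\beta_i$ needed to send the last factor to $1$ cannot be extracted from Assumption~\ref{support}(i), which only asserts continuity in $d_x$ of the conditional law of $d_x\mu_i'\beta_i+\lambda'D_i$. Nor can it be extracted from continuity of a joint density.

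Here is a counterexample. Take $(T,p,q)=(1,1,0)$ and $\mu_i=1$. Let $X_i,Z_i,D_i$ be independent $N(0,1)$ and set $\beta_i=Z_i/X_i$ (standard Cauchy). Then $Y_i=Z_i+D_i$ is independent of $X_i$, so all hypotheses hold:
\begin{itemize}
\item the continuity requirement holds trivially;
\item the ratio in Assumption~\ref{Fourier_inversion} is $e^{-u^2/(2x^2)}$;
\item the joint density $|x|\phi(x)\phi(bx)$ of $(X_i,\beta_i)$ is continuous.
\end{itemize}
Yet $\lim_{x\to0}\mathbb{E}[e^{\ii vY_i}\mid X_i=x]=e^{-v^2}\neq\varphi_D(v)$, because the conditional law of $\beta_i$ given $X_i=x$ escapes to infinity. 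Replacing $(Z_i,D_i)$ by independent $N(0,\tfrac12)$ and $N(0,\tfrac32)$ variables leaves the law of $(Y_i,X_i)$ unchanged but changes $f_\beta$. So this step cannot be closed from the stated hypotheses at all.

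If you worry that Assumption~\ref{support}(i) is degenerate when $d=1$, use the same construction $\beta_i=Z_i/\det(X_i)$ in $T=p=2$ with Gaussian $X_i$, $W_i=I_2$, $\mu_i=e_1$. This satisfies the assumption literally, and the limit is $\varphi_{Z_1}(v)\varphi_D(v\lambda)$.

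The paper's proof has the same hole: it simply declares $\exp(\ii v\det(X_i)\mu_i'\beta_i)=1$ on the stayer event and cites Assumption~\ref{support}(i) for the limit. What is missing is therefore an extra hypothesis rather than an idea. For instance, require that $\det(X_i)\mu_i'\beta_i\to0$ in probability conditional on $\{\lambda_i=\lambda,\det(X_i)=d_x\}$ as $d_x\to0$. A sufficient condition is
\[
\sup_{0<|d_x|<\epsilon}\mathbb{E}\bigl[|\mu_i'\beta_i|\mid\lambda_i=\lambda,\det(X_i)=d_x\bigr]<\infty,
\]
since $|e^{\ii t}-1|\le|t|$ then bounds the distance of your last factor from $1$ by $|v||d_x|$ times this supremum. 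With such a condition added, your argument is complete and more careful than the paper's.
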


\begin{proof} The proof can be found in Section \ref{appendix_proof_thm1}.
\end{proof}

\begin{remark}[Discrete stayer--mover design]\label{rem:discrete_stayers}
Assumption~\ref{support}(i) corresponds to a continuous irregular design in which
$\Pr(\det(X_i)=0)=0$. An alternative specification allows a mass point at the
singular set. Suppose $T=p$ and
\[
\Pr(\det(X_i)=0)>0, 
\qquad 
\Pr(\det(X_i)\neq 0)>0,
\]
and assume that $\det(X_i)$ is absolutely continuous on
$\{\det(X_i)\neq 0\}$. Suppose further that
\[
\rank(X_i)=p-1
\quad \text{on } \{\det(X_i)=0\}
\]
with positive probability, so that $\adj(X_i)$ is nonzero on the relevant
stayer set. Then the stayer set $\{\det(X_i)=0\}$ has positive probability,
and the first-step equation
\[
\tOne Y_i=\tOne W_iD_i
\]
holds on this event. Hence the characterization of $\varphi_D$
follows from conditioning on
\[
\{\mu_i'\tOne W_i=\lambda',\ \det(X_i)=0\},
\]
without the limiting argument used in the continuous design.

This modification shows that exact stayers can replace the limiting argument in the
irregular design.
\end{remark}

Define
\[
Z_i^* := \frac{\mu_i' \tOne Y_i}{\|\lambda_i\|},
\qquad
\lambda_i := \tOne W_i'\mu_i \in \mathbb R^d,
\qquad
S_i := \frac{\lambda_i}{\|\lambda_i\|}\in\mathbb S^{d-1},
\]
on the event $\{\lambda_i\neq 0\}$. On the event where the annihilated equation satisfies $\tOne Y_i=\tOne W_iD_i$, we have
\[
Z_i^*
=
\frac{\mu_i'\tOne W_iD_i}{\|\lambda_i\|}
=
\frac{\lambda_i'D_i}{\|\lambda_i\|}
=
S_i'D_i .
\]
In the regular design this equality holds for every observation, whereas in the
irregular design it holds on the stayer set $\{\det(X_i)=0\}$.

Since $S_i$ is a measurable function of $(X_i,W_i)$ and
$D_i$ is independent of $(X_i,W_i)$, it follows that $D_i$ and $S_i$
are independent. Hence, conditional on $S_i=s$, the distribution of $Z_i^*$
coincides with the distribution of the scalar projection $s'D_i$. If $D_i$
admits a density $f_D$, then the density of this scalar projection is
\[
\mathcal R f_D(s,u)
:=
\int_{\{d\in\mathbb R^d:s'd=u\}} f_D(d)\,d\sigma(d),
\]
where $d\sigma$ denotes the induced Lebesgue measure on the hyperplane
$\{d\in\mathbb R^d:s'd=u\}$. Thus, for observed directions $s$,
\[
f_{Z^*\mid S}(u\mid s)=\mathcal R f_D(s,u),
\]
with the corresponding conditional-on-stayers version in the irregular design.
Under the stronger directional support conditions stated in
Corollary~\ref{cor_fD}, the Radon transform is known for enough directions to
identify $f_D$; see, for example, \citet{HoderleinKMammen}.

\begin{cor}
\label{cor_fD}
Let the assumptions of Theorem~\ref{main_result} hold and suppose that $D_i$
admits a density $f_D$. Suppose further that $S_i$ has a density with respect to surface measure on
$\mathbb S^{d-1}$ that is bounded away from zero uniformly on
$\mathbb S^{d-1}$, unconditionally in the regular design and conditionally on
the stayer event in the irregular design. In the irregular continuous design,
conditioning on the stayer event is understood in the limiting sense described
in Assumption~\ref{support}(i). Then, for surface-a.e. $s\in\mathbb S^{d-1}$
and all $u\in\mathbb R$ at which the relevant conditional densities are well
defined,
\begin{equation}
\mathcal{R}f_D(s,u)
=
\begin{cases}
f_{Z^*\mid S}(u\mid s),
& \text{regular design}, \\[0.5em]
f_{Z^*\mid S,\,\det(X_i)=0}(u\mid s),
& \text{irregular design}.
\end{cases}
\label{eq:radon_fD_cases}
\end{equation}
Consequently, $f_D$ is identified from \eqref{eq:radon_fD_cases} by the
Fourier slice theorem and Fourier inversion.
\end{cor}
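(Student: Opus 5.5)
The argument runs in three steps. First we identify the conditional law of the scalar projection $Z_i^*$ given the direction $S_i$. Next we recognise it as the Radon transform of $f_D$. Finally we invert with the Fourier slice theorem.

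\textbf{Step 1: the conditional law of $Z_i^*$.} In the regular design, Assumption~\ref{support}(ii) gives $\tau_1(X_i)X_i=0$ almost surely. Hence $Z_i^*=S_i'D_i$ on $\{\lambda_i\neq 0\}$, and this event has probability one by Assumption~\ref{assm:A1_double}(i). Because $S_i$ is a measurable function of $(X_i,W_i)$, Assumption~\ref{assm:A1}(ii) makes $D_i$ independent of $S_i$. So for every $s$ in the support of $S_i$, the regular conditional law of $Z_i^*$ given $S_i=s$ equals the law of $s'D_i$.

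When $D_i$ has density $f_D$, the density of $s'D_i$ at $u$ is $\int_{\{s'\delta=u\}} f_D\,d\sigma_{s,u}=\mathcal R f_D(s,u)$. This follows from a change of variables with coarea factor one, since $\|s\|=1$. The change of variables holds for every $s$ and almost every $u$, and it gives \eqref{eq:radon_fD_cases} in the regular case.

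The density of $S_i$ is bounded away from zero with respect to surface measure. This guarantees that the conditional density $f_{Z^*\mid S}(\cdot\mid s)$ is defined and matches the Radon transform for surface-a.e.\ $s$, not only on a set of positive $P_S$-measure that might miss regions of the sphere.

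\textbf{Step 2: the irregular design (main obstacle).} Here $Z_i^*=S_i'D_i$ holds only on $\{\det(X_i)=0\}$, which has probability zero. For $\det(X_i)=d_x$, we instead have $\mu_i'\tOne Y_i=d_x\mu_i'\beta_i+\lambda_i'D_i$. I would proceed as in the proof of Theorem~\ref{main_result}.

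First, for bounded continuous $h$, use Assumption~\ref{support}(i) to define $\E[h(Z_i^*)\mid S_i=s,\det(X_i)=0]$ as the limit as $d_x\to0$. Conditioning on $\lambda_i=\lambda$ is equivalent to conditioning on $(\|\lambda_i\|,S_i)$, and rescaling by $\|\lambda\|$ is continuous, so the continuity assumption transfers to $Z_i^*$.

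Second, show that the limit equals $\E[h(s'D_i)]$. The perturbation $d_x\mu_i'\beta_i/\|\lambda_i\|$ vanishes in probability as $d_x\to0$, provided $\mu_i'\beta_i$ is conditionally tight near the singular set. The primitive sufficient condition stated after Assumption~\ref{support} (a continuous joint density) delivers this tightness. Independence of $D_i$ from $(X_i,W_i)$ then shows that $\lambda_i'D_i/\|\lambda_i\|$ given $S_i=s$ is distributed as $s'D_i$ for every $d_x$.

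Third, pass from weak convergence to densities. The limiting conditional distribution is that of $s'D_i$, which has density $\mathcal Rf_D(s,\cdot)$. This holds at all $u$ where the limiting conditional density exists, which is exactly the qualifier in the statement.

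I expect this step to be the delicate one, because it requires a uniform-in-$s$ version of Assumption~\ref{support}(i). That assumption is stated only for a dense set of $\lambda$. The uniform lower bound on the conditional density of $S_i$ is what I would use to extend the limit to surface-a.e.\ $s$.

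\textbf{Step 3: inversion.} Step 2 gives $\mathcal Rf_D(s,\cdot)$ for a.e.\ $s\in\mathbb S^{d-1}$. For each such $s$, take the one-dimensional Fourier transform in $u$:
\[
\int e^{\ii vu}\mathcal Rf_D(s,u)\,du=\varphi_D(vs).
\]
This is the Fourier slice theorem, and here it is simply the characteristic function of $s'D_i$. The resulting map is continuous in $s$, and a.e.\ $s$ is dense in the sphere, so $\varphi_D$ is determined on all of $\mathbb R^d$; this is consistent with Theorem~\ref{main_result}.

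Fourier inversion, valid because $f_D$ is a density and $\varphi_D$ determines the law uniquely, then recovers $f_D$ almost everywhere. If $\varphi_D$ is not integrable, inversion is understood in the distributional or L\'evy sense. Alternatively, one can use the Radon inversion formula of \citet{HoderleinKMammen}.
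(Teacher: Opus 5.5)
Your proposal is correct and follows the same route as the paper's proof. Because $S_i$ is a function of $(X_i,W_i)$, independence makes the conditional law of $Z_i^*$ given $S_i=s$ that of $s'D_i$, whose density is $\mathcal{R}f_D(s,\cdot)$; the Fourier slice theorem and Fourier inversion then recover $\varphi_D$ and $f_D$. Your version is more careful than the paper's in two places. For the irregular design the paper only invokes ``similar arguments'' on $\{\det(X_i)=0\}$, whereas you show that $\det(X_i)\mu_i'\beta_i/\|\lambda_i\|$ vanishes, which rests on the conditional tightness you flag. And where the paper asserts that $\mathcal{R}f_D(s,u)$ is identified for every $s$, you pass from surface-a.e.\ $s$ to all of $\mathbb{R}^d$ via continuity of $\varphi_D$.
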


\begin{proof}
The proof can be found in Section \ref{appendix_proof_cor}.
\end{proof}

\subsection{Examples}
\label{sec:transformations}

The examples in this section give concrete choices of the first- and
second-step transformations in irregular and regular designs. The examples also show that, in low-dimensional settings, a fixed $\mu_i$ (rather than a data-dependent one) is often convenient.

\subsubsection{Irregular design}

\begin{example}[Scalar irregular design, $T=p=1,\;q\geq 0$]
\label{irregular_T_1}

Consider equation \eqref{eq:outcome_stack_t} with
\[
  Y_i\in\mathbb{R},
  \qquad
  X_i\in\mathbb{R},
  \qquad
  W_i\in\mathbb{R}^{1\times(q+1)},
  \qquad
  D_i\in\mathbb{R}^{q+1}.
\]

In this case, the first-step transformation is trivial:
\[
  \tau_1(X_i)=1,
  \qquad
  \widetilde{Y}_i=Y_i,
  \qquad
  \widetilde{W}_i=W_i,
\]
and stayers correspond to units with $X_i=0$. Since $Y_i=W_iD_i$ on the stayer set $\{X_i=0\}$, it follows that for any $s\in\mathbb{R}$,
\begin{equation}
\label{eq:phiD_scalar_irregular}
\varphi_D(sW_i')
=
\mathbb{E}\!\left[
  e^{\ii sY_i}
  \,\middle|\,
  X_i=0,\;W_i
\right] \; \text{ a.s.}
\end{equation}
Thus each stayer realization $W_i=w$ identifies $\varphi_D$ along the
line $\{sw':s\in\mathbb{R}\}\subset\mathbb{R}^{q+1}$.

The second-step transformation is
\[
  \tau_2(X_i)=\frac{1}{X_i},
  \qquad
  \widetilde{\widetilde{Y}}_i=\frac{Y_i}{X_i},
  \qquad
  \widetilde{\widetilde{W}}_i=\frac{W_i}{X_i},
  \qquad
  \text{on }\{X_i\neq 0\}.
\]
Since $ \widetilde{\widetilde{Y}}_i
  =
  \beta_i+\tTwo W_iD_i$ and by conditional independence of $\beta_i$ and $D_i$,
\[
  \varphi_{\widetilde{\widetilde{Y}}\mid X,W}(u\mid X_i,W_i)
  =
  \varphi_{\beta\mid X,W}(u\mid X_i,W_i)\,
  \varphi_D\!\left(u\,\tTwo W_i'\right),
  \qquad u\in\mathbb{R}.
\]

Under the assumptions of Theorem~\ref{main_result},
\begin{equation}
\label{eq:fbeta_scalar_irregular_corrected}
f_\beta(b)
=
\frac{1}{2\pi}\,
\mathbb{E}\!\left[
  \int_{\mathbb{R}}
  e^{-\mathrm{i}ub}
  \frac{\varphi_{\widetilde{\widetilde{Y}}\mid X,W}(u\mid X_i,W_i)}
       {\varphi_D\!\left(u\,\tTwo W_i'\right)}
  \,du
\right],
\qquad b\in\mathbb{R},
\end{equation}
where the denominator is given by \eqref{eq:phiD_scalar_irregular}.

This example makes the geometric link between the two steps transparent. The
first step identifies $\varphi_D$ along the lines generated by stayer values
of $W_i$, while the second step requires only the values of $\varphi_D$ at
the mover-specific frequency vectors
\[
  \xi_i(u):=u\,\tTwo{W}_i'\in\mathbb{R}^{q+1}.
\]
Therefore, identification of $f_\beta$ requires that, for almost every mover
realization $(X_i,W_i)$ and every frequency $u$ used in the inversion, the
vector $\xi_i(u)$ lie on a line attained in the first step. To illustrate, suppose $q=1$, so $D_i=(D_{i1},D_{i2})'$. Each stayer
realization $W_i=(w_1,w_2)$ identifies $\varphi_D$ along the line
\[
  \{s(w_1,w_2)':s\in\mathbb{R}\}.
\]
If stayer values of $W_i$ lie only on the two coordinate directions
$(1,0)$ and $(0,1)$, then one identifies $\varphi_D(t,0)$ and
$\varphi_D(0,t)$, which determine the marginal distributions of $D_{i1}$
and $D_{i2}$ but not their joint distribution. A sufficient condition for
identification of the joint distribution is that the normalized stayer
directions $\frac{W_i'}{\|W_i\|}$ be sufficiently rich to span all directions in $\mathbb{R}^2$; for example,
it suffices that they have full support on $\mathbb{S}^1$. Under such a
condition, the union of these lines spans $\mathbb{R}^2$.

In the special case $q=0$, the vector $W_i$ reduces to the scalar $1$,
and $D_i$ is one-dimensional. Then
\[
  \varphi_D(s)
  =
  \mathbb{E}\!\left[
    e^{\ii sY_i}
    \,\middle|\,
    X_i=0
  \right],
\]
so the first step identifies the disturbance characteristic function on the
real line and no directional-richness condition is needed.

In terms of Assumption~\ref{assm:A1_double}, when $(T,p,q)=(1,1,0)$, the 
disturbance $D_i$ is scalar, and Assumption~\ref{assm:A1_double}(ii) reduces to
\[
  \Pr(\lambda_i\neq 0\mid X_i=0)=1.
\]
In this one-dimensional case, any nonzero value of $\lambda_i$ already
generates the line
\[
  \{v\lambda_i:v\in\mathbb{R}\}=\mathbb{R},
\]
so there is no separate directional-richness requirement. Thus, unlike the
multivariate cases discussed above, first-step identification does not rely
on cross-sectional variation in the direction of $W_i$, but only on the
nondegeneracy of the scalar loading on the stayer set.
\end{example}

The logic of the scalar irregular design extends to higher-dimensional
irregular designs, as illustrated by the next example.

\begin{example}[Irregular design $T=p=2,\;q\geq 0$]
\label{irregular_T_2}

Consider equation \eqref{eq:outcome_stack_t} with
\[
  Y_i=
  \begin{pmatrix}Y_{i1}\\Y_{i2}\end{pmatrix},
  \qquad
  X_i=
  \begin{pmatrix}X_{i11}&X_{i12}\\X_{i21}&X_{i22}\end{pmatrix},
  \qquad
  W_i\in\mathbb{R}^{2\times(q+2)},
  \qquad
  D_i\in\mathbb{R}^{q+2}.
\]

In the irregular design, the first-step transformation is
\[
  \tau_1(X_i)=\operatorname{adj}(X_i)
  =
  \begin{pmatrix}
    X_{i22} & -X_{i12}\\
    -X_{i21} & X_{i11}
  \end{pmatrix}.
\]
Thus
\[
  \widetilde{Y}_i:=\tau_1(X_i)Y_i,
  \qquad
  \widetilde{W}_i:=\tau_1(X_i)W_i.
\]
Premultiplying the outcome equation by $\operatorname{adj}(X_i)$ yields
\[
  \widetilde{Y}_i
  =
  \det(X_i)\beta_i+\widetilde{W}_iD_i.
\]
Stayers correspond to singular realizations of $X_i$, that is,
$\det(X_i)=0$. On the stayer set,
\[
  \widetilde{Y}_i=\widetilde{W}_iD_i.
\]

Since $\operatorname{adj}(X_i)$ has rank at most one on
$\{\det(X_i)=0\}$, the transformed system contains at most one independent
scalar restriction. Fix a measurable projection vector
$\mu_i\in\mathbb{R}^2$ such that
\[
  \mu_i'\operatorname{adj}(X_i)\neq 0
\]
with positive probability conditional on $\{\det(X_i)=0\}$. For
concreteness, take $\mu_i=e_1$ and suppose that
$e_1'\operatorname{adj}(X_i)\neq 0$ conditional on $\det(X_i)=0$. Define
\[
  Y_i^*:=\mu_i'\widetilde{Y}_i,
  \qquad
  \lambda_i:=\widetilde{W}_i'\mu_i\in\mathbb{R}^{q+2}.
\]
Then, on the stayer set,
\[
  Y_i^*=\lambda_i'D_i.
\]
Using the independence of $D_i$ from $(X_i,W_i)$, and hence from
$\lambda_i$, it follows that, for any $v\in\mathbb{R}$ and for a.e.
$\lambda$ in the conditional support of $\lambda_i$ given $\det(X_i)=0$,
\begin{equation}
\label{phiD_irregular_T2p2}
\varphi_D(v\lambda)
=
\mathbb{E}\!\left[
  e^{\ii vY_i^*}
  \,\middle|\,
  \lambda_i=\lambda,\;
  \det(X_i)=0
\right].
\end{equation}
Thus the first step identifies $\varphi_D$ along lines generated by the
stayer projection directions $\lambda_i$.

With the choice $\mu_i=e_1$, the vector
$\lambda_i=\widetilde{W}_i'e_1$ is the transpose of the first row of
$\widetilde{W}_i$. In the benchmark case $q=0$ and $W_i=I_2$, this reduces to
\[
  \lambda_i
  =
  \operatorname{adj}(X_i)'e_1
  =
  (X_{i22},-X_{i12})',
\]
which is a $90$-degree rotation, up to sign, of the second column
$(X_{i12},X_{i22})'$ of $X_i$. Since rotation is a bijection on
$\mathbb{R}^2\setminus\{0\}$, requiring $\lambda_i$ to have full directional
support conditional on $\det(X_i)=0$ is equivalent in this benchmark case to
requiring that the second column of $X_i$ have full directional support on the
same event. Intuitively, among units with $\det(X_i)=0$, the two columns of
$X_i$ are collinear, so each stayer regressor matrix is characterized by a
single direction in $\mathbb{R}^2$. The support condition requires this
direction to vary sufficiently across individuals so that all directions are
represented. A sufficient primitive condition in the benchmark case is that
the conditional distribution of $(X_{i12},X_{i22})'$ given $\det(X_i)=0$
have support $\mathbb{R}^2\setminus\{0\}$, or equivalently that its normalized
direction have support $\mathbb{S}^1$.

For movers with $\det(X_i)\neq 0$, the second-step transformation is
\[
  \tau_2(X_i)=X_i^{-1},
  \qquad
  \widetilde{\widetilde{Y}}_i
  :=
  \tau_2(X_i)Y_i
  =
  X_i^{-1}Y_i,
  \qquad
  \widetilde{\widetilde{W}}_i
  :=
  \tau_2(X_i)W_i
  =
  X_i^{-1}W_i.
\]
Thus
\[
  \widetilde{\widetilde{Y}}_i
  =
  \beta_i+\widetilde{\widetilde{W}}_iD_i.
\]
By the maintained independence of $D_i$ from $(\beta_i,X_i,W_i)$,
\begin{equation}
\label{eq:cf_factor}
\varphi_{\widetilde{\widetilde{Y}}\mid X,W}(u\mid X_i,W_i)
=
\varphi_{\beta\mid X,W}(u\mid X_i,W_i)\,
\varphi_D(\widetilde{\widetilde{W}}_i'u),
\qquad u\in\mathbb{R}^2.
\end{equation}
Thus the second step requires $\varphi_D$ only at the mover-specific
frequency vectors
\[
  \xi_i(u)
  :=
  \widetilde{\widetilde{W}}_i'u
  =
  W_i'X_i^{-1\prime}u
  \in\mathbb{R}^{q+2}.
\]
Under the assumptions of Theorem~\ref{main_result},
\begin{equation}
\label{fbeta_T2_irregular}
f_\beta(b)
=
\frac{1}{(2\pi)^2}
\mathbb{E}\!\left[
  \int_{\mathbb{R}^2}
  e^{-\ii u'b}
  \frac{\varphi_{\widetilde{\widetilde{Y}}\mid X,W}(u\mid X_i,W_i)}
       {\varphi_D(\widetilde{\widetilde{W}}_i'u)}
  \,du
\right],
\qquad b\in\mathbb{R}^2.
\end{equation}

The compatibility requirement between the first and second stages is now
transparent. The first step identifies $\varphi_D$ only along lines of the
form
\[
  \{v\lambda:v\in\mathbb{R}\},
\]
where $\lambda$ is generated by stayer projections. The second step requires
the values of the same characteristic function at the mover-specific points
\[
  \xi_i(u)=W_i'X_i^{-1\prime}u,
  \qquad u\in\mathbb{R}^2.
\]
Therefore, identification of $f_\beta$ requires that, for almost every mover
realization and every frequency $u$ entering the inversion, the vector
$\xi_i(u)$ lie on a line attainable from the family of lines recovered in the
first step. Equivalently, the collection of stayer projection directions must
be rich enough that the second-step frequency set lies in the region where
$\varphi_D$ is identified.

A strong sufficient condition is that, conditional on $\det(X_i)=0$, the
random vector
\[
  \lambda_i
  =
  W_i'\operatorname{adj}(X_i)'\mu_i
\]
has support $\mathbb{R}^{q+2}\setminus\{0\}$. Equivalently, the normalized
stayer projection directions $\frac{\lambda_i}{\|\lambda_i\|}$ have full support on $\mathbb{S}^{q+1}$. Under this condition, the union of
the first-step frequency lines spans $\mathbb{R}^{q+2}$.

When $q=0$, the same geometric argument applies with
$D_i=(D_{i1},D_{i2})'\in\mathbb{R}^2$. The first step identifies
$\varphi_D$ along one-dimensional lines through the origin, while the second
step requires it at the mover-specific frequencies
$\xi_i(u)=X_i^{-1\prime}u$.
\end{example}

\subsubsection{Regular design.}

When $T>p$, the regressor matrix $X_i$ has full column rank with probability
one under Assumption~\ref{support}(ii). In this case the annihilator
$\tau_1(X_i)$ and the left inverse $\tau_2(X_i)$ can be constructed for every
observation.

A convenient choice for $\tau_1(X_i)$ is the orthogonal projector onto the
null space of $X_i'$,
\begin{equation}
\label{ortho_proj}
\tau_1(X_i)
=
I_T - X_i(X_i'X_i)^{-1}X_i',
\end{equation}
which satisfies $\tau_1(X_i)X_i=0$ identically.
Thus the transformed equation $\widetilde{Y}_i=\widetilde{W}_iD_i$ holds for
all units, so the first step can exploit the full sample.
Any measurable $\tau_2(X_i)$ satisfying $\tau_2(X_i)X_i=I_p$ suffices for
the second-step decomposition in \eqref{tau2_YW}, since $X_i$ has full column
rank and a left inverse exists for every observation.

In contrast to the irregular design, where first-step identification relies on
a restricted subset of observations, in the regular design all observations
contribute to the first step. Accordingly, the relevant support requirements
are unconditional rather than conditional on a stayer-type event.

\begin{example}[Regular design $T=2,\;p=1,\;q=0$]
\label{ex_regular_T2p1}

Although the orthogonal projector \eqref{ortho_proj} is a natural choice, the
regular design admits many valid annihilators.
To illustrate, consider \eqref{eq:outcome_stack_t} with
\[
  Y_i=
  \begin{pmatrix}Y_{i1}\\Y_{i2}\end{pmatrix},
  \qquad
  X_i=
  \begin{pmatrix}X_{i1}\\X_{i2}\end{pmatrix},
  \qquad
  D_i=
  \begin{pmatrix}D_{i1}\\D_{i2}\end{pmatrix}
  \in\mathbb{R}^2,
  \qquad
  W_i=I_2.
\]

An admissible annihilator is
\[
  \tau_1(X_i)
  =
  \begin{pmatrix}X_{i2}&-X_{i1}\\0&0\end{pmatrix},
  \qquad
  \tau_1(X_i)X_i=0.
\]
The transformed outcome is therefore
\[
  \widetilde{Y}_i
  =
  \tau_1(X_i)Y_i
  =
  \begin{pmatrix}X_{i2}D_{i1}-X_{i1}D_{i2}\\0\end{pmatrix}.
\]
Let $\mu_i=e_1$ and define
\[
  Y_i^* = e_1'\widetilde{Y}_i = X_{i2}D_{i1}-X_{i1}D_{i2},
  \qquad
  \lambda_i = \widetilde{W}_i'e_1 =
  \begin{pmatrix}X_{i2}\\-X_{i1}\end{pmatrix}.
\]
To interpret the support condition in Assumption~\ref{assm:A1_double}(i),
note that $\lambda_i=(X_{i2},-X_{i1})'$ is a $90$-degree rotation of
$X_i=(X_{i1},X_{i2})'$. Hence requiring $\lambda_i$ to have full support on
$\mathbb{R}^2\setminus\{0\}$ is equivalent to requiring that $X_i$ have full
support on $\mathbb{R}^2\setminus\{0\}$. This holds, for example, if $X_i$ has
a continuous distribution with full support on $\mathbb{R}^2$. Intuitively,
each observation contributes one first-step projection direction
$\lambda_i$, which is orthogonal to $X_i$. As the direction of $X_i$ varies
across the population, the normalized directions
$\lambda_i/\|\lambda_i\|$ rotate through the full circle and cover all of
$\mathbb{S}^1$.

Since $Y_i^*=\lambda_i'D_i$, the first step identifies the disturbance characteristic function only
along lines generated by $\lambda_i$:
\begin{equation}
\label{phiD_projection_example}
\varphi_D(v\lambda)
=
\mathbb{E}\!\left[
  \exp(\ii vY_i^*)
  \,\middle|\,
  \lambda_i=\lambda
\right],
\qquad v\in\mathbb{R},
\end{equation}
for any $\lambda$ in the support of $\lambda_i$.

To identify $f_\beta$, use the left inverse
\[
  \tau_2(X_i)=\frac{X_i'}{X_i'X_i},
\]
which is well defined under Assumption~\ref{support}(ii), since
$X_i'X_i=X_{i1}^2+X_{i2}^2>0$ a.s.
Then
\[
  \widetilde{\widetilde{Y}}_i
  =
  \tau_2(X_i)Y_i
  =
  \beta_i+\frac{X_i'D_i}{X_i'X_i}.
\]
Accordingly, for any $s\in\mathbb{R}$,
\[
  \varphi_{\widetilde{\widetilde{Y}}\mid X}(s\mid X_i)
  =
  \varphi_{\beta\mid X}(s\mid X_i)\,
  \varphi_D\!\left(s\,\frac{X_i}{X_i'X_i}\right).
\]
Since $q=0$, there is no additional $W$-term, and the denominator in
\eqref{fbeta_main} is simply the disturbance characteristic function evaluated
at the mover-specific frequency vector
\begin{equation}
\label{eq:xi_regular}
  \xi_i(s) := s\,\frac{X_i}{X_i'X_i} \in \mathbb{R}^2.
\end{equation}
Under the assumptions of Theorem~\ref{main_result},
\begin{equation}
\label{eq:regular_example_density}
f_\beta(b)
=
\frac{1}{2\pi}
\mathbb{E}\!\left[
  \int_{\mathbb{R}}
  e^{-\ii sb}
  \frac{\varphi_{\widetilde{\widetilde{Y}}\mid X}(s\mid X_i)}
       {\varphi_D\!\left(s\,\frac{X_i}{X_i'X_i}\right)}
  \,ds
\right],
\qquad b\in\mathbb{R}.
\end{equation}

To see how the two steps are linked, note that each required frequency vector
$\xi_i(s)$ lies on the line generated by $X_i$; see \eqref{eq:xi_regular}.
Writing $\xi_i(s)$ in polar form,
\[
  \xi_i(s) = r_i(s)\,d_i(s),
  \qquad
  r_i(s) = \frac{|s|}{\|X_i\|},
  \qquad
  d_i(s) = \operatorname{sgn}(s)\frac{X_i}{\|X_i\|},
\]
shows that as $s$ ranges over $\mathbb{R}$, $d_i(s)$ takes both values
$\pm X_i/\|X_i\|$, so the second step requires values of $\varphi_D$ along
the full line generated by $X_i$ (both directions).
The first step identifies $\varphi_D$ along lines generated by $\lambda_i =
(X_{i2},-X_{i1})'$, which are orthogonal to $X_i$ (since
$\lambda_i'X_i = X_{i1}X_{i2} - X_{i1}X_{i2} = 0$).
The regular design therefore faces the same compatibility issue as the
irregular design: identification of $f_\beta$ requires the collection of
first-step lines to be rich enough, as $X_i$ varies in the sample, to cover
the lines required in the second step.
A strong sufficient condition is that $X_i$ have full support on
$\mathbb{R}^2\setminus\{0\}$, which holds in particular if $X_i$ has a
continuous distribution with full support on $\mathbb{R}^2$.
This guarantees that the normalized direction $X_i/\|X_i\|$ has full support
on $\mathbb{S}^1$, and that the orthogonal directions
$\lambda_i/\|\lambda_i\|$ also rotate through the full circle, so
every $\xi_i(s)$ is attainable.

This example also illustrates the contrast with the irregular design
$T=2=p,\;q=0$.
In the irregular design, the first-step characterization of $\varphi_D$ uses
only singular realizations of the regressor matrix.
If those realizations generate only a small set of directions, then $\varphi_D$
is identified only on a sparse collection of lines and $F_D$ need not be point
identified.
In the regular design, by contrast, each observation contributes a first-step
line orthogonal to $X_i$, while the second-step denominator requires only the
values of $\varphi_D$ at the specific frequency vectors $\xi_i(s)$, which lie
on the line generated by $X_i$.
As $X_i$ rotates across the sample, these two families of lines rotate with it,
and whether they cover the relevant region of frequency space depends on the
support of $X_i$.
\end{example}

In higher-dimensional regular designs, the same logic applies, but the
support condition involves the joint variation of $(X_i$, $W_i)$, and the
measurable choice of~$\mu_i$. When $q > 0$, variation in~$X_i$ alone
generally does not suffice. Example~\ref{ex_regular_mu_variation}
illustrates this for $(T,p,q)=(2,1,1)$.

\begin{example}[Regular design requiring variation in $\mu_i$]
\label{ex_regular_mu_variation}

Consider a regular design with $(T,p,q)=(2,1,1)$, so that
\[
  X_i=
  \begin{pmatrix}
    X_{i1}\\X_{i2}
  \end{pmatrix},
  \qquad
  W_i=
  \begin{pmatrix}
    W_{i1} & 1 & 0\\
    W_{i2} & 0 & 1
  \end{pmatrix},
  \qquad
  D_i\in\mathbb{R}^3.
\]
Let $\tau_1(X_i)$ be the orthogonal projector
\[
  \tau_1(X_i)
  =
  I_2-X_i(X_i'X_i)^{-1}X_i'.
\]

For each $i$, $\tOne W_i=\tau_1(X_i)W_i$ has rank one, so the row space of
$\tOne W_i$ is one-dimensional, but its direction varies with $X_i$.
As $(X_i,W_i)$ vary across the population, these row spaces generate a
rich collection of directions in $\mathbb{R}^3$.

Now fix $\mu_i\equiv e_1$. Then
\[
  \lambda_i=\widetilde W_i' e_1 = \frac{1}{X_i'X_i}
  \begin{pmatrix}
    W_{i1}X_{i2}^2
    - W_{i2}X_{i1}X_{i2}
    \\[0.6em]
    X_{i2}^2
    \\[0.6em]
    -X_{i1}X_{i2}
  \end{pmatrix}.
\]
Notice that the second coordinate satisfies
\[
  \lambda_{i2}=\frac{X_{i2}^2}{X_i'X_i}\geq 0,
\]
so $\lambda_i$ is restricted to the closed half-space $\{\lambda\in\mathbb{R}^3:\lambda_2\geq 0\}$. Hence
\[
  \operatorname{supp}(S_i)\subsetneq\mathbb{S}^2,
\]
and the support condition in Assumption~\ref{assm:A1_double} fails even though the row spaces of $\widetilde W_i$ vary
with $(X_i,W_i)$.

Now instead choose $\mu_i=(X_{i2},-X_{i1})^\prime$. Then
\[
  \lambda_i=\widetilde W_i'\mu_i
  =
  \begin{pmatrix}
    X_{i2}W_{i1}-X_{i1}W_{i2}\\
    X_{i2}\\
    -X_{i1}
  \end{pmatrix}.
\]
Therefore, if $X_i$ has full support on $\mathbb{R}^2\setminus\{0\}$ and,
conditional on $X_i=x$, the pair $(W_{i1},W_{i2})$ has full support on
$\mathbb{R}^2$, then $\lambda_i$ has full support on
$\mathbb{R}^3\setminus\{0\}$. In this case
$\operatorname{supp}(S_i)=\mathbb{S}^2$, so
Assumption~\ref{assm:A1_double} holds.

This example shows that, in the regular design, a fixed choice of $\mu_i$
may fail to exploit the available directional variation, whereas a measurable
choice of $\mu_i=\mu(X_i,W_i)$ can recover full support.
\end{example}

\begin{remark}
\label{rem:orthogonality}
The geometric relationship between the first-step directions and the second-step frequencies is one of orthogonal complementarity. Identification succeeds not by aligning these directions, but by ensuring the first step covers enough of the sphere to pin down $\varphi_D$.
 
First, in the regular design $(T,p)=(2,1)$, the choice of
annihilator has no effect on the first-step directions. The condition
$\tau_1(X_i)X_i=0$ forces the rows of $\tau_1(X_i)$ to lie in the
orthogonal complement of $X_i$ in $\mathbb{R}^2$.  Since the orthogonal
complement of a nonzero vector in $\mathbb{R}^2$ is one-dimensional,
spanned by $(X_{i2},-X_{i1})'$, every annihilator
$\tau_1(X_i)$, including the orthogonal projector~\eqref{ortho_proj} and
the adjugate-based choice in Example~\ref{ex_regular_T2p1}, generates the
same first-step direction $(X_{i2},-X_{i1})'$, up to a scalar factor that
cancels upon normalization.  In particular, the normalized direction
$S_i = \lambda_i/\|\lambda_i\|$, the characteristic function $\varphi_D$, and consequently the density $f_\beta$
are all invariant to the choice of~$\tau_1(X_i)$.  
This
invariance is specific to $T-p=1$; when $T-p>1$, the null space of $X_i'$
is multidimensional, and different annihilators may recover different
first-step directions.

Second, identification
may fail if the projection vector $\mu_i$ is fixed in a way that
does not fully exploit the row space of
$\tOne W_i=\tau_1(X_i)W_i$.  A fixed choice such as $\mu_i= e_1$
selects only one linear combination of the rows of $\tOne W_i$, so the
resulting vectors $\lambda_i=\tOne W_i'\mu_i$ may lie in a strict subset
of $\mathbb{R}^d$, even when the row spaces of $\tOne W_i$ are themselves
rich.  Example~\ref{ex_regular_mu_variation} illustrates this mechanism in
a regular design with $(T,p,q)=(2,1,1)$.

Third, the orthogonality between first-step and second-step directions
observed in Example~\ref{ex_regular_T2p1} is not a low-dimensional
artifact but a structural feature of the standard annihilator--left-inverse
pair when $q = 0$.  When both $\tau_1(X_i)$ and $\tau_2(X_i)$ are chosen
as in Example~\ref{ex_regular_T2p1}, the first-step directions
$\lambda_i = \tau_1(X_i)'\mu_i$ lie in $\textrm{null}(X_i')$, and the second-step
frequency vectors $\xi_i(u) = X_i(X_i'X_i)^{-1}u$ lie in
$\mathrm{col}(X_i)$; since these subspaces are orthogonal complements of
one another in $\mathbb{R}^T$, every first-step direction is orthogonal
to every second-step direction, regardless of the dimension $T - p$ of the
null space.  Enlarging the null space creates additional independent
first-step directions, but they all remain orthogonal to
$\mathrm{col}(X_i)$.

Two mechanisms can break this orthogonality.  First, when $q > 0$, the
matrix $W_i \neq I_T$ maps $\textrm{null}(X_i')$ and $\mathrm{col}(X_i)$ into
$\mathbb{R}^d$ in a way that need not preserve their orthogonality: the
first-step directions
$\lambda_i = W_i'\tau_1(X_i)'\mu_i$ and the second-step frequency vectors
$\xi_i(u) = W_i'\tau_2(X_i)'u$ can have nonzero inner product in
$\mathbb{R}^d$ even though their pre-images are orthogonal in
$\mathbb{R}^T$.  Example~\ref{ex_regular_mu_variation} illustrates this
with $(T,p,q) = (2,1,1)$.  Second, even with $q = 0$, one could use a
non-standard left inverse $\tau_2(X_i)$ whose transpose has image not
contained in $\mathrm{col}(X_i)$; any $\tau_2(X_i)$ satisfying
$\tau_2(X_i)X_i = I_p$ is admissible, and non-Moore--Penrose choices can
introduce a $\textrm{null}(X_i')$ component into the second-step directions.

Thus, the orthogonality visible in
Example~\ref{ex_regular_T2p1} is a consequence of combining the standard
annihilator--left-inverse pair with $q = 0$.  The support loss caused by fixing $\mu_i$ is a
separate issue that can arise more generally, as
Example~\ref{ex_regular_mu_variation} shows.
\end{remark}

\section{Estimation of \texorpdfstring{$f_\beta$}{f\_beta}}
\label{sec:estimation}

The parameter of interest is $f_\beta$, with characteristic function
$\varphi_\beta(u)$, $u\in\R$. The results in
Section~\ref{sec:densities} imply that there exists an identified
function $m_0:\R\to\mathbb{C}$ such that
\begin{equation}
\label{moment_restriction}
m_0(u)=\varphi_\beta(u), \qquad u\in\R.
\end{equation}
Given \eqref{moment_restriction}, define the weighted population
criterion
\begin{equation}
Q(\varphi)
=
\int_{\R}
    \big|\varphi(u)-m_0(u)\big|^2
\,d\nu(u),
\label{eq:pop_criterion_common}
\end{equation}
where $d\nu(u)=\nu_0(u)\,du$ is a finite weighting measure with
$\nu_0$ having full support on $\R$, so that $Q(\varphi)=0$ if and
only if $\varphi(u)=\varphi_\beta(u)$ $\nu$-almost everywhere.%
\footnote{To see this from \eqref{fbeta_main}: the ratio of the two
conditional characteristic functions appearing in that expression
identifies the conditional characteristic function
$\varphi_{\beta|X,W}$, and $m_0(\cdot)$ is obtained by integrating
this ratio with respect to the distribution of $(X,W)$.}

\begin{remark}[Choice of weighting measure]\label{rem:parseval}
When $\nu$ is Lebesgue measure, Parseval's theorem gives
\begin{equation}
Q(\varphi)
=
\int_{\R}\big|\varphi(u)-\varphi_\beta(u)\big|^2\,du
= 2\pi
\int_{\R}\big|f(b)-f_\beta(b)\big|^2\,db,
\label{eq:integrated_squared_error}
\end{equation}
where $f(\cdot)$ is the density corresponding to the characteristic function $\varphi(\cdot)$. Consequently, the frequency-domain criterion \eqref{eq:pop_criterion_common} targets the integrated squared error of the density \eqref{eq:integrated_squared_error}.
In practice, however, a non-uniform weighting measure $\nu$ may be preferred. For example, the first-stage estimator of ${m}_0(u)$ can become increasingly noisy at high frequencies, especially in the irregular design, where the denominator $\varphi_D(u/X_i)$ decays as $|u|$ grows.  A non-uniform weighting measure $\nu_0$ that places less mass on high frequencies (such as a normal or Student-$t$ density) regularizes the criterion by downweighting the region where estimation noise dominates, at the cost of no longer targeting $L^2$ density error exactly.  The choice of $\nu_0$ thus trades off statistical stability against fidelity to the $L^2$ loss: heavier tails in $\nu_0$ preserve more high-frequency information (useful for recovering sharp features such as multimodality), while lighter tails provide more effective regularization.
\end{remark}

Estimation proceeds in two stages. The first stage, which is
design-specific, constructs an estimator $\widehat{m}_N(u)$ of
$m_0(u)$. The second stage, which is common across designs, recovers
$f_\beta$ by minimizing the sample analog of \eqref{eq:pop_criterion_common}
over a finite-dimensional sieve. We describe the common second stage
first, then specialize the first stage to the scalar irregular design
$(T,p,q)=(1,1,0)$ and the regular design $(T,p,q)=(2,1,0)$. These are the configurations that arise in our application.

The identification results in Section~\ref{sec:identification} apply to
both regular and irregular designs in arbitrary dimensions. The estimators developed in this section specialize to the scalar irregular design and a low-dimensional regular design. The first-step estimation of $\varphi_D$ and
second-step sieve minimum distance extend to higher-dimensional
designs by replacing the scalar kernel regressions with their
multivariate counterparts and the directional smoothing on $\mathbb{S}^1$
with smoothing on $\mathbb{S}^{d-1}$, but the additional tuning
complexity and the curse of dimensionality in the first stage make a
general-purpose implementation less practical. We therefore focus on the
cases with empirical relevance.

\subsection{Sieve minimum distance estimator}
\label{subsec:common_md}

Let $\{q_s:s=0,\ldots,S-1\}\subset L^1(\R)$ be a vector of basis
functions, $q^S(b)=(q_0(b),\ldots,q_{S-1}(b))'$, and for
$\pi\in\R^S$ set $f_S(b;\pi)=q^S(b)'\pi$. The characteristic
function of $f_S(\cdot\,;\pi)$ is then
\begin{equation}
\varphi_S(u;\pi)
=
\int_{\R} e^{\ii ub} f_S(b;\pi)\,db
=
z^S(u)'\pi,
\qquad
z^S(u)\equiv (\mathcal{F} q^S)(u).
\label{eq:sieve_cf_common}
\end{equation}
Enforcing the unit-mass restriction $\varphi_S(0;\pi)=1$ defines the
parameter space
\begin{equation}
\Pi_S
=
\{\pi\in\R^S:\ z^S(0)'\pi=1\}.
\label{eq:pi_constraint_common}
\end{equation}
The sieve minimum distance estimator minimizes the sample analog
of \eqref{eq:pop_criterion_common} over $\Pi_S$:
\begin{equation}
\widehat{\pi}
\;\in\;
\arg\min_{\pi \in \Pi_S}
\widehat{Q}_N(\pi),
\qquad
\widehat{Q}_N(\pi)
=
\int
  \bigl|\varphi_S(u;\pi) - \widehat{m}_N(u)\bigr|^2
\,d\nu(u),
\label{eq:sample_criterion_common}
\end{equation}
and the estimator of the density function is given by:
\begin{equation}
\widehat{f}_\beta(b) = q^S(b)'\widehat{\pi}.
\label{eq:fhat_def_common}
\end{equation}

\paragraph{Hermite sieve.}
For implementation we take $\{q_s\}$ to be the orthonormal Hermite
functions on $L^2(\R)$:
\[
q_s(v)=c_s^{-1/2} e^{-v^2/2}H_s(v),
\qquad
c_s=2^s s!\sqrt{\pi},
\]
where $H_s$ denotes the physicists' Hermite polynomial of degree $s$.
These functions are eigenfunctions of the Fourier transform $(\mathcal{F}q_s)(u)=\sqrt{2\pi}\,\ii^s q_s(u)$,
so that
\begin{equation}
\varphi_S(u;\pi)
=
\sqrt{2\pi}
\sum_{s=0}^{S-1}\ii^s\pi_s q_s(u).
\label{eq:hermite_cf_common}
\end{equation}
Since $f_\beta$ is real-valued and each $q_s$ is real, we restrict $\pi\in\Pi_S$. Under this restriction, the criterion
$\widehat{Q}_N(\pi)$ is quadratic in $\pi$ and admits a closed-form
constrained minimizer given by \eqref{eq:final_estimator_common} below.

\paragraph{Closed-form solution.}
Approximating the integral in \eqref{eq:sample_criterion_common} by a
quadrature rule $\int g(u)\,d\nu(u)\approx\sum_{\ell=1}^L w_\ell
g(u_\ell)$, define
\begin{equation}
z_\ell = z^S(u_\ell),
\qquad
\widehat{\Omega}
=
\sum_{\ell=1}^L
w_\ell\,\mathrm{Re}\!\bigl(z_\ell \bar{z}_\ell'\bigr),
\qquad
\widehat{V}
=
\mathrm{Re}\!\left(
\sum_{\ell=1}^L
w_\ell\,\overline{\widehat{m}_N(u_\ell)}\, z_\ell
\right).
\label{eq:OmegaV_def}
\end{equation}
Expanding the squared modulus in \eqref{eq:sample_criterion_common}
gives
\[
\widehat{Q}_N(\pi)
=
\pi'\widehat{\Omega}\,\pi
-
2\,\widehat{V}'\pi
+
\sum_{\ell=1}^L w_\ell\,\bigl|\widehat{m}_N(u_\ell)\bigr|^2,
\]
with unconstrained first-order condition $\widehat{\Omega}\pi=\widehat{V}$.
Setting $A=z^S(0)\in\R^S$, and assuming $\widehat{\Omega}$ is
invertible,%
\footnote{A sufficient condition is that the quadrature weights
$w_\ell$ are positive and the $L\times S$ matrix with rows
$z^S(u_\ell)'$ has full column rank $S$. By construction
$\widehat{\Omega}$ is real symmetric and positive semidefinite; under
the stated rank condition it is positive definite, and hence
invertible.}
the constrained minimizer subject to $A'\pi=1$ is
\begin{equation}
\widehat{\pi}
=
\widehat{\Omega}^{-1}
\!\left(
  \widehat{V}
  +
  \frac{1 - A'\widehat{\Omega}^{-1}\widehat{V}}
       {A'\widehat{\Omega}^{-1}A}
  \,A
\right),
\label{eq:final_estimator_common}
\end{equation}
which satisfies $A'\widehat{\pi}=1$ by construction. The sieve
estimator of $f_\beta$ is \eqref{eq:fhat_def_common} with
$\widehat{\pi}$ given by \eqref{eq:final_estimator_common}.

\begin{remark}
    The identification result delivers the characteristic function of $\beta_i$ as a ratio of observable characteristic functions, suggesting a direct plug-in estimator. However, such an approach would compound two sources of ill-posedness: division by $\varphi_D$, whose magnitude decays at high frequencies, and subsequent Fourier inversion to recover the density $f_\beta$ from the estimated characteristic function. The sieve minimum distance estimator avoids this compounding by replacing the inverse Fourier transform with a projection problem. Specifically, because Hermite functions are eigenfunctions of the Fourier transform, the mapping from sieve coefficients $\pi$ to the candidate characteristic function $\varphi_S(u,\pi)$ is linear and bounded. Minimizing the integrated squared distance between $\varphi_S(u,\pi)$ and the estimated target $\widehat m_N(u)$ is therefore a well-posed quadratic program admitting a closed-form solution. This construction isolates the ill-posedness to a single step, i.e. the estimation of the deconvolution ratio.
\end{remark}

The remainder of this section constructs the design-specific
first-stage estimator $\widehat{m}_N(u)$.

\subsection{Scalar irregular design: \texorpdfstring{$(T,p,q)=(1,1,0)$}{(T,p,q)=(1,1,0)}}
\label{subsec:estimation_irregular_scalar}

In the scalar irregular design,
\[
Y_i=X_i\beta_i+D_i,
\]
where $D_i$ is scalar. Let $\tau_x>0$ be a mover--stayer threshold and
define
\[
\mathcal{S}_N(\tau_x)=\{i:\ |X_i|<\tau_x\},
\qquad
\mathcal{M}_N(\tau_x)=\{i:\ |X_i|\ge \tau_x\}.
\]
Observations in $\mathcal{S}_N(\tau_x)$ are stayers, used to
estimate the disturbance characteristic function $\varphi_D$, while
observations in $\mathcal{M}_N(\tau_x)$ are movers, used in the
deconvolution step.

For movers, the second-step transformation is
\[
\tTwo Y_i=\frac{Y_i}{X_i}
=
\beta_i+\frac{D_i}{X_i},
\qquad i\in\mathcal{M}_N(\tau_x).
\]

We estimate $\varphi_{\tTwo Y\mid X}(u\mid X_i)=\E[e^{\ii u\tTwo Y_i}\mid X_i]$
by kernel regression on the mover sample:
\begin{equation}
\widehat\varphi_{\tTwo Y\mid X}(u\mid X_i)
=
\sum_{k\in\mathcal{M}_N(\tau_x)}
\omega_{k,i}^{(x)}
\exp\!\left\{\ii u\frac{Y_k}{X_k}\right\},
\label{eq:numerator_scalar_irregular}
\end{equation}
where
\[
\omega_{k,i}^{(x)}
=
\frac{K_{h_x}(X_k-X_i)}
     {\sum_{s\in\mathcal{M}_N(\tau_x)}K_{h_x}(X_s-X_i)}.
\]
We estimate $\varphi_D$ by local smoothing at $X_i=0$, where
$Y_i=D_i$:%
\footnote{The sample average in \eqref{eq:mhat_scalar_irregular} is
taken over movers only, since $\tTwo Y_i=Y_i/X_i$ is defined only for
$i\in\mathcal{M}_N(\tau_x)$; the exclusion of stayers is
asymptotically negligible as $\tau_x\to 0$.}
\begin{equation}
\widehat\varphi_D(v)
=
\sum_{j\in\mathcal{S}_N(\tau_x)}
\omega_j^{(0)}
e^{\ii v Y_j},
\qquad
\omega_j^{(0)}
=
\frac{K_{h_0}(X_j)}
     {\sum_{s\in\mathcal{S}_N(\tau_x)}K_{h_0}(X_s)}.
\label{eq:denominator_scalar_irregular}
\end{equation}
For each mover $i$, the trimmed conditional ratio is
\begin{equation}
\widehat R_N(u\mid X_i)
=
\frac{
\widehat\varphi_{\tTwo Y\mid X}(u\mid X_i)
}{
\widehat\varphi_D(u/X_i)
}
\mathbf{1}\!\left\{
\big|\widehat\varphi_D(u/X_i)\big|>\tau_{\rm den}
\right\},
\label{eq:Rhat_scalar_irregular}
\end{equation}
where $\tau_{\rm den}>0$ is a denominator trimming threshold. Averaging
over movers gives the first-stage estimator
\begin{equation}
\widehat m_N(u)
=
\frac{1}{N_m}
\sum_{i\in\mathcal{M}_N(\tau_x)}
\widehat R_N(u\mid X_i),
\qquad
N_m=|\mathcal{M}_N(\tau_x)|.
\label{eq:mhat_scalar_irregular}
\end{equation}
The tuning parameters $h_x$, $h_0$, $\tau_x$, and
$\tau_{\rm den}$ are discussed in Section~\ref{subsec:implementation}.

\subsection{Regular design: \texorpdfstring{$(T,p,q)=(2,1,0)$}{(T,p,q)=(2,1,0)}}
\label{subsec:estimation_regular_T2p1}

In the regular design,
\[
Y_i=X_i\beta_i+D_i,
\qquad
X_i=
\begin{pmatrix}
X_{i1}\\ X_{i2}
\end{pmatrix},
\qquad
D_i=
\begin{pmatrix}
D_{i1}\\ D_{i2}
\end{pmatrix}.
\]
We use the same transformations as in Example~\ref{ex_regular_T2p1}. Define
\[
\tau_1(X_i)
=
\begin{pmatrix}
X_{i2} & -X_{i1}\\
0 & 0
\end{pmatrix},
\qquad
\tau_2(X_i)=\frac{X_i'}{X_i'X_i},\qquad X_i'X_i>0.
\]

Setting
$\tOne Y_i=\tau_1(X_i)Y_i$ and $Y_i^*:=e_1'\tOne Y_i$, the
condition $\tau_1(X_i)X_i=0$ gives
\[
Y_i^*=X_{i2}Y_{i1}-X_{i1}Y_{i2}=\lambda_i'D_i,
\qquad
\lambda_i=
\begin{pmatrix}
X_{i2}\\ -X_{i1}
\end{pmatrix}.
\]
Each observation therefore identifies $\varphi_D$ along the line generated by $\lambda_i$: for any $v\in\R$ and any $\lambda$ in the support of $\lambda_i$, 
\[
\varphi_D(v\lambda)
=
\E\!\left[
e^{\ii vY_i^*}
\,\middle|\,
\lambda_i=\lambda
\right].
\]

The second-step transformation isolates $\beta_i$:
\[
\tTwo Y_i
=
\tau_2(X_i)Y_i
=
\frac{X_i'Y_i}{X_i'X_i}
=
\beta_i+\frac{X_i'D_i}{X_i'X_i}.
\]

We estimate $\varphi_{\tTwo Y\mid X}(u\mid X_i)=\E[e^{\ii u\tTwo Y_i}\mid X_i]$
by kernel regression:
\begin{equation}
\widehat\varphi_{\tTwo Y\mid X}(u\mid X_i)
=
\sum_{k=1}^N
\omega_{k,i}^{(X)}
\exp\!\left\{
\ii u\frac{X_k'Y_k}{X_k'X_k}
\right\},
\label{eq:numerator_regular}
\end{equation}
where
\[
\omega_{k,i}^{(X)}
=
\frac{K_{h_X}(X_k-X_i)}
     {\sum_{s=1}^N K_{h_X}(X_s-X_i)}.
\]

To estimate $\varphi_D$, normalize the first-step direction:
\[
S_i=\frac{\lambda_i}{\|\lambda_i\|}\in\mathbb{S}^1,
\qquad
\|\lambda_i\|=\sqrt{X_{i1}^2+X_{i2}^2}.
\]
Since $Y_i^*/\|\lambda_i\|=S_i'D_i$ depends only on the unit
direction $S_i$, for any $\xi\in\R^2\setminus\{0\}$ with polar
decomposition $\xi=\|\xi\|\,s(\xi)$, $s(\xi)=\xi/\|\xi\|$,
\[
\E\!\left[
e^{\ii \|\xi\|\,Y_i^*/\|\lambda_i\|}
\,\middle|\,
S_i=s(\xi)
\right]
=
\varphi_D(\xi).
\]
We therefore estimate
$\varphi_D$ by smoothing over directions on $\mathbb{S}^1$:
\[
\widehat\varphi_D(\xi)
=
\begin{cases}
\displaystyle
\sum_{j=1}^N
\omega_j^{(S)}(s(\xi))
\exp\!\left\{\ii \|\xi\| \dfrac{Y_j^*}{\|\lambda_j\|}\right\},
& \xi\neq 0,\\[10pt]
1, & \xi=0,
\end{cases}
\]
where $\omega_j^{(S)}(s)=K(\|S_j-s\|/h_S)/\sum_{k=1}^N
K(\|S_k-s\|/h_S)$, and $\widehat\varphi_D(0)=1$ by the
property $\varphi_D(0)=1$.

For each observation $i$, we form the trimmed conditional ratio:
\begin{equation}
\widehat R_N(u\mid X_i)
=
\frac{
\widehat\varphi_{\tTwo Y\mid X}(u\mid X_i)
}{
\widehat\varphi_D\!\left(u\,\frac{X_i}{X_i'X_i}\right)
}
\mathbf{1}\!\left\{
\left|
\widehat\varphi_D\!\left(u\,\frac{X_i}{X_i'X_i}\right)
\right|>\tau_{\rm den}
\right\}.
\label{eq:Rhat_regular}
\end{equation}
Since $X_i'X_i>0$ for all $i$, the average is taken over the full
sample:%
\footnote{In the irregular design the average is restricted to
movers $\mathcal{M}_N(\tau_x)$ because $\tTwo Y_i=Y_i/X_i$ requires
$|X_i|\ge\tau_x$. Here $\tTwo Y_i=X_i'Y_i/(X_i'X_i)$ is well-defined
for all $i$ under the maintained assumption $X_i'X_i>0$, so no
such restriction is needed.}
\begin{equation}
\widehat m_N(u)
=
\frac{1}{N}
\sum_{i=1}^N \widehat R_N(u\mid X_i).
\label{eq:mhat_regular}
\end{equation}

\subsection{Implementation}
\label{subsec:implementation}

Algorithms~\ref{alg:irregular_estimator_short}
and~\ref{alg:regular_estimator_short} in Section \ref{appendix:algorithms} summarize the first-stage
implementation for the scalar irregular and regular designs,
respectively. In both cases the output is a vector of estimated
characteristic function values $\{\widehat{m}_N(u_\ell)\}_{\ell=1}^L$
on a prescribed frequency grid $\{u_\ell\}_{\ell=1}^L \subset
[-U_N,U_N]$. Algorithm~\ref{alg:smd_second_stage} in Section \ref{appendix:algorithms} describes the common
second stage, which takes these values as input and returns
$\widehat{f}_\beta$.

The estimator depends on a number of tuning parameters.
Three are common to all algorithms: the denominator trimming threshold
$\tau_{\mathrm{den}}$, the frequency truncation $U_N$, and the sieve
dimension $S$. The remaining parameters are design-specific.
In the scalar irregular design, the first stage uses the bandwidth
$h_0$ for estimating $\varphi_D$ from stayers, the bandwidth
$h_x$ for estimation $\varphi_{\tTwo Y | X}$ from movers, and the mover--stayer threshold $\tau_x$.
The role of $\tau_x$ is distinct from that of the bandwidths: it
defines the stayer sample and thereby localizes the denominator
estimator in regressor space. Crucially, $\widehat\varphi_D$ is
constructed once from the stayer subsample and subsequently
evaluated at the mover-specific arguments $u/X_i$.
In the regular design, the first stage uses the bandwidth $h_X$ for
kernel regression of the second-step transformed outcome on $X_i$, and
the bandwidth $h_S$ for smoothing over normalized directions
$S_i\in\mathbb{S}^1$; no mover--stayer partition is needed.

The integrals entering both the first-stage frequency evaluations and
the second-stage criterion are approximated on the grid
$\{u_\ell\}_{\ell=1}^L$ with positive weights $\{w_\ell\}_{\ell=1}^L$,
which determine the quadrature objects $\widehat{\Omega}$ and
$\widehat{V}$ in \eqref{eq:OmegaV_def}. The choice of all tuning
parameters is discussed next.

\subsection{Heuristic rate discussion and tuning parameter selection in the scalar irregular case}
\label{sec:heuristic_rates_irregular}

This subsection provides heuristic guidance for selecting $(\tau_x, h_0, U_N, \tau_{\mathrm{den}})$. Conditional on these choices, the remaining tuning parameters $(h_x, S)$ are selected by cross-validation. The tuning parameters play different roles in the estimator. The first group of tuning parameters primarily determines how the stayer conditioning event is approximated and how stable the deconvolution step is, while the second group primarily determines the degree of smoothing and the complexity of the sieve approximation.

Let the numerator and denominator estimation errors be defined as
\[
\rho_N
:=
\sup_{\substack{|u|\le U_N \\ |x|\ge \tau_x}}
\left|
\widehat{\varphi}_{\widetilde{\widetilde{Y}}\mid X}(u\mid x)
-
\varphi_{\widetilde{\widetilde{Y}}\mid X}(u\mid x)
\right|,
\qquad
b_N
:=
\sup_{|v|\le U_N/\tau_x}
\left|
\widehat{\varphi}_D(v)
-
\varphi_D(v)
\right|,
\]
and define the stability factor
\[
\Delta_N
:=
\inf_{\substack{|u|\le U_N \\ |x|\ge \tau_x}}
|\varphi_D(u/x)|.
\]

To separate truncation from estimation error, let \(f_{\beta,U_N}\) denote the population density obtained by restricting Fourier inversion to the frequency region \(|u|\le U_N\), and let \(f_{\beta,S}\) denote its sieve approximation. A heuristic decomposition then yields
\begin{equation}
\label{eq:total_error}
\|\widehat f_\beta - f_\beta\|_{L^2}
\;\lesssim\;
\underbrace{\|f_{\beta,S}-f_{\beta,U_N}\|_{L^2}}_{\text{sieve approximation}}
+
\underbrace{\Delta_N^{-1}\rho_N + \Delta_N^{-2}b_N}_{\text{propagated estimation error}}
+
\underbrace{\|f_{\beta,U_N}-f_\beta\|_{L^2}}_{\text{truncation bias}}.
\end{equation}
The final term captures the bias induced by restricting inversion to a bounded frequency region and is decreasing in \(U_N\). The middle terms reflect the ratio structure of the estimator: numerator errors enter linearly through multiplication by \(1/\varphi_D\), while denominator errors enter through the derivative of \(1/\varphi_D\), leading to amplification factors of order \(\Delta_N^{-1}\) and \(\Delta_N^{-2}\), respectively.

The decomposition in \eqref{eq:total_error} highlights how the tuning
parameters enter the different components of the estimation error. The
threshold $\tau_x$ and bandwidth $h_0$ determine the accuracy of the
denominator estimator through $b_N$, while the trimming parameter
$\tau_{\mathrm{den}}$ and frequency cutoff $U_N$ jointly determine the
stability factor $\Delta_N$. The cutoff $U_N$ also governs the
truncation bias induced by restricting
inversion to a bounded frequency region. The numerator bandwidth $h_x$
controls the estimation error $\rho_N$, and the sieve dimension $S$
determines the sieve approximation error.
The discussion below explains how these parameters are chosen.

In the scalar irregular design, $\varphi_D(v)=\E[e^{\ii v Y_i}\mid X_i=0]$.
If $P(X_i=0)>0$, this quantity could be estimated by an average over $\{i:X_i=0\}$, so that no bandwidth would be required. In the continuous case, however, $P(X_i=0)=0$, and conditioning on
$X_i=0$ must be approximated using observations with $|X_i|$ close to
zero. We therefore estimate $\varphi_D(v)$ using the stayer sample
$\{|X_i|<\tau_x\}$ together with kernel weights $h_0$. To ensure that the
resulting estimator mimics the infeasible conditional average at
$X_i=0$, we impose
\begin{equation}
\label{taux_h0}
\tau_x = o(h_0),
\end{equation}
so that $X_i/h_0 \to 0$ uniformly over $|X_i|<\tau_x$. Under this regime, the kernel weights are approximately constant over
the stayer window, and the estimator behaves like an averaging
estimator over $\{|X_i|<\tau_x\}$. Consequently, the relevant localization scale is
$\tau_x$, while $h_0$ serves only to ensure regularity of the estimator. 

The numerator bandwidth $h_x$ is selected by cross-validation. To ensure that numerator smoothing is not affected by the thresholding
operation, our heuristics presumes that $\tau_x = o(h_x)$, in which case, smoothing
in the numerator step is governed by $h_x$. 

The parameters $U_N$ and $\tau_{\mathrm{den}}$ must be chosen jointly,
as both govern the stability of the deconvolution ratio. Since
$|u|\le U_N$ and $|x|\ge \tau_x$ imply $|u/x|\le U_N/\tau_x$, the
denominator is evaluated over the frequency region
\[
|v| \le \frac{U_N}{\tau_x},
\]
so that
\[
\Delta_N \asymp \inf_{|v|\le U_N/\tau_x} |\varphi_D(v)|.
\]
Increasing $U_N$ expands this region and reduces truncation bias, but
also decreases $\Delta_N$ as $\varphi_D(v)$ decays in $|v|$, thereby
amplifying denominator noise. Stability therefore requires that
$\Delta_N$ remains at least of the same order as
$\tau_{\mathrm{den}}$. This implies that $U_N$ cannot grow faster than
allowed by the decay of $\varphi_D$. In particular, if
$|\varphi_D(v)| \asymp (1+|v|)^{-\alpha}$ (ordinary smooth), then
\[
\Delta_N \asymp (U_N/\tau_x)^{-\alpha},
\quad\text{so that}\quad
U_N \lesssim \tau_x\,\tau_{\mathrm{den}}^{-1/\alpha},
\]
whereas if $|\varphi_D(v)| \asymp \exp(-c|v|^\gamma)$ (supersmooth),
\[
\Delta_N \asymp \exp\!\left(-c(U_N/\tau_x)^\gamma\right),
\quad\text{so that}\quad
U_N \lesssim \tau_x \bigl[\log(1/\tau_{\mathrm{den}})\bigr]^{1/\gamma}.
\]
Thus $U_N$ and $\tau_{\mathrm{den}}$ are linked through the requirement
that the smallest denominator value over the working frequency region
remains of the same order as the trimming threshold.

These considerations lead to the following rules of thumb. Let $s_0$ index the smoothness of $x \mapsto \E[e^{\ii vY_i}\mid X_i=x]$ at $x=0$. 
Under 
$$\tau_x \asymp N^{-\kappa},$$ the averaging logic with $\tau_x=o(h_0)$ suggests the choice $$h_0 \asymp (N\tau_x)^{-1/(2s_0+1)}\asymp N^{-(1-\kappa)/(2s_0+1)},\qquad \kappa>\frac{1}{2(s_0+1)}.$$ 

The frequency trimming tuning parameter $U_N$ is pinned down by the
stability constraint and the tail behavior of $\varphi_D$. Under 
\[
\tau_{\mathrm{den}} \asymp N^{-\eta},\quad \eta >0,
\]
in the ordinary smooth case with, e.g., $|\varphi_D(v)| \asymp (1+|v|)^{-\alpha}$
for some $\alpha>0$, the stability condition yields polynomial growth for $U_N$:
\[
U_N \asymp N^{-\kappa + \eta/\alpha},\qquad 
\frac{\eta}{\alpha} > \kappa.
\]

In the supersmooth case with, e.g., $|\varphi_D(v)| \asymp \exp(-c|v|^\gamma)$ for some $c,\gamma>0$, to obtain a diverging $U_N$, the trimming
threshold must decay faster than polynomially. For example, if
\[
\tau_{\mathrm{den}} \asymp \exp(-N^{r}), \quad r>0,
\]
then
\[
U_N \asymp \tau_x N^{r/\gamma}
\asymp N^{-\kappa + r/\gamma}, \qquad 
\frac{r}{\gamma} > \kappa.
\]

Under these choices, each term in \eqref{eq:total_error} is controlled:
the denominator error $b_N$ is governed by $\tau_x$, the stability factor
is maintained at the order of $\tau_{\mathrm{den}}$, and the truncation
bias is reduced by allowing $U_N$ to grow at the maximal rate compatible
with stability.

\subsection{Cross-Validation}
\label{sec:cv_irregular}

A practical way to select the remaining tuning parameters is via
$K$-fold cross-validation. Conditional on $(\tau_x, h_0, U_N, \tau_{\mathrm{den}})$ choices, cross-validation is used to
select the numerator bandwidth $h_x$ and the sieve dimension $S$.

We develop the cross-validation procedure below for the irregular design.
In the regular design, the directional bandwidth $h_S$, which governs the
estimation of $\varphi_D$ by smoothing over $S_i\in\mathbb{S}^{d-1}$,
plays the role of $h_0$ and is fixed; the numerator bandwidth
$h_X$ plays the role of $h_x$ and is selected jointly with $S$ by
cross-validation.

In the irregular design we fix $\tau_x$ using a rule motivated by
\begin{equation}
\label{taux_ref}
\tau_x = c_\tau \cdot \tau_x^{\mathrm{ref}}, \qquad
\tau_x^{\mathrm{ref}}
=
N^{-\kappa}\min\!\left(\mathrm{SD}(X_i),\;
\frac{\mathrm{IQR}(X_i)}{1.34}\right), \qquad \kappa \in (0,1/2),
\end{equation}
where $\kappa$ governs the size of the stayer window. In the simulations and application, we use $\kappa = 1/3$.

In the regime $\tau_x = o(h_0)$, the bandwidth $h_0$ is set 
\begin{equation}
\label{h0_ref}
h_0 = c_0 \cdot N^{-(1-\kappa)/(2s_0+1)},\qquad c_0 > 0.
\end{equation}
Setting $s_0=1$, yields $h_0 \asymp N^{-2/9}$ when $\kappa=1/3$.

To construct the candidate grid for $h_x$, we compute a reference bandwidth
\begin{equation}
\label{hx_ref}
h_x^{\mathrm{ref}}
=
0.9 \cdot N_m^{-1/5}\,\min\!\left(\mathrm{SD}(X_i^{\mathcal M}),\;
\frac{\mathrm{IQR}(X_i^{\mathcal M})}{1.34}\right),
\end{equation}
where $X_i^{\mathcal M}$ denotes the mover subsample and
$N_m = |\mathcal M_N(\tau_x)|$. The candidate grid is
$\Theta_{h_x} = \{c \cdot h_x^{\mathrm{ref}} : c \in \mathcal C\}$ for a
finite grid of positive multipliers $\mathcal C$.

Finally, we set $\tau_{\mathrm{den}}$ at a small value (e.g., $10^{-4}$),
which determines the minimal magnitude of the denominator that is deemed
numerically stable. Conditional on this choice, the cutoff $U_N$ is
selected using the stability considerations in
Section~\ref{sec:heuristic_rates_irregular}. In particular, $U_N$ is
increased conservatively until the empirical denominator
$|\widehat{\varphi}_D(v)|$ becomes small relative to
$\tau_{\mathrm{den}}$, so that the working frequency region remains
restricted to values for which the ratio is well behaved.

For each sample size $N$, fix a frequency grid
$\{u_j\}_{j=1}^J \subset [-U_N,U_N]$
with associated nonnegative weights $\{\varpi_j\}_{j=1}^J$, where the
cutoff $U_N$ is determined as above and is held fixed across candidate
values of $\theta=(h_x,S)$ in the cross-validation search.

\subsubsection{Stabilized cross-validation}

To describe the cross-validation procedure, let $\mathcal I_1, \dots, \mathcal I_K$ be a partition of the sample indices
$\{1,\dots,N\}$, and let
$\mathcal I_{-k} = \{1,\dots,N\}\setminus \mathcal I_k$ denote the training
sample associated with fold~$k$.

For each fold $k$ and candidate $\theta$, define the training and validation
stayer and mover sets
\[
\mathcal S_{-k}(\tau_x)
=
\{i\in \mathcal I_{-k}: |X_i|<\tau_x\},
\qquad
\mathcal M_{-k}(\tau_x)
=
\{i\in \mathcal I_{-k}: |X_i|\ge \tau_x\},
\]
\[
\mathcal S_k(\tau_x)
=
\{i\in \mathcal I_k: |X_i|<\tau_x\},
\qquad
\mathcal M_k(\tau_x)
=
\{i\in \mathcal I_k: |X_i|\ge \tau_x\},
\]
with cardinalities
$N_{m,-k}=|\mathcal M_{-k}(\tau_x)|$ and $N_{m,k}=|\mathcal M_k(\tau_x)|$.
Since $\tau_x$ is fixed, this partition does not vary across candidates
$\theta$.

On the training sample, estimate
\begin{equation}
\widehat\varphi_D^{(-k)}(v)
=
\sum_{\ell\in \mathcal S_{-k}(\tau_x)}
\omega_\ell^{(0,-k)} e^{\ii vY_\ell},
\qquad
\omega_\ell^{(0,-k)}
=
\frac{K_{h_0}(X_\ell)}
     {\sum_{s\in \mathcal S_{-k}(\tau_x)}K_{h_0}(X_s)},
\label{eq:phid_train_cv}
\end{equation}
and on the validation sample, estimate
\begin{equation}
\widehat\varphi_D^{(k)}(v)
=
\sum_{\ell\in \mathcal S_k(\tau_x)}
\omega_\ell^{(0,k)} e^{\ii vY_\ell},
\qquad
\omega_\ell^{(0,k)}
=
\frac{K_{h_0}(X_\ell)}
     {\sum_{s\in \mathcal S_k(\tau_x)}K_{h_0}(X_s)}.
\label{eq:phid_valid_cv}
\end{equation}
Since $h_0$ and $\tau_x$ are held fixed, $\widehat\varphi_D^{(-k)}$ and
$\widehat\varphi_D^{(k)}$ are computed once per fold and do not vary across
candidate values of $\theta=(h_x,S)$.

For each candidate bandwidth $h_x$ and each mover
$i\in \mathcal M_{-k}(\tau_x)$, estimate
\begin{equation}
\widehat\varphi_{\tTwo Y\mid X}^{(-k)}(u\mid X_i)
=
\sum_{r\in \mathcal M_{-k}(\tau_x)}
\omega_{r,i}^{(x,-k)}
\exp\!\left\{\ii u\frac{Y_r}{X_r}\right\}, \quad \omega_{r,i}^{(x,-k)}
=
\frac{K_{h_x}(X_r-X_i)}
     {\sum_{s\in \mathcal M_{-k}(\tau_x)}K_{h_x}(X_s-X_i)}.
\label{eq:phiy_train_cv}
\end{equation}
Form the trimmed training ratio
\begin{equation}
\widehat R_N^{(-k)}(u\mid X_i)
=
\frac{
\widehat\varphi_{\tTwo Y\mid X}^{(-k)}(u\mid X_i)
}{
\widehat\varphi_D^{(-k)}(u/X_i)
}
\mathbf 1\!\left\{
\bigl|\widehat\varphi_D^{(-k)}(u/X_i)\bigr|>\tau_{\mathrm{den}}
\right\},
\label{eq:rhat_train_cv}
\end{equation}
and average over the training movers:
\begin{equation}
\widehat m_N^{(-k)}(u)
=
\frac{1}{N_{m,-k}}
\sum_{i\in \mathcal M_{-k}(\tau_x)}
\widehat R_N^{(-k)}(u\mid X_i).
\label{eq:mhat_train_cv}
\end{equation}
Apply the second-stage sieve minimum distance estimator with sieve
dimension $S$ to $\{\widehat m_N^{(-k)}(u_j)\}_{j=1}^J$ to obtain the
training-fold sieve coefficient vector $\widehat\pi^{(-k)}$, and define the
implied training-fold characteristic function estimator
\begin{equation}
\widehat\varphi_\beta^{(-k)}(u)
=
\varphi_S(u;\widehat\pi^{(-k)}).
\label{eq:phibeta_train_cv}
\end{equation}

Standard cross-validation performs poorly in deconvolution problems
because the validation target inherits the instability of the estimator.
When the candidate bandwidth $h_x$ is small, the validation ratio
$\widehat{R}_N^{(k)}(u \mid X_i)$ is dominated by high-frequency noise
amplified through division by $\widehat{\varphi}_D^{(k)}(u/X_i)$. If the
training estimator uses the same small $h_x$, both sides of the
cross-validation comparison are noisy at high frequencies, and the
criterion effectively compares noise to noise.\footnote{\label{fn:cv_lit}This issue is
well documented in the bandwidth selection literature; see
\citet{ScottTerrell1987} for the distinction between biased and unbiased
cross-validation, and \citet{HallMarronPark1992} for a discussion of how
presmoothing affects the bias--variance tradeoff. \citet{Kent2024}
develops a related stabilized cross-validation approach for
smoothness-penalized deconvolution, showing that standard CV leads to
severe undersmoothing in ill-posed problems.}

We adopt a stabilized cross-validation approach in which the validation
target is constructed using a fixed pilot bandwidth $g_{\mathrm{pilot}}$
that does not vary with the candidate $h_x$. This produces an
oversmoothed, low-variance validation target, so that the CV criterion
compares candidate estimators against a stable reference rather than
against an equally noisy holdout estimate. The pilot bandwidth is set to
\begin{equation}
\label{eq:g_pilot}
g_{\mathrm{pilot}} = c_{\mathrm{pilot}} \cdot h^{\mathrm{ref}},
\qquad
h^{\mathrm{ref}} = 0.9 \cdot \min\!\left(\mathrm{SD}(X_i^{\mathcal M}),\;
\frac{\mathrm{IQR}(X_i^{\mathcal M})}{1.34}\right) N_m^{-1/r},
\end{equation}
where $c_{\mathrm{pilot}} \in [1.5, 3]$ and $r > 5$. Since $r > 5$ implies
$N^{-1/r} > N^{-1/5}$, the reference bandwidth $h^{\mathrm{ref}}$ decays more
slowly than the standard Silverman rate, so that $g_{\mathrm{pilot}}$ is
larger than a typical bandwidth choice. This produces an \emph{oversmoothed}
validation target, which provides a stable reference against which candidate
estimators are compared.

\paragraph{Heuristic justification.}
Let $\theta=(h_x,S)$ denote the smoothing and sieve parameters. The infeasible
finite-grid risk is
\[
R_k(\theta)
=
\sum_{j=1}^J \varpi_j
\left|
\widehat\varphi_{\beta,S}^{(-k)}(u_j;\theta)
-
\varphi_\beta(u_j)
\right|^2 .
\]
Cross-validation replaces the unknown target $\varphi_\beta$ by a validation
estimate constructed on the hold-out fold. In the present deconvolution problem,
the natural validation object is itself a ratio estimator,
\[
\widehat m_N^{(k)}(u;h)
=
\frac{1}{N_{m,k}}
\sum_{i\in\mathcal M_k}
\frac{
\widehat\varphi_{\tilde Y\mid X}^{(k)}(u\mid X_i;h)
}{
\widehat\varphi_D^{(k)}(u/X_i)
}
\mathbf{1}\bigl\{
|\widehat\varphi_D^{(k)}(u/X_i)|>\tau_{\rm den}
\bigr\}.
\]
For a generic bandwidth $h$, write heuristically
\[
\widehat m_N^{(k)}(u;h)
\approx
\varphi_\beta(u)+b_h(u)+\xi_h^{(k)}(u),
\]
where $b_h(u)$ is smoothing bias, $\xi_h^{(k)}(u)$ is stochastic error, and
the approximation reflects a first-order expansion of the ratio estimator. The
difficulty is that $\xi_h^{(k)}(u)$ is amplified by division by
$\widehat\varphi_D^{(k)}(u/X_i)$, especially at frequencies for which the
denominator is small. Thus a validation target constructed with a small bandwidth
can be dominated by high-frequency noise.

We therefore construct the validation target using a fixed oversmoothed pilot
bandwidth,
\[
\widehat m_N^{(k)}(u; g_{\mathrm{pilot}})
\approx
\varphi_\beta(u)+b_g(u)+\xi_g^{(k)}(u).
\]
The purpose of $g_{\mathrm{pilot}}$ is to reduce the numerator noise before that noise is amplified by the
inverse characteristic function. Since the denominator $\widehat{\varphi}_D^{(k)}$
does not depend on $g_{\mathrm{pilot}}$, a larger pilot bandwidth reduces the
variance of the numerator and hence the variance of the ratio.

Substitution into the cross-validation criterion gives
\begin{align*}
\mathrm{CV}_k(\theta)
&=
R_k(\theta)
+
\sum_{j=1}^J \varpi_j |b_g(u_j)+\xi_g^{(k)}(u_j)|^2
\\ &-
2\textrm{Re}\sum_{j=1}^J \varpi_j
\Bigl(
\widehat\varphi_{\beta,S}^{(-k)}(u_j;\theta)
-
\varphi_\beta(u_j)
\Bigr)
\overline{(b_g(u_j)+\xi_g^{(k)}(u_j))}.
\end{align*}
The second term on the right hand side is common across candidate values of $\theta$ and therefore does
not affect the ranking of tuning parameters. The last term is the relevant
distortion. Sample splitting mitigates its stochastic component in the sense that there is no systematic bias, since
$\widehat\varphi_{\beta,S}^{(-k)}$ and $\xi_g^{(k)}$ are computed on disjoint
folds. The remaining pilot-bias component is the price paid for stabilization:
the pilot bandwidth is chosen to trade a small, smooth bias against a substantial
reduction in the variance of the validation target.

Thus the stabilized criterion is intended to approximate the infeasible risk up
to a candidate-invariant term and a reduced cross term. This does not establish
optimality of the selected tuning parameters, but it explains why stabilizing the
validation target is preferable to comparing the estimator with another noisy
inverse estimate. A related approach with formal consistency guarantees is
developed by \citet{Kent2024}; that method estimates risk for a smaller
hypothetical sample size and requires knowledge of asymptotic rates to rescale
the selected parameter.

\medskip

For each validation mover $i\in \mathcal M_k(\tau_x)$, estimate the
numerator using the pilot bandwidth:
\begin{equation}
\widehat\varphi_{\tTwo Y\mid X}^{(k)}(u\mid X_i; g_{\mathrm{pilot}})
=
\sum_{r\in \mathcal M_k(\tau_x)}
\omega_{r,i}^{(\mathrm{pilot},k)}
\exp\!\left\{\ii u\frac{Y_r}{X_r}\right\},
\quad
\omega_{r,i}^{(\mathrm{pilot},k)}
=
\frac{K_{g_{\mathrm{pilot}}}(X_r-X_i)}
     {\sum_{s\in \mathcal M_k(\tau_x)}K_{g_{\mathrm{pilot}}}(X_s-X_i)}.
\label{eq:phiy_valid_cv_pilot}
\end{equation}
Form the validation ratio
\begin{equation}
\widehat R_N^{(k)}(u\mid X_i; g_{\mathrm{pilot}})
=
\frac{
\widehat\varphi_{\tTwo Y\mid X}^{(k)}(u\mid X_i; g_{\mathrm{pilot}})
}{
\widehat\varphi_D^{(k)}(u/X_i)
}
\mathbf 1\!\left\{
\bigl|\widehat\varphi_D^{(k)}(u/X_i)\bigr|>\tau_{\mathrm{den}}
\right\},
\label{eq:rhat_valid_cv_pilot}
\end{equation}
and average over the validation movers:
\begin{equation}
\widehat m_N^{(k)}(u; g_{\mathrm{pilot}})
=
\frac{1}{N_{m,k}}
\sum_{i\in \mathcal M_k(\tau_x)}
\widehat R_N^{(k)}(u\mid X_i; g_{\mathrm{pilot}}).
\label{eq:mhat_valid_cv_pilot}
\end{equation}
This construction uses the pilot bandwidth $g_{\mathrm{pilot}}$ for the
numerator smoothing while the denominator $\widehat\varphi_D^{(k)}$ continues
to be estimated using the fixed bandwidth $h_0$. Since the validation target
does not depend on the candidate $h_x$, it can be computed once per fold
outside the loop over candidate bandwidths, which provides a computational
advantage.

\subsubsection{Repeated cross-validation and selection}

To reduce sensitivity to the random fold assignment, we employ repeated
cross-validation. Let $n_{\mathrm{rep}}$ denote the number of repetitions.
For each repetition $r = 1, \ldots, n_{\mathrm{rep}}$, we draw an independent
random partition of the sample into $K$ folds and compute the fold-specific
holdout discrepancy
\begin{equation}
\mathrm{CV}_k^{(r)}(\theta)
=
\sum_{j=1}^J
\varpi_j
\left|
\widehat m_N^{(k)}(u_j; g_{\mathrm{pilot}})
-
\widehat\varphi_\beta^{(-k)}(u_j)
\right|^2,
\label{eq:cvk_def}
\end{equation}
where $\theta = (h_x, S)$. The repetition-specific cross-validation score is
$\mathrm{CV}^{(r)}(\theta) = K^{-1}\sum_{k=1}^{K}\mathrm{CV}_k^{(r)}(\theta)$.

Aggregating across repetitions, define
\begin{equation}
\label{eq:cv_mean_sd}
\overline{\mathrm{CV}}(\theta)
=
\frac{1}{n_{\mathrm{rep}}} \sum_{r=1}^{n_{\mathrm{rep}}} \mathrm{CV}^{(r)}(\theta),
\qquad
\mathrm{SD}(\theta)
=
\left(
\frac{1}{n_{\mathrm{rep}}-1} \sum_{r=1}^{n_{\mathrm{rep}}}
\bigl(\mathrm{CV}^{(r)}(\theta) - \overline{\mathrm{CV}}(\theta)\bigr)^2
\right)^{1/2},
\end{equation}
and the standard error $\mathrm{SE}(\theta) = \mathrm{SD}(\theta) / \sqrt{n_{\mathrm{rep}}}$.

\subsubsection{One-standard-error rule}

In ill-posed inverse problems, favoring simpler models can improve stability.
We adopt a one-standard-error rule: let $\theta^* = \arg\min_\theta \overline{\mathrm{CV}}(\theta)$
denote the minimizer of the mean CV score, and let $\mathrm{SE}^* = \mathrm{SE}(\theta^*)$
be the standard error at the minimum. Define the set of near-optimal candidates
\begin{equation}
\label{eq:one_se_set}
\Theta_{\mathrm{1SE}}
=
\bigl\{\theta \in \Theta_{h_x} \times \Theta_S :
\overline{\mathrm{CV}}(\theta) \leq \overline{\mathrm{CV}}(\theta^*) + \mathrm{SE}^*
\bigr\}.
\end{equation}
Among candidates in $\Theta_{\mathrm{1SE}}$, we select the pair with the
largest $h_x$ (most smoothing) and, among those, the smallest $S$
(simplest sieve). This lexicographic ordering favors stability: larger bandwidths
reduce variance in the numerator estimation, and smaller sieve dimensions reduce
the complexity of the second-stage projection.

The selected tuning parameters are
\begin{equation}
\label{eq:cv_def}
\widehat\theta
=
\arg\max_{(h_x, S) \in \Theta_{\mathrm{1SE}}} h_x,
\quad
\text{with ties broken by smallest } S.
\end{equation}

Algorithm~\ref{alg:pcv_tuning} in
Section~\ref{appendix:algorithms} summarizes the procedure.

\begin{remark}[Feasibility of the cross-validation environment]
\label{rem:CVfeasibility}

Deconvolution estimators are ill-posed, and the trimmed ratio can become
unstable or degenerate when the denominator $\widehat\varphi_D^{(k)}$ is small.
We impose hard feasibility conditions that must be satisfied before the CV
search proceeds. These conditions depend only on the fixed parameters
$(\tau_x,h_0,\tau_{\mathrm{den}})$ and the data partition. This approach is
standard in ill-posed inverse problems; see \citet{CarrascoFlorens2011}.

The precise definitions of the feasibility conditions are given in
Appendix~\ref{appendix:cv_feasibility}. In particular, the CV environment
is deemed infeasible if (i) validation mover sets are empty, (ii) the average
trimming fraction exceeds $\rho_{\max}$, (iii) the denominator instability
$\Gamma_k := \max_{u,i} |\widehat\varphi_D(u/X_i)|^{-1}$ exceeds $\Gamma_{\max}$,
or (iv) for some fold $k$ and frequency $u_j$, all validation movers are trimmed.

If any of these conditions fails, the procedure stops and the user must
revise $\tau_x$ or $\tau_{\mathrm{den}}$ before proceeding. Otherwise, the
CV search is conducted over the candidate grid
$\Theta_{h_x}\times\Theta_S$.
\end{remark}

\subsection{Bootstrap Inference}
\label{sec:bootstrap}

We assess sampling uncertainty using a nonparametric pairs bootstrap.\footnote{The method here is provisional.}
For each bootstrap replication $b = 1, \ldots, B$:
\begin{enumerate}
\item Draw a bootstrap sample $(Y_i^{*b}, X_i^{*b})_{i=1}^N$ by resampling
the original observations with replacement.
\item Re-estimate the density $\widehat{f}_\beta^{*b}$ using the tuning
parameters $(\tau_x, h_0, h_x, S)$ selected by cross-validation on the
original sample. The stayer--mover partition is recomputed for each
bootstrap sample, but the tuning parameters are held fixed.
\item Post-process the bootstrap estimate: truncate negative values at
zero and renormalize to integrate to one.
\end{enumerate}

\noindent
Inference is therefore for the post-processed density estimator. In the application, we use $B = 499$ replications.

\subsubsection{Confidence intervals}

We construct pointwise confidence intervals using the basic (reverse-percentile)
bootstrap method. For each evaluation point $b$ in the density grid:
\begin{enumerate}
\item Compute centered bootstrap deviations:
$\widetilde{f}^{*b}(b) = \widehat{f}_\beta^{*b}(b) - \widehat{f}_\beta(b)$.
\item Compute the $\alpha/2$ and $(1-\alpha/2)$ quantiles of the centered
distribution across bootstrap replications:
\[
\widetilde{q}_{\alpha/2}(b), \quad \widetilde{q}_{1-\alpha/2}(b).
\]
\item Form the confidence interval by reversing the quantiles:
\[
\mathrm{CI}_{1-\alpha}(b)
=
\bigl[\widehat{f}_\beta(b) - \widetilde{q}_{1-\alpha/2}(b),\;
\widehat{f}_\beta(b) - \widetilde{q}_{\alpha/2}(b)\bigr].
\]
\end{enumerate}

\noindent
This is equivalent to the formula
$\mathrm{CI}_{1-\alpha}(b) = [2\widehat{f}_\beta(b) - \widehat{q}_{1-\alpha/2}^*(b),\;
2\widehat{f}_\beta(b) - \widehat{q}_{\alpha/2}^*(b)]$,
where $\widehat{q}_\tau^*(b)$ denotes the $\tau$-quantile of the bootstrap
estimates $\{\widehat{f}_\beta^{*1}(b), \ldots, \widehat{f}_\beta^{*B}(b)\}$.
After construction, confidence bands are truncated below at zero to respect
the non-negativity constraint on densities.

\subsubsection{Standard errors and moments}

Bootstrap standard errors are computed as
\[
\widehat{\mathrm{SE}}(b)
=
\mathrm{SD}\bigl(\widehat{f}_\beta^{*1}(b), \ldots, \widehat{f}_\beta^{*B}(b)\bigr).
\]
Since the standard deviation is translation-invariant, this quantity does not
depend on whether the bootstrap distribution is centered.

Moments of the random coefficient distribution are computed by numerical
integration over the post-processed density:
\[
\widehat{\mathbb{E}}[\beta_i]
=
\int b\, \widehat{f}_\beta(b)\, db,
\qquad
\widehat{\mathrm{Var}}(\beta_i)
=
\int b^2\, \widehat{f}_\beta(b)\, db - \bigl(\widehat{\mathbb{E}}[\beta_i]\bigr)^2.
\]
Bootstrap standard errors and confidence intervals for moments are constructed
analogously, applying the same integration to each bootstrap density
$\widehat{f}_\beta^{*b}$.

\begin{remark}[Inference]
\label{rem:inference}

The pairs bootstrap procedure in Section~\ref{sec:bootstrap} provides practical pointwise inference for the density estimator, conditional on the selected tuning parameters.

For kernel deconvolution estimators, \citet{BissantzDumbgenHolzmannMunk2007} establish conditions under which the nonparametric bootstrap is consistent for constructing uniform confidence bands when the error distribution is known. \citet{KatoSasaki2018} extend this to unknown error distributions using a multiplier bootstrap. For sieve estimators in ill-posed inverse problems, \citet{ChenChristensen2018} develop a sieve score bootstrap that delivers asymptotically exact uniform confidence bands for nonlinear functionals. Our setting differs in that estimation proceeds via a two-stage procedure with a generated nuisance function; the obstacles to extending these bootstrap results are discussed below.

An alternative approach would exploit the moment restrictions implied by the characteristic function factorization. Multiplying the factorization \eqref{eq:cf_factor} by a bounded function $g(X_i)$ and integrating yields
\[
E\left[ g(X_i) \left( e^{\imath u \tilde{Y}_i} - \phi_{\beta|X}(u \mid X_i) \cdot \phi_D(u/X_i) \right) \right] = 0,
\]
which, after substituting a first-step estimator $\widehat{\phi}_D$, is linear in a sieve representation of the conditional characteristic function. This structure is related to \citet{KatoSasakiUra2021}, who reformulate Kotlarski's identity as linear complex-valued moment restrictions and conduct inference by inverting a supremum test statistic within a Hermite sieve framework. Inference via moment inversion shares features with high-dimensional Gaussian approximation problems \citep{ChernozhukovChetverikov2017}, but existing results do not cover the present setting without additional arguments. Three features are critical.

First, the unknown characteristic function $\phi_{\beta|X}(u|x)$ depends on both the frequency $u$ and the conditioning value $x$. The moment system therefore indexes a function of two arguments, requiring a growing collection of moments over frequencies and covariates. In particular, sieve approximations involve tensor-product bases in $(u,x)$, leading to a dimension that grows with the product of the number of basis functions in each argument.

Second, the moments depend on the first-step estimator $\widehat{\phi}_D$. The moment restriction is nonorthogonal with respect to this nuisance: perturbing $\phi_D$ changes the moment at first order, with the perturbation entering multiplicatively through $\phi_{\beta|X}(u|X_i)\cdot (\widehat{\phi}_D - \phi_D)(u/X_i)$. Equivalently, the Gateaux derivative of the moment with respect to $\phi_D$ at the truth is nonzero. Although high-dimensional Gaussian approximation results have been extended to settings with estimated nuisance functions, such extensions typically rely on orthogonality or sample-splitting conditions that eliminate the first-order effect of nuisance estimation. Here, orthogonality does not hold, so the first-step contribution must be explicitly controlled.\footnote{The stayer and mover samples are disjoint by construction, but this design-induced separation is not Neyman orthogonality. One could additionally cross-fit by partitioning all observations into $K$ folds: stayers outside fold $k$ are used to estimate $\widehat{\phi}_D^{(-k)}$, and movers inside fold $k$ use $\widehat{\phi}_D^{(-k)}$ when evaluating the moment restrictions. This removes own-fold dependence between the generated nuisance and the moment observations, but does not restore orthogonality. Because $\phi_D$ enters the moment multiplicatively, first-step estimation error remains first order in the asymptotic expansion.}

Third, the effective index set is data-dependent. The first-step estimator is evaluated at random arguments $u/X_i$: when $|X_i|$ is small, the evaluation point lies in the high-frequency region where $\widehat{\phi}_D$ is less precisely estimated. Moreover, observations are retained only when $|\widehat{\phi}_D(u/X_i)|$ exceeds a threshold, so the set of active moment conditions depends on the first-step estimator. This induces a data-dependent function class with discontinuous inclusion rules, complicating Gaussian approximation because standard entropy bounds apply to fixed classes. Addressing this would require controlling the empirical process uniformly over a family of classes indexed by the first-step estimator.

These features interact: the random evaluation points $u/X_i$ determine where the generated nuisance $\widehat{\phi}_D$ is evaluated, and at each such point the lack of orthogonality transmits first-step estimation error into the moment at first order. Because the target is conditional, this interaction propagates across a growing collection of moments indexed by both $u$ and $x$. Taken together, these considerations suggest that rigorous inference, whether via bootstrap or moment inversion, would require careful extensions of existing deconvolution, moment-inversion, and high-dimensional Gaussian approximation tools.
\end{remark}

\section{Monte Carlo: Finite sample performance}
\label{sec:simulations}

\subsection{Irregular design $(T,p,q)=(1,1,0)$}
\label{sec:sim_irregular}

We consider the following data generating process.  For each $i=1,\dots,N$,
\begin{equation}\label{eq:dgp_irreg}
y_{it} = \alpha_i + x_{it}\beta_i + u_{it}, \qquad t=1,2,
\end{equation}
where the covariates, random coefficients, and error terms are generated as follows:
\begin{alignat*}{3}
x_{i1} &\sim \mathcal N(0,1), &\qquad x_{i2} &= z_i + x_{i1}, &\quad z_i&\sim\mathcal N(0,4), \\
\alpha_i &\sim \mathcal N(x_{i1}+x_{i2},\,1), \\
\beta_i &= \zeta_i\,\varepsilon_{\beta,i}, 
&\qquad \zeta_i &= 1+\delta(x_{i2}-x_{i1}), &\quad \delta&=0.1, \\
u_{i1} &= \varepsilon_{1i}, &\qquad u_{i2} &= \varepsilon_{2i}+\theta\,\varepsilon_{1i}, &\quad \theta&=0.7,
\end{alignat*}
with $\varepsilon_{\beta,i}\sim F_{\varepsilon_\beta}$, $\varepsilon_{1i}\sim F_{\varepsilon_{1}}$, and $\varepsilon_{2i}\sim F_{\varepsilon_{2}}$, where $\mathrm{Var}(\varepsilon_{1i})=\sigma^2_{\varepsilon_1}=1$ and $\mathrm{Var}(\varepsilon_{2i})=\sigma^2_{\varepsilon_2}=2$.  The innovations $\varepsilon_{1i}$ and $\varepsilon_{2i}$ are independent and mean-zero; their distributions $F_{\varepsilon_{1}}$, $F_{\varepsilon_{2}}$, and $F_{\varepsilon_\beta}$ vary across specifications as described below.
After first-differencing, the model reduces to
\[
Y_i = X_i\beta_i + D_i,
\]
where $X_i = z_i\sim\mathcal N(0,4)$ and $D_i = \varepsilon_{2i}+(\theta-1)\varepsilon_{1i}$.
The random coefficient $\beta_i$ is a scale mixture: since $\beta_i = \zeta_i\varepsilon_{\beta,i}$
with $\zeta_i = 1+\delta X_i$, $\beta_i$ and~$X_i$ are statistically dependent. The error terms $u_{it}$
are serially correlated through the parameter~$\theta$.

We consider four specifications, varying the distribution of the innovation $\varepsilon_{\beta,i}$, hence the distribution of the CRC, and the distribution of the error terms:
\begin{itemize}
\item[(a)] Symmetric CRC and supersmooth errors: $F_{\varepsilon_\beta} = \mathcal N(0,1)$, and Gaussian errors $F_{\varepsilon_t} = \mathcal N(0,\sigma^2_{\varepsilon_t}),\,t=1,2$;
\item[(b)] Skewed CRC and supersmooth errors: $F_{\varepsilon_\beta} = \mathrm{SN}(0,\omega,\lambda)$ with scale $\omega=2$ and shape $\lambda=4$, and Gaussian errors $F_{\varepsilon_t} = \mathcal N(0,\sigma^2_{\varepsilon_t}),\,t=1,2$;
\item[(c)] Bimodal CRC and supersmooth errors: $F_{\varepsilon_\beta} = \tfrac{1}{2}\mathcal N(-1,0.25)+\tfrac{1}{2}\mathcal N(1,0.15)$, and Gaussian errors $F_{\varepsilon_t} = \mathcal N(0,\sigma^2_{\varepsilon_t}),\,t=1,2$;
\item[(d)] Bimodal CRC and ordinary smooth errors: same $F_{\varepsilon_\beta}$ as~(c), but $F_{\varepsilon_1}$ and $F_{\varepsilon_2}$ are Laplace with scale parameters $b_1 = \sqrt{\sigma^2_{\varepsilon_1}/2}$ and $b_2 = \sqrt{\sigma^2_{\varepsilon_2}/2}$, so that $\mathrm{Var}(D_i)$ is identical to specifications~(a)--(c).
\end{itemize}
In specifications~(a)--(c), the disturbance $D_i = \varepsilon_{2i}+(\theta-1)\varepsilon_{1i}$ is a linear combination of Gaussians, so its characteristic function $\varphi_D$ decays exponentially (supersmooth).  In specification~(d), the Laplace innovations yield $\varphi_D(t) = O(|t|^{-4})$ (ordinary smooth), while the distribution of $\beta_i$ is unchanged.  Comparing (c) and~(d) isolates the effect of the smoothness class of $D_i$ on the estimator's ability to recover multimodal features.

In each case, the unconditional density of~$\beta_i$ is the scale mixture
\[
f_\beta(b) = \int |s|^{-1}\,f_{\varepsilon_\beta}(b/s)\,f_\zeta(s)\,ds,
\]
where $f_\zeta$ is the density of $\zeta_i\sim\mathcal N(1,4\delta^2)$.  When $\delta$ is small,
the scaling factor~$\zeta_i$ remains close to one with high probability, and the shape of
$f_{\varepsilon_\beta}$ is largely preserved in the unconditional density~$f_\beta$.

\paragraph{Implementation.}
We generate $N=2000$ observations and run $100$ Monte Carlo replications.  The
stayer threshold is set to $\tau_x = c_\tau\cdot\tau_x^{\mathrm{ref}}$ with
$\tau_x^{\mathrm{ref}}$ defined in \eqref{taux_ref}.  In specifications~(a) and~(b), $c_\tau=4$, yielding approximately $25\%$ of observations classified as stayers across replications; in the bimodal specifications~(c) and~(d), $c_\tau=5$, yielding approximately $31\%$ stayers.  The bandwidth $h_0$ for the stayer denominator estimator is set
proportional to the threshold, $h_0 = c_0\cdot\tau_x$ with $c_0 = 1$.\footnote{This deviates from the discussion in Section~\ref{sec:heuristic_rates_irregular}. The discussion there motivates the roles of \(\tau_x\) and \(h_0\), while the implementation localizes the denominator estimator on the same near-stayer region used to define the stayer sample.} The denominator trimming threshold is $\tau_{\mathrm{den}}=10^{-4}$, and the frequency grid consists of $L=101$ equally spaced points on $[-U_N,U_N]$ with $U_N=4$.

The tuning parameters $(h_x,S)$ are selected by $5$-fold cross-validation as
described in Section~\ref{sec:cv_irregular} and summarized in Algorithm~\ref{alg:pcv_tuning}.  In specifications~(a) and~(b), the bandwidth
$h_x$ is global (fixed across movers), and the candidate grid consists of the multiples
$$\Theta_{h_x}=\{0.5,\,0.75,\,1,\,1.5,\,2\}\times h_x^{\mathrm{ref}},$$ where $h_x^{\mathrm{ref}}$
is Silverman's rule on the mover subsample.  The candidate grid for $S$ is
$\{3,5,\dots,15\}$ in specification~(a) and $\{3,5,\dots,19\}$ in specification~(b).
The weighting function $\nu$ is the standard normal density.

In the bimodal specifications~(c) and~(d), the global bandwidth is replaced by a $k$-nearest-neighbor
($k$-NN) adaptive rule (see, e.g., \cite{GaoOhViswanath2017}).  For each mover~$i$, the local bandwidth $h_x(X_i)$ is set
equal to the distance from $X_i$ to its $k$-th nearest neighbor in the mover subsample.
Cross-validation selects $(k,S)$ jointly over $k\in\{5,10,15,20,30\}$ and $S\in\{7,9,\dots,19\}$.  The weighting function is the Student-$t$
density with $3$ degrees of freedom, which places more mass on the higher frequencies
needed to resolve bimodality.

In all four specifications, the feasibility thresholds are $\Gamma_{\max}=100$ and $\rho_{\max}=0.5$ (see Remark~\ref{rem:CVfeasibility}).  After
estimation, the sieve density is truncated below at zero and renormalized to integrate
to one.

\paragraph{Results.}
Figure~\ref{fig:sim_all} displays the estimation results
for all four specifications.  In each panel, the black solid line is
the true density $f_\beta$, the dashed line is the pointwise average of
$\hat f_\beta$ across replications, the solid gray line is the pointwise median, and
the shaded region is the pointwise interquartile range.

In the symmetric case (panel~(a)), the estimator performs well.  The
pointwise median tracks the true density closely, and the interquartile range is narrow
throughout the support.  The cross-validation procedure selects $S=3$ in the large
majority of replications ($79$ out of $100$), with occasional selections of $S=5$ ($13$ replications) and $S=7$ ($7$ replications).  The selected bandwidth
$h_x$ has a median of approximately $0.71$, with an interquartile range of $[0.36,\,0.95]$.

In the skewed case (panel~(b)), the estimator captures the asymmetry
and the location of the mode accurately.  The pointwise median follows the true density
closely, while the pointwise average shows a slight negative bias near the peak and some
positive mass in the far left tail where the true density is near zero.  The interquartile
range is wider than in the symmetric case, reflecting the greater difficulty of the
estimation problem.  The cross-validation selects a median sieve dimension of $S=9$
(interquartile range $[7,11]$) across replications, with the full distribution spread over $S\in\{5,\dots,19\}$.  The selected bandwidth $h_x$ has
a median of $0.93$ with an interquartile range of $[0.48,\,0.96]$.  This variability
reflects the sensitivity of the numerator characteristic-function estimator to the local
density of movers, as discussed in Section~\ref{sec:heuristic_rates_irregular}.

\begin{figure}[h!]
    \centering
    \begin{subfigure}{0.45\textwidth}
        \centering
        \includegraphics[width=\linewidth]{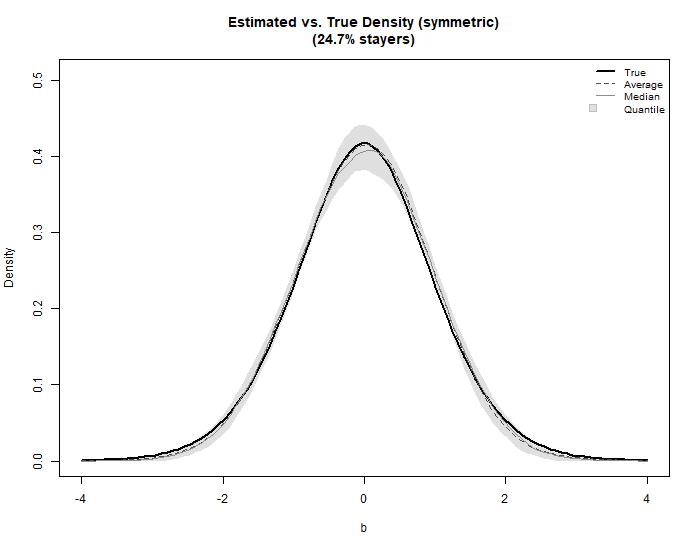}
        \caption{Symmetric, Gaussian errors.}
        \label{fig:sim_symmetric}
    \end{subfigure}
    \hfill
    \begin{subfigure}{0.45\textwidth}
        \centering
        \includegraphics[width=\linewidth]{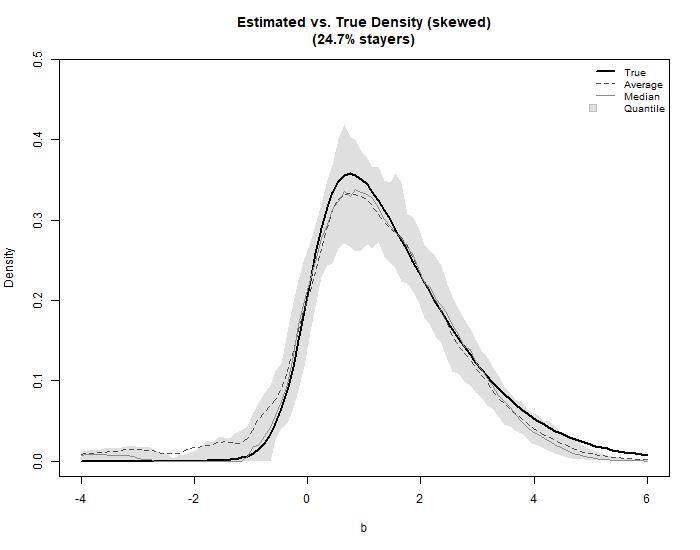}
        \caption{Skewed, Gaussian errors.}
        \label{fig:sim_skewed}
    \end{subfigure}

    \vspace{0.4cm}

    \begin{subfigure}{0.45\textwidth}
        \centering
        \includegraphics[width=\linewidth]{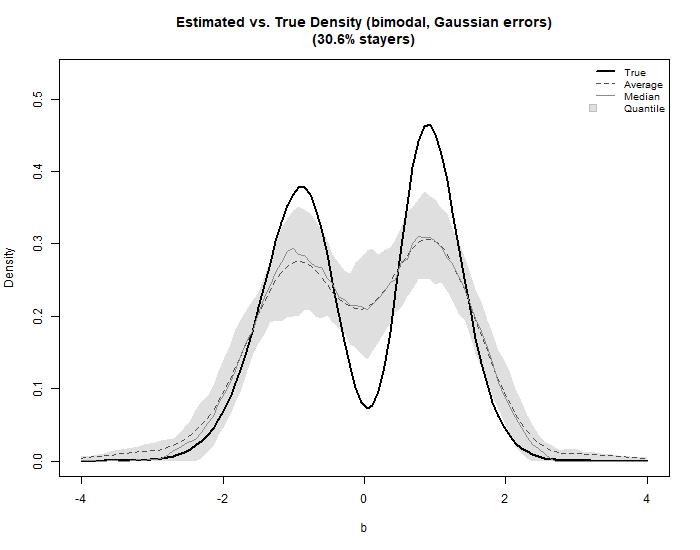}
        \caption{Bimodal, Gaussian errors.}
        \label{fig:sim_bimodal_gaussian}
    \end{subfigure}
    \hfill
    \begin{subfigure}{0.45\textwidth}
        \centering
        \includegraphics[width=\linewidth]{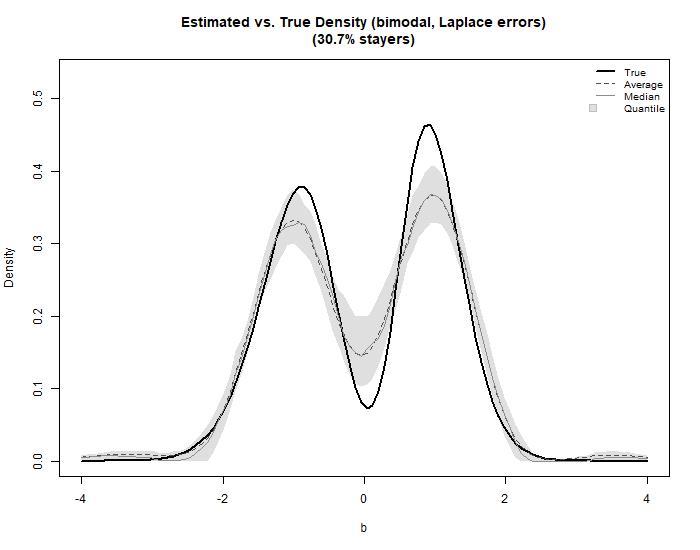}
        \caption{Bimodal, Laplace errors.}
        \label{fig:sim_bimodal_laplace}
    \end{subfigure}

    \caption{Estimation results for $f_\beta$ across the four specifications.\\
    \footnotesize Note: In each panel, the black curve represents the true density function $f_\beta$, the gray shaded area represents the interquartile range, the dashed line represents the pointwise average estimate $\hat{f}_\beta$, and the solid light gray line represents the pointwise median estimate, based on $100$ replications. Panels (a) and (b): Gaussian errors, global bandwidth, approximately $24.7\%$ stayers. Panels (c) and (d): $k$-NN adaptive bandwidth with CV-selected tuning parameters; panel~(c) Gaussian errors, approximately $30.6\%$ stayers; panel~(d) Laplace errors, approximately $30.7\%$ stayers; $\mathrm{Var}(D_i)$ matched across all specifications.}
    \label{fig:sim_all}
\end{figure}

The bimodal case is considerably more demanding.  Panel~(c) of Figure~\ref{fig:sim_all} displays the results under specification~(c) (Gaussian errors), where the $k$-NN adaptive bandwidth and the sieve dimension are selected jointly by cross-validation.  Across $100$ replications, the procedure selects $k=30$ in $88$ replications and $S=7$ in $93$ replications.  While the estimator identifies the general location and spread of the density, it does not fully resolve the trough between the two modes: the pointwise median and average estimates are flatter than the true density in the bimodal region.  The $k$-NN adaptive bandwidth and the Student-$t$ weighting help, but the attenuation of the bimodal feature persists.

The difficulty can be traced to the supersmooth character of the Gaussian error distribution.  Bimodality in~$f_\beta$ manifests as oscillations in the characteristic function $\varphi_\beta$ at moderate to high frequencies.  Recovering these oscillations requires accurate estimation of the ratio $\hat\varphi_{\tTwo Y\mid X}(u\mid X_i)/\hat\varphi_D(u/X_i)$ at frequencies where the denominator $\varphi_D$ is small.  Under Gaussian errors, $\varphi_D$ decays exponentially: for the baseline DGP, $|\varphi_D(v)| = \exp(-\mathrm{Var}(D_i)\,v^2/2)$, which at the argument $v = u/X_i = 3$ equals approximately $8\times 10^{-5}$---close to the trimming threshold $\tau_{\mathrm{den}} = 10^{-4}$.  The deconvolution ratio is therefore dominated by noise at precisely the frequencies that encode the bimodal structure.

The need for the adaptive bandwidth can also be understood from this
perspective. With a global bandwidth, the kernel regression averages over
movers whose deconvolution problems have very different conditioning: movers
with large~$|X_i|$ produce well-conditioned ratios (since $|u/X_i|$ is small,
keeping $\hat\varphi_D$ bounded away from zero), while movers with small~$|X_i|$
produce ratios dominated by noise. A single bandwidth that stabilizes the
latter necessarily oversmooths the former, attenuating the high-frequency
content of $\hat m_N(u)$.

The $k$-NN rule addresses this trade-off indirectly. Because movers with
small~$|X_i|$ concentrate near the threshold~$\tau_x$, where the mover
density is highest, they receive tight bandwidths under the $k$-NN rule;
movers with large~$|X_i|$, located in the sparser tails, receive wider
bandwidths. The tight bandwidths in the dense region preserve the
oscillatory structure of the conditional characteristic function that encodes
bimodality, which a global bandwidth would smooth away. At the same time,
the wider bandwidths in the tails provide additional smoothing for movers
that, while well-conditioned for deconvolution, are few in number and would
otherwise contribute noisy numerator estimates. Thus, the $k$-NN rule
primarily improves resolution where data are plentiful, rather than
stabilizing the noisiest deconvolution ratios directly. Stabilization of the
small-$|X_i|$ ratios is instead provided by the denominator trimming
threshold~$\tau_{\mathrm{den}}$ and the averaging over movers
in~\eqref{eq:mhat_scalar_irregular}.

Specification~(d) isolates the role of the smoothness class of $D_i$.  Since the Laplace characteristic function decays polynomially, $\varphi_D(t) = O(|t|^{-4})$ remains well above the trimming threshold at the moderate-to-high frequencies where the bimodal information resides: at the frequency argument $v = u/X_i = 3$, we have $|\varphi_D(v)| \approx 0.07$ under Laplace errors, compared to $8\times 10^{-5}$ under Gaussian errors---a ratio of approximately $865$.  All tuning parameters are as in specification~(c), and cross-validation again selects the tuning parameters jointly.  Across $100$ replications, the procedure selects $S=7$ in $91$ replications (with occasional values of $9$ or $11$) and $k=30$ in $84$ replications, with approximately $30.7\%$ of observations classified as stayers.

Panel~(d) of Figure~\ref{fig:sim_all} displays the results under specification~(d).  The bimodality is recovered substantially better: the pointwise median tracks both modes and the trough between them, and the interquartile range is markedly tighter than under specification~(c).  The comparison between panels~(c) and~(d) is particularly clean because the cross-validation procedure selects essentially the same tuning parameters in both specifications---$k=30$ and $S=7$ in the large majority of replications---yet produces strikingly different outcomes.  Since $f_\beta$ is identical in both
specifications, this confirms that the attenuation observed under specification~(c) is a consequence of the exponential decay of $\varphi_D$---a well-known limitation of deconvolution with supersmooth noise \citep[see, e.g.,][]{Fan1991}---rather than a deficiency of the sieve estimator or the adaptive bandwidth procedure.  Under Gaussian errors, the deconvolution ratio $\hat\varphi_{\tTwo Y\mid X}/\hat\varphi_D$ is dominated by noise at moderate-to-high frequencies, so the CV-selected $S=7$ reflects the highest sieve dimension that avoids fitting this noise.  Under Laplace errors, the polynomial decay of $\varphi_D$ preserves
the high-frequency content of $\widehat m_N(u)$ that encodes the bimodal
structure, so the same sieve dimension $S=7$ now captures genuine features of~$f_\beta$.


\section{Application}
\label{sec:application}

We revisit the Nicaraguan calorie-demand application in \citet{GrahamPowell2012}.
The data consist of a balanced panel of $N=1{,}358$ poor rural households observed
in 2000, 2001, and 2002 in communities covered by the conditional cash transfer
program \emph{Red de Protecci\'on Social} (RPS). Data construction is described in
\citet{GrahamPowell2012} and the references therein.

\subsection{Model and object of interest}

Let \(r \in \{0,1\}\) denote the RPS regime, where \(r=1\)
corresponds to assignment to receipt of RPS transfers and \(r=0\) corresponds to the
no-RPS regime. For each household \(i\), define the potential structural
calorie equation
\begin{equation}\label{eq:app_model}
\log \text{Cal}_{is}(r,x)
=
\alpha_i(r)+\beta_i(r)x+u_{is}(r),
\qquad
x \in \mathcal X,
\qquad
s=2000,\;2001,\;2002,
\end{equation}
where \(x\) denotes a counterfactual value of log real per-capita total
household expenditure. The coefficient \(\beta_i(r)\) is the
household-specific elasticity of calorie intake with respect
to expenditure under regime \(r\). It is a causal derivative of the
potential outcome schedule with respect to expenditure, holding the RPS
regime fixed. 

Within each RPS stratum, we suppress the regime argument, so that \(\beta_i\) should be read as \(\beta_i(r)\) for households
observed under regime \(r\). Consequently, when the estimator is applied
separately to RPS recipients and non-recipients, the corresponding
densities estimate \(f_{\beta(1)}\) and \(f_{\beta(0)}\), respectively.

The raw panel has three time periods, \(s=2000,2001,2002\). After
fixed-effect elimination, for each adjacent pair of periods and within a
fixed RPS regime \(r\),
\begin{equation}\label{eq:app_fd}
Y_{it}(r)
=
X_{it}(r)\beta_i(r)+U_{it}(r),
\qquad
t=1,2,
\end{equation}
where \(t=1\) corresponds to 2000--2001 and \(t=2\) to 2001--2002, with
\begin{align*}
Y_{it}(r) &=
\log \text{Cal}_{is}(r)-\log \text{Cal}_{i,s-1}(r),
\\
X_{it}(r)
&=
\log \text{Exp}_{is}(r)-\log \text{Exp}_{i,s-1}(r),
\\
U_{it}(r)
&=
u_{is}(r)-u_{i,s-1}(r).
\end{align*}
Within each RPS stratum, we again suppress the regime argument and write
\[
Y_{it}=X_{it}\beta_i+U_{it}.
\]
The regressor is the change in log expenditure between adjacent periods,
so that
\[
X_{it}
\approx
\frac{\text{Exp}_{is}-\text{Exp}_{i,s-1}}
     {\text{Exp}_{i,s-1}}
\]
represents the approximate percentage change in household expenditure.
Each year pair, taken separately, is a scalar irregular design with
\((T,p,q)=(1,1,0)\). Stacking the two differenced periods,
\(Y_i=(Y_{i1},Y_{i2})'\), yields the regular design with
\((T,p,q)=(2,1,0)\).

Our objects of interest are the regime-specific cross-sectional density
functions \(f_{\beta(r)}\), \(r\in\{0,1\}\). These densities summarize
heterogeneity in household-specific structural calorie-expenditure
elasticities under each RPS regime. In the full sample, the corresponding
object is the distribution of the observed-regime elasticity
\[
\beta_i^{\mathrm{obs}}
=
\beta_i(R_i),
\]
where \(R_i\) denotes the observed RPS status of household \(i\). Thus
the full-sample density is an observed-regime mixture, whereas the
subsample densities estimate \(f_{\beta(1)}\) and \(f_{\beta(0)}\)
separately.

We estimate the density of the observed-regime elasticity in the full
sample and estimate the regime-specific densities separately for RPS
recipients, \(\mathrm{RPS}=1\), and non-recipients,
\(\mathrm{RPS}=0\). For the full sample, we report both the scalar
irregular estimator, applied separately to the two adjacent year pairs,
and the regular estimator, which pools the two differenced periods. For
the subsample analysis, we apply the scalar irregular estimator
separately within each RPS stratum.

Because RPS assignment was randomized at the community level and take-up
was high, comparisons of the estimated densities across RPS strata can be
interpreted as causal contrasts between the regime-specific distributions, subject to the maintained CRC
assumptions. Formally, the experimental design identifies an
assignment-induced contrast in structural elasticity distributions; high
take-up supports interpreting this contrast as receipt-regime contrast.

Allowing \(\beta_i(r)\) to be correlated with the path of expenditure
growth \(X_i(r)=(X_{i1}(r),X_{i2}(r))'\) is natural in this application.
Transfer amounts varied with household composition, so households
receiving larger transfers experienced larger expenditure changes
\citep{MaluccioFlores2005}. The same households plausibly faced greater
unmet caloric needs and therefore had stronger incentives to allocate
additional resources toward calorie-dense staple consumption. In
addition, households closer to caloric subsistence likely had both larger
expenditure shocks in proportional terms and larger marginal calorie
responses to additional resources. These considerations motivate a
correlated random coefficient specification rather than an exogenous
random coefficient model.

Within each RPS regime, Assumption 1(ii) requires that the transitory
component \(U_{it}(r)\) is independent of
\((\beta_i(r),X_i(r))\), where
\(X_i(r)=(X_{i1}(r),X_{i2}(r))'\) denotes the vector of expenditure
changes under regime \(r\). In the realized stratum-specific equation,
where the regime argument is suppressed, this restriction becomes the
requirement that \(U_{it}\) is independent of \((\beta_i,X_i)\).

This restriction should be interpreted as a decomposition of the outcome
into systematic and idiosyncratic components. The model allows for
arbitrary dependence between the structural elasticity \(\beta_i(r)\) and
the expenditure path \(X_i(r)\), so that systematic co-movement between
expenditure dynamics and calorie demand may operate through heterogeneous
behavioral responses. The disturbance \(U_{it}(r)\) is then the residual
component that remains after accounting for this heterogeneous structure,
and is interpreted as an idiosyncratic noise term.

Under this interpretation, Assumption 1(ii) rules out residual dependence
between the unexplained component and the regressors within each RPS
regime, but does not preclude economically meaningful dependence between
expenditure and calorie demand more generally. Instead, it requires that
such dependence be captured by the distribution of \(\beta_i(r)\), rather
than by the disturbance \(U_{it}(r)\).

\subsection{Data}

Figure~\ref{fig:histograms} reports histograms of changes in log expenditure and
log calorie intake over the two differenced periods. A central feature of the data
for the scalar irregular design is that the distribution of
$X_i=\Delta\log(\text{Exp}_i)$ is centered near zero, with substantial mass in a
neighborhood of the origin. This is important because first-step identification in
the scalar irregular design relies on stayers, that is, households with
$|X_i|$ close to zero. Beyond this concentration near zero, the expenditure-change
distributions are heavy-tailed. The distributions of calorie changes are
unimodal and fat-tailed.

\begin{figure}[h!]
    \centering
    \includegraphics[width=0.9\linewidth]{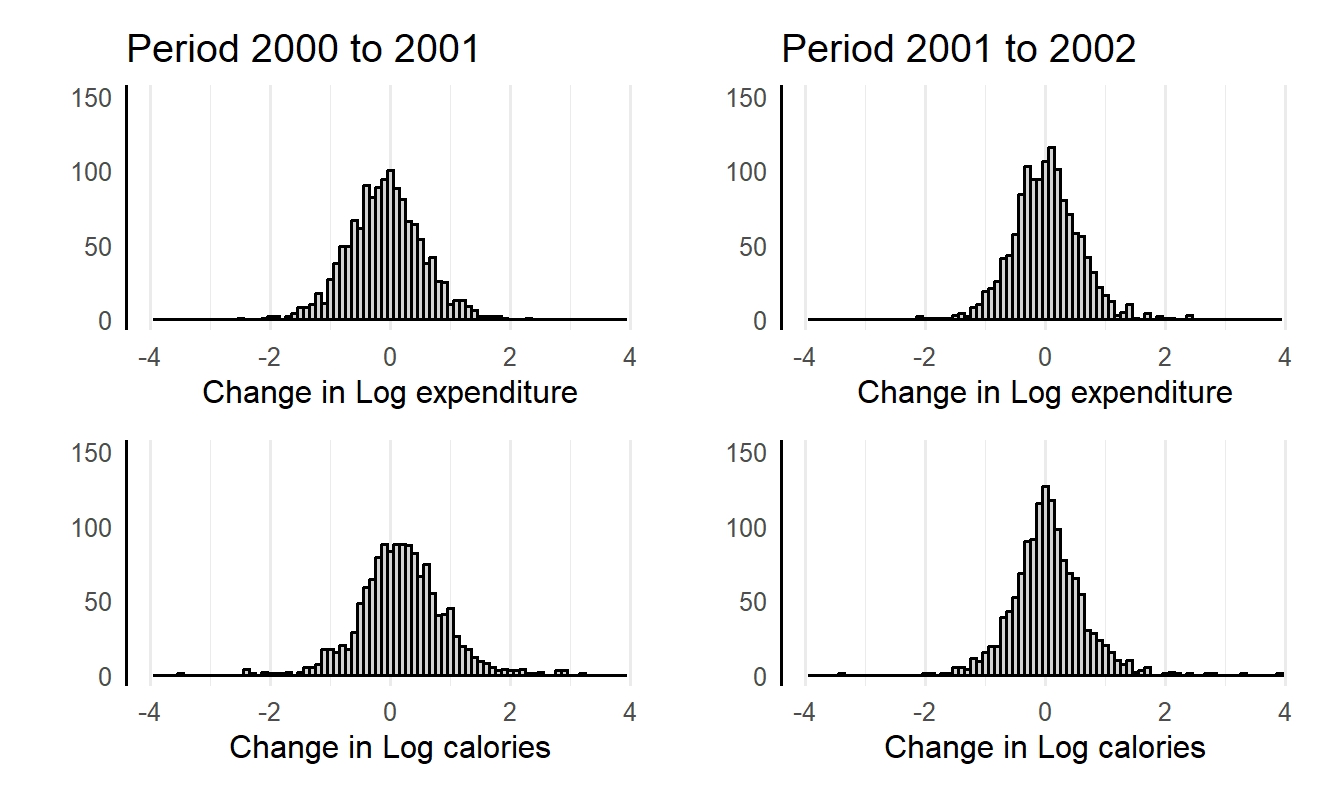}
    \caption{Histograms of changes in log expenditure (top panels) and log calorie
    intake (bottom panels), by differenced period.}
    \label{fig:histograms}
\end{figure}

\subsection{Estimation}

We implement both the scalar irregular estimator, which treats the two 
adjacent year pairs separately as $(T, p, q) = (1, 1, 0)$ designs, and 
the regular estimator, which pools them as a $(T, p, q) = (2, 1, 0)$ 
design.

\paragraph{Scalar irregular estimator.}
The scalar irregular estimator exploits the presence of stayers, or households whose expenditure changed by less than a threshold $\tau_x$ in absolute value, to identify the 
characteristic function of the transitory component $D_i$ along a ray. 
This characteristic function is then used to deconvolve the distribution 
of $\beta_i$ from the conditional characteristic function of the 
transformed outcome for movers ($|X_i| \geq \tau_x$).

In this application, the stayer condition $|X_{it}| < \tau_x$ has a 
direct economic interpretation: households are classified as stayers 
if their expenditure changed by less than approximately $100 \times \tau_x$ 
percent between adjacent survey waves. The threshold $\tau_x$ is set 
according to a data-driven rule that scales with sample size and the 
dispersion of $X_i$; the resulting stayer proportions range from 7 to 10 
percent across specifications, as reported in 
Table~\ref{tab:tuning_parameters}.

\paragraph{Regular estimator.}
When both differenced periods are stacked, the regressor matrix 
$X_i = (X_{i1}, X_{i2})' \in \mathbb{R}^2$ has $T = 2 > p = 1$, placing the 
problem in the regular design. The first-step transformation annihilates 
$\beta_i$ for every observation, yielding a scalar equation from which 
the characteristic function $\varphi_D$ is estimated by smoothing over 
normalized directions on $\mathbb{S}^1$. The second step recovers $f_\beta$ 
by deconvolution via the sieve minimum distance procedure.

\paragraph{Tuning parameter selection.}
In all cases, tuning parameters are selected by the cross-validation 
procedure of Algorithm~\ref{alg:pcv_tuning}. The selected values are 
summarized in Table~\ref{tab:tuning_parameters}. Details on the implementation, 
including the formula for $\tau_x$ and the cross-validation grid, can 
be found in Section~\ref{sec:applicationestimation}.

\paragraph{Inference.}
Sampling uncertainty is assessed using a nonparametric pairs bootstrap 
with $B = 499$ replications. In the scalar irregular design, each 
bootstrap draw resamples the differenced observations corresponding to 
the relevant year pair and recomputes the estimator. The stayer 
threshold $\tau_x$ and the first-stage bandwidth $h_0$ are held fixed 
at their original-sample values, while the second-stage tuning parameters 
$(h_x, S)$ are also fixed at their original cross-validation choices. 
In the regular design, each bootstrap draw resamples the household-level 
observations $(Y_{i1}, Y_{i2}, X_{i1}, X_{i2})$ and recomputes the 
estimator with the first-stage tuning parameter $h_S$ fixed at its 
original value and $(h_X, S)$ fixed at their original cross-validation 
selections.

Confidence intervals for the density are constructed using the basic 
(reverse-percentile) bootstrap.\footnote{We emphasize that this procedure is a practical approximation. 
Theoretical validity of the bootstrap for deconvolution density 
estimators has not been established in the literature, and extends 
\textit{a fortiori} to the two-stage sieve minimum distance estimator 
with post-processing employed here. The reported confidence bands 
should therefore be interpreted as indicative measures of sampling 
variability rather than as intervals with guaranteed asymptotic 
coverage.} For each evaluation point $b$, let 
$\hat{f}_\beta(b)$ denote the estimator of $f_\beta(b)$ and let 
$\{\hat{f}_\beta^{*(r)}(b)\}_{r=1}^B$ denote the bootstrap draws. The 
$(1 - \alpha)$ confidence interval for $f_\beta(b)$ is given by
\[
\left[\, \hat{f}_\beta(b) - q_{1-\alpha/2}(b), \; 
\hat{f}_\beta(b) - q_{\alpha/2}(b) \,\right],
\]
where $q_p(b)$ denotes the $p$-th quantile of the centered bootstrap 
distribution $\hat{f}_\beta^*(b) - \hat{f}_\beta(b)$. The reported 
confidence bands are obtained by applying this procedure pointwise 
over $b$ and therefore do not provide uniform coverage. Inference is 
conducted conditional on the selected tuning parameters and does not 
account for additional uncertainty arising from their estimation.

\subsection{Results}

\paragraph{Moments.}
Table~\ref{tab:moments_combined} reports estimates of the mean and variance
of the cross-sectional distribution of $\beta_i$, together with bootstrap
standard errors, for the full sample and by RPS status.

Across specifications, the mean elasticity is positive and statistically
significant, consistent with \citet{GrahamPowell2012}. In the full sample,
the regular estimator yields $\widehat{\mathbb{E}}[\beta_i] = 0.677$
(s.e.\ $0.035$), which lies between the irregular estimates for
2000--2001 ($0.684$) and 2001--2002 ($0.467$) but is much closer to the
former. The regular estimator is roughly twice as precise as either
irregular estimate, reflecting both pooling across periods and the
imposition of time-invariance: rather than identifying $f_\beta$
separately from each pair of differenced periods using only stayer
households for the denominator characteristic function, the regular
estimator uses all $N$ households and the full collinearity
structure of $X_i'X_i$, eliminating the leading source of variance in
the irregular design.

All specifications exhibit substantial dispersion. In the full sample,
the regular estimator gives $\widehat{\mathrm{Var}}[\beta_i] = 0.591$
(s.e.\ $0.017$), while the irregular estimates yield $0.765$ and $0.535$,
albeit with markedly larger sampling variability (s.e.\ $0.188$ and $0.224$,
respectively). The implied standard deviations range from $0.73$ to $0.88$
across designs, indicating considerable heterogeneity in marginal calorie
responses across households, including within each RPS stratum.

The irregular estimates suggest an intertemporal shift concentrated among
recipients. In the full sample, the mean declines from $0.684$ to $0.467$,
but this difference is not statistically significant at conventional levels.
Among non-recipients, the mean declines modestly from $0.726$
(s.e.\ $0.161$) to $0.606$ (s.e.\ $0.082$), a change that is also not
statistically significant. In contrast, for RPS recipients, the mean
falls sharply from $0.787$ (s.e.\ $0.086$) to $0.518$ (s.e.\ $0.099$),
a difference that is statistically significant.

Across strata, the regular estimator places the recipient mean above the
non-recipient mean ($\widehat{\mathbb{E}}[\beta_i \mid \mathrm{RPS}=1] =
0.698$ vs.\ $0.625$ for $\mathrm{RPS}=0$), with a difference of $0.073$
that is not statistically significant
(s.e.\ of the difference $\approx 0.065$ under the standard
independent-subsamples approximation, since the two pairs bootstraps draw
from disjoint household sets). The regular variance estimates are
$\widehat{\mathrm{Var}}[\beta_i \mid \mathrm{RPS}=1] = 0.635$
(s.e.\ $0.027$) and $\widehat{\mathrm{Var}}[\beta_i \mid \mathrm{RPS}=0]
= 0.586$ (s.e.\ $0.028$); the cross-stratum difference of $0.049$ is
likewise within one standard error of zero. Thus, while the point
estimates suggest that recipients have a slightly higher mean and a
slightly more dispersed distribution of marginal calorie responses than
non-recipients, the regular estimator does not detect a statistically
distinguishable shift in either moment.

\paragraph{Densities.}
Figures~\ref{fig:irregular_densities} and \ref{fig:regular_densities}
report the corresponding density estimates.

\textit{Irregular estimates.} In the full sample (panel c), both
period-specific densities are unimodal with modes near $0.5$, and exhibit
substantial overlap, consistent with the failure to reject time-invariance
at conventional significance levels. The 2001--2002 density is modestly
shifted leftward relative to the 2000--2001 density. This shift reflects
a change in location of the distribution, with the mode and central mass
shifting leftward, rather than changes confined to the tails.

Among recipients (panel b), the shift is more pronounced: the 2000--2001
density has a mode near $0.6$, while the 2001--2002 density shifts leftward
with a mode near $0.4$--$0.5$. The bootstrap bands for the two periods show
less overlap than in the full sample, consistent with the statistically
significant decline in the mean. Among non-recipients (panel a), the
densities are similar across periods, with substantially overlapping
bootstrap bands, consistent with time-invariant elasticities in this group.

\textit{Regular estimates.} The regular estimator yields a unimodal density
centered near $0.5$ with substantially narrower bootstrap
bands than the irregular estimates, reflecting the efficiency gains from
pooling and from imposing time-invariance. Comparing strata
(Figure~\ref{fig:regular_densities}), the two density estimates are
remarkably close: both peak in the interval $[0.4, 0.6]$ at
heights near $0.50$, and the $90\%$ pointwise bootstrap bands overlap
across essentially the entire support. The recipient density has slightly
more mass in the right tail and slightly less
mass at the mode, consistent with its larger mean and variance reported
in Table~\ref{tab:moments_combined}, but neither feature is large relative
to sampling uncertainty. 

The regular design provides little evidence of a large
regime-level shift in the pooled elasticity distribution. Its main implication
is instead that calorie--expenditure elasticities are heterogeneous. In light
of the period-specific irregular estimates, however, the regular estimator
should be interpreted as a common-\(\beta_i\) benchmark: if elasticities change
with cumulative RPS exposure, the regular estimator necessarily averages over,
and may mask, those exposure-specific changes.

\paragraph{Reconciling the literature.} The estimated density of the calorie--expenditure elasticity $\beta_i$ reveals substantial heterogeneity in household responses to changes in expenditure. While the average elasticity is positive and of the same order as those commonly reported in the literature, the distribution exhibits significant mass near zero as well as non-negligible probability on negative values.

This pattern helps reconcile seemingly conflicting findings in the development literature. Early studies document small calorie--expenditure elasticities, often close to zero 
\citep{BehrmanDeolalikar1987,BouisHaddad1992,Ravallion1990}, whereas \citet{SubramanianDeaton1996} report elasticities in the range of 0.3--0.5 and argue against the view that calorie responses are negligible. Our results suggest that these differences may reflect aggregation: the mean elasticity masks substantial dispersion in individual responses.

A natural interpretation of the heterogeneity in $\beta_i$ is that households adjust along both quantity and quality margins. For some 
households, particularly those facing caloric constraints, higher 
expenditure translates into increased calorie intake, corresponding to 
positive values of $\beta_i$. For others, however, expenditure increases 
are primarily allocated toward higher-quality, more diverse foods rather 
than additional calories. This compositional shift, which is well 
documented in the literature \citep{Deaton1997,JensenMiller2008,Skoufias2011}, 
weakens the relationship between income and calorie intake and can generate 
elasticities close to zero or even negative.

The presence of negative support in the estimated density indicates that, for a subset of households, 
increases in expenditure are associated with reductions in calorie intake, 
consistent with substitution away from inexpensive, calorie-dense staples 
toward foods with lower caloric density per unit of expenditure. Such 
behavior is also consistent with broader evidence on dietary transitions 
and changing consumption patterns as living standards improve 
\citep{DeatonDreze2009}.

Overall, the distributional evidence underscores the limitations of 
focusing exclusively on average elasticities. By recovering the full 
distribution of $\beta_i$, the analysis reveals that the relationship 
between expenditure and nutrition is fundamentally heterogeneous, possibly
reflecting differences in constraints, preferences, and consumption 
margins across households.

\begin{figure}[ht]
\centering
\begin{subfigure}{0.45\textwidth}
  \centering
  \includegraphics[width=\linewidth]{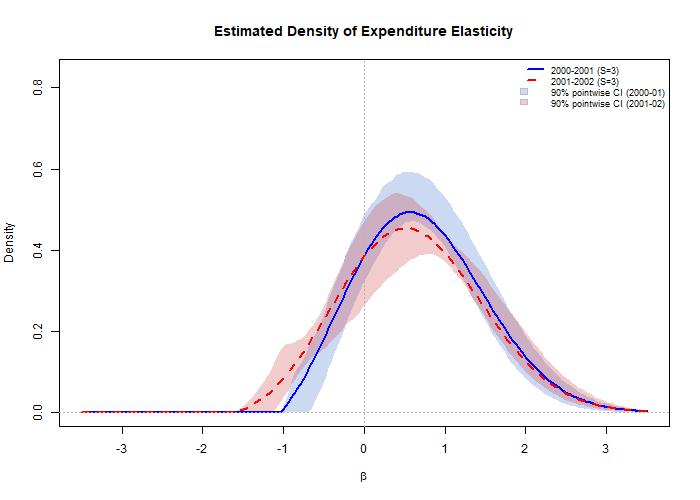}
  \caption{RPS $=0$}
\end{subfigure}
\hfill
\begin{subfigure}{0.45\textwidth}
  \centering
  \includegraphics[width=\linewidth]{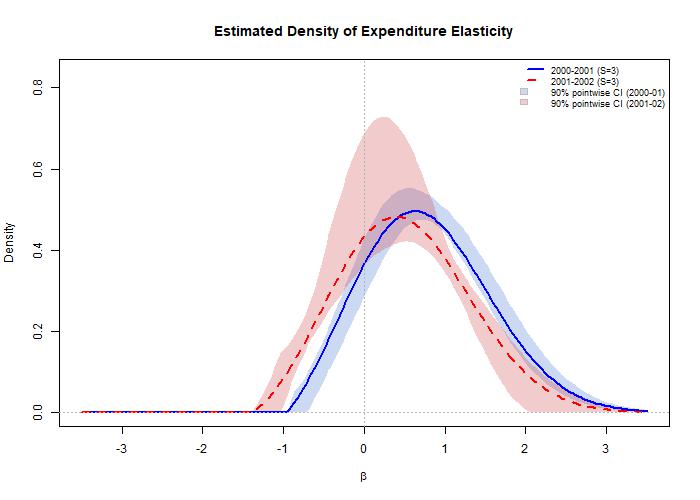}
  \caption{RPS $=1$}
\end{subfigure}

\vspace{0.5em}
\begin{subfigure}{0.5\textwidth}
  \centering
  \includegraphics[width=\linewidth]{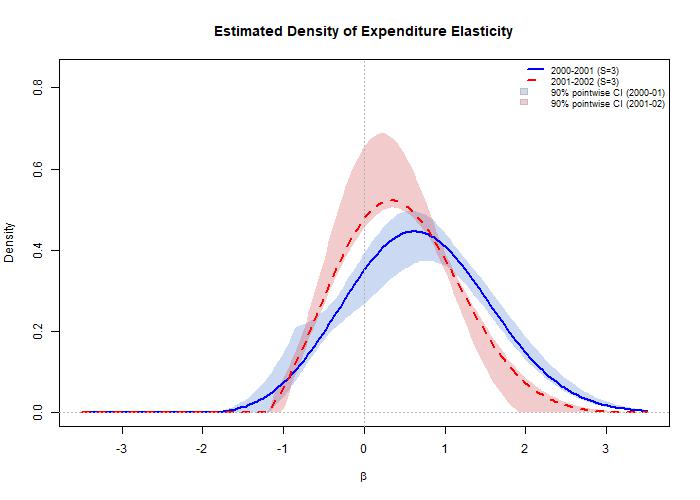}
  \caption{Full sample}
\end{subfigure}

  \caption{Scalar irregular density estimates of the calorie--expenditure
  elasticity $\beta_i$. Panels~(a) and~(b) use the RPS$\,{=}\,0$ ($N=653$) and RPS$\,{=}\,1$ 
  ($N=705$) subsamples, respectively; panel~(c) uses the full sample ($N=1{,}358$). Within each panel, solid blue and 
  dashed red lines correspond to 2000--2001 and 2001--2002; shaded regions are pointwise 90\% 
  centered (basic) bootstrap confidence bands based on $B=499$ pairs bootstrap replications. 
  Tuning parameters $(\tau_x, h_0, h_x, S)$ are held fixed at their original-sample values 
  across bootstrap draws. Densities are post-processed by truncating negative values at zero 
  and renormalizing to integrate to one.}
  \label{fig:irregular_densities}
\end{figure}

\begin{figure}[h!]
  \centering
  \begin{subfigure}{0.48\textwidth}
    \centering
    \includegraphics[width=\linewidth]{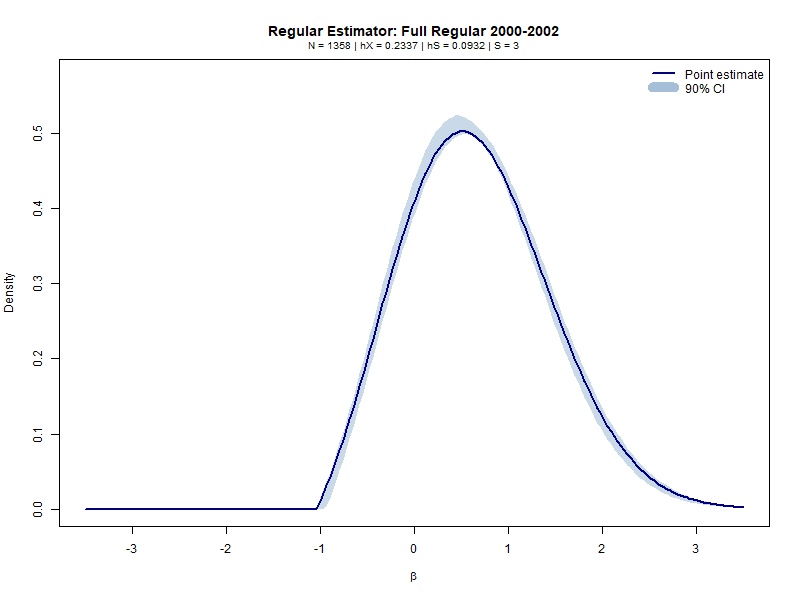}
    \caption{Full sample}
  \end{subfigure}
  \hfill
  \begin{subfigure}{0.48\textwidth}
    \centering
    \includegraphics[width=\linewidth]{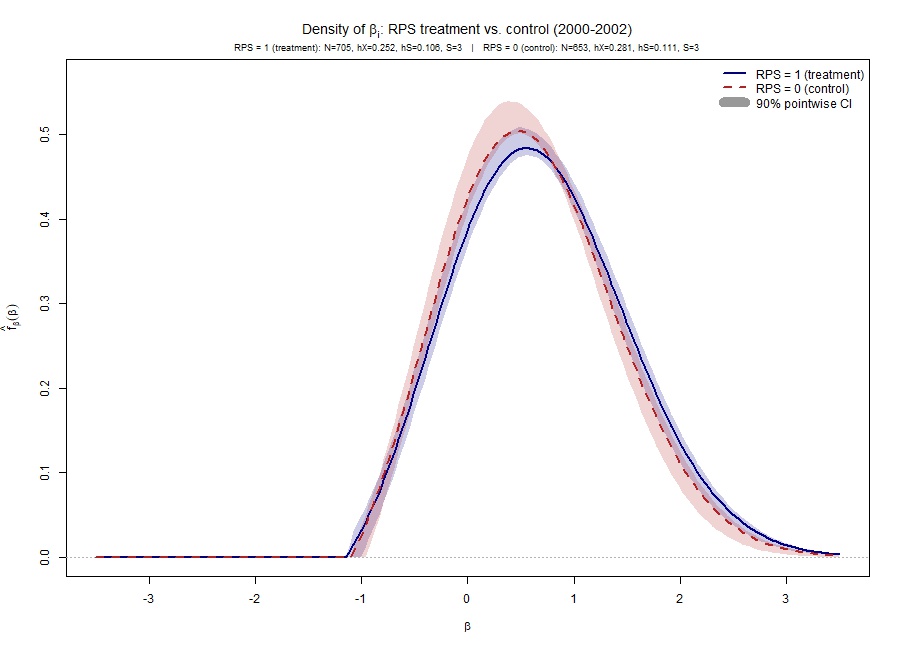}
    \caption{RPS $=1$ vs RPS $=0$}
  \end{subfigure}

  \vspace{0.5em}

  \caption{Regular density estimates of the calorie--expenditure elasticity
  $\beta_i$, pooling 2000--2001--2002. Panel~(a) shows the full sample 
  ($N=1{,}358$). Panel~(b) compares the RPS$\,{=}\,1$ subsample ($N=705$, 
  solid blue) with the RPS$\,{=}\,0$ subsample ($N=653$, dashed red). 
  Dotted lines are bootstrap medians; shaded regions are pointwise 
  $[5\%,95\%]$ bootstrap bands based on $B=499$ replications.}
  \label{fig:regular_densities}
\end{figure}

\begin{table}[ht]
\centering
\caption{Moments of the cross-sectional distribution of $\beta_i$}
\label{tab:moments_combined}
\renewcommand{\arraystretch}{1.15}
\begin{tabular}{lcccc}
\toprule
& Stayers (\%) & $\widehat{\mathbb{E}}[\beta_i]$ & $\widehat{\mathrm{Var}}[\beta_i]$ & $\widehat{\mathrm{SD}}[\beta_i]$ \\
\midrule
\multicolumn{5}{l}{\textit{Full sample}} \\[2pt]
Regular, 2000--2002
  & ---  & 0.677\,(0.035) & 0.591\,(0.017) & 0.769\,(0.011) \\
Irregular, 2000--2001
  & 8.1 & 0.684\,(0.081) & 0.765\,(0.188) & 0.875\,(0.107) \\
Irregular, 2001--2002
  & 7.1 & 0.467\,(0.074) & 0.535\,(0.224) & 0.732\,(0.153) \\
\addlinespace
\multicolumn{5}{l}{\textit{RPS $=1$}} \\[2pt]
Regular, 2000--2002
  & ---  & 0.698\,(0.042) & 0.635\,(0.027) & 0.797\,(0.017) \\
Irregular, 2000--2001
  & 9.8 & 0.787\,(0.086) & 0.604\,(0.128) & 0.777\,(0.082) \\
Irregular, 2001--2002
  & 9.1 & 0.518\,(0.099) & 0.642\,(0.474) & 0.801\,(0.296) \\
\addlinespace
\multicolumn{5}{l}{\textit{RPS $=0$}} \\[2pt]
Regular, 2000--2002
  & ---  & 0.625\,(0.050) & 0.586\,(0.028) & 0.766\,(0.018) \\
Irregular, 2000--2001
  & 9.6 & 0.726\,(0.161) & 0.610\,(0.177) & 0.781\,(0.113) \\
Irregular, 2001--2002
  & 9.5 & 0.606\,(0.082) & 0.729\,(0.236) & 0.854\,(0.138) \\
\bottomrule
\end{tabular}
\vspace{0.5em}

\footnotesize
\emph{Notes:} Bootstrap standard errors in parentheses, based on $B = 499$ nonparametric pairs bootstrap replications. Each bootstrap draw recomputes the estimator while holding tuning parameters fixed at their original-sample values, including cross-validation selections. Standard errors are computed as the sample standard deviation of the bootstrap estimates; standard errors for $\widehat{\mathrm{SD}}[\beta_i]$ use the delta method. ``Stayers'' denotes the percentage of households classified as stayers in the irregular design. The regular estimator pools all periods and does not rely on a stayer partition. Inference is conditional on the selected tuning parameters and does not account for additional uncertainty arising from their estimation.
\end{table}

\FloatBarrier

\section{Conclusion}

This paper develops identification and estimation results for the distribution of correlated random coefficients in short panel data models, without imposing restrictions on serial dependence in the error terms. We show that the structure of the panel model leads to two distinct identification strategies, depending on whether the design is regular or irregular, and we provide corresponding estimators based on characteristic-function methods. In irregular designs, identification relies on limit arguments near singular regressor realizations, while in regular designs it follows from orthogonal projections. In both cases, estimation proceeds via a regularized deconvolution step combined with a Hermite sieve minimum-distance approximation.

Our analysis highlights that the statistical difficulty of the problem is driven by the first-stage deconvolution, which involves estimating a characteristic function at random, regressor-dependent frequencies and controlling instability arising from near-zero denominators. The resulting estimator exhibits a nonstandard bias--variance tradeoff, governed by the interaction between the stayer approximation, Fourier-domain regularization, and sieve approximation.

Our empirical application to the elasticity of calorie expenditure contributes to an ongoing debate in the development economics literature on whether households are calorie constrained or instead engage in quality substitution as income rises. By recovering the full distribution of heterogeneous elasticities, rather than focusing on average effects alone, our framework allows for a richer characterization of household behavior. In particular, it permits distinguishing between populations for which calorie intake responds strongly to expenditure changes and those for which additional resources are allocated toward higher-quality, non-caloric food consumption. This distributional perspective complements existing approaches and provides new evidence on the extent and nature of heterogeneity underlying the aggregate elasticity estimates commonly reported in the literature.

An important direction for future work is the development of inference procedures that accommodate both generated moment conditions and random frequency dispersion. In our setting, moment conditions naturally depend on first-step estimates of the disturbance characteristic function, introducing additional sampling variation and bias. At the same time, the relevant characteristic functions are evaluated at effective frequencies of the form $u/X_i$, so that information is dispersed according to the distribution of $1/X_i$. This dispersion governs the stability and informativeness of the empirical moments and complicates inference relative to standard deconvolution settings. Designing inference procedures that remain robust to these features is a promising avenue for further research.

\appendix

\section{Proofs}

\subsection{Proof of Theorem \ref{main_result}}
\label{appendix_proof_thm1}

The proof proceeds in two steps. 

\medskip
\noindent\emph{Step 1.} We first identify $\varphi_D$. We proceed with the irregular design, and continue with the regular design.

\medskip
\noindent\textit{Irregular case ($T=p$).}
Take $\tau_1(X_i)=\operatorname{adj}(X_i)$, so that
\[
  \tau_1(X_i)X_i = \det(X_i)I_p,
\]
by the standard adjugate identity.
Fix $v\in\mathbb{R}$ and any $\lambda$ in the conditional support of
$\widetilde{W}_i'\mu_i$ given $\det(X_i)=0$.
Using $\widetilde{Y}_i = \tau_1(X_i)Y_i = \tau_1(X_i)X_i\beta_i
+ \tau_1(X_i)W_iD_i$, we obtain
\[
  \mu_i'\widetilde{Y}_i
  =
  \mu_i'\tau_1(X_i)X_i\beta_i
  +
  \mu_i'\widetilde{W}_iD_i.
\]
Since $T=p$ and $\tau_1(X_i)X_i=\det(X_i)I_p$,
\[
  \mu_i'\tau_1(X_i)X_i\beta_i = \det(X_i)\mu_i'\beta_i,
\]
and therefore
\[
  \exp\!\bigl(\ii v\,\mu_i'\widetilde{Y}_i\bigr)
  =
  \exp\!\bigl(\ii v\det(X_i)\mu_i'\beta_i\bigr)\,
  \exp\!\bigl(\ii v\,\mu_i'\widetilde{W}_iD_i\bigr).
\]
Conditioning on $\{\widetilde{W}_i'\mu_i=\lambda,\,\det(X_i)=0\}$,
the first factor equals $1$, while
$\mu_i'\widetilde{W}_iD_i = (\widetilde{W}_i'\mu_i)'D_i = \lambda'D_i$.
Hence
\[
  \mathbb{E}\!\left[
    e^{\ii v\mu_i'\widetilde{Y}_i}
    \,\middle|\,
    \widetilde{W}_i'\mu_i=\lambda,\;
    \det(X_i)=0
  \right]
  =
  \mathbb{E}\!\left[e^{\ii v\lambda'D_i}\right]
  =
  \varphi_D(v\lambda),
\]
where the equality
$\mathbb{E}[e^{\ii v\lambda'D_i}\mid \widetilde{W}_i'\mu_i=\lambda,
\det(X_i)=0] = \mathbb{E}[e^{\ii v\lambda'D_i}]$
uses the independence of $D_i$ from $(X_i,W_i)$ in
Assumption~\ref{assm:A1}(ii), and Assumption~\ref{support}(i) justifies
the interpretation of the conditioning event $\{\det(X_i)=0\}$ by
continuity, as $\det(X_i)\to 0$.

This identifies $\varphi_D(v\lambda)$ for every $v\in\mathbb{R}$ and
every $\lambda$ in the conditional support of $\widetilde{W}_i'\mu_i$
given $\det(X_i)=0$.
By Assumption~\ref{assm:A1_double},
$\operatorname{supp}(S_i\mid\det(X_i)=0)=\mathbb{S}^{d-1}$, so
$\varphi_D$ is identified on a dense subset of $\mathbb{R}^d$.
Since $\varphi_D$ is continuous as a characteristic function, it is
therefore identified on all of $\mathbb{R}^d$.

\medskip
\noindent\textit{Regular case ($T>p$).}
Under Assumption~\ref{support}(ii), $\operatorname{rank}(X_i)=p$ almost surely,
so $\tau_1(X_i)X_i=0$ almost surely by construction of $\tau_1$.
Hence
\[
  \mu_i'\widetilde{Y}_i
  =
  \mu_i'\widetilde{W}_iD_i
  =
  (\widetilde{W}_i'\mu_i)'D_i.
\]
Conditioning on $\{\widetilde{W}_i'\mu_i=\lambda\}$ and using the independence
of $D_i$ from $(X_i,W_i)$ in Assumption~\ref{assm:A1}(ii) gives
\[
  \mathbb{E}\!\left[
    e^{\ii v\mu_i'\widetilde{Y}_i}
    \,\middle|\,
    \widetilde{W}_i'\mu_i=\lambda
  \right]
  =
  \mathbb{E}\!\left[e^{\ii v\lambda'D_i}\right]
  =
  \varphi_D(v\lambda).
\]
As in the irregular case, by Assumption~\ref{assm:A1_double}, $\operatorname{supp}(S_i)=\mathbb{S}^{d-1}$, so $\varphi_D$ is identified
on $\mathbb{R}^d$.

\medskip
\noindent\emph{Step 2.} We identify $f_\beta$. 

Since $\widetilde{\widetilde{Y}}_i = \beta_i + \widetilde{\widetilde{W}}_iD_i$, Assumption~\ref{assm:A1}(ii) implies two things.
First, $D_i\perp(X_i,W_i)$, so
$\varphi_{\widetilde{\widetilde{W}}D\mid X,W}(u\mid X_i,W_i)
= \varphi_D(\widetilde{\widetilde{W}}_i'u)$ for almost every $(X_i,W_i)$.
Second, $\beta_i\perp D_i\mid(X_i,W_i)$, so the characteristic function of
$\widetilde{\widetilde{Y}}_i$ given $(X_i,W_i)$ factorizes as
\[
  \varphi_{\widetilde{\widetilde{Y}}\mid X,W}(u\mid X_i,W_i)
  =
  \varphi_{\beta\mid X,W}(u\mid X_i,W_i)\,
  \varphi_{\widetilde{\widetilde{W}}D\mid X,W}(u\mid X_i,W_i),
\]
for almost every $(X_i,W_i)$.
By Assumption~\ref{assm:A1_prime} ($\varphi_D$ nowhere zero), the denominator
is nonzero for all $u\in\mathbb{R}^p$, so
\[
  \varphi_{\beta\mid X,W}(u\mid X_i,W_i)
  =
  \frac{%
    \varphi_{\widetilde{\widetilde{Y}}\mid X,W}(u\mid X_i,W_i)
  }{%
    \varphi_{\widetilde{\widetilde{W}}D\mid X,W}(u\mid X_i,W_i)
  }.
\]
Assumption~\ref{Fourier_inversion} ensures that this ratio is absolutely
integrable in $u\in\mathbb{R}^p$ for almost every $(X_i,W_i)$, so Fourier
inversion yields the conditional density $f_{\beta\mid X,W}(\cdot\mid X_i,W_i)$.
Since $f_\beta$ is the marginal density of $\beta_i$ under
Assumption~\ref{assm:A1}(i), we then have that
\[
  f_\beta(b)
  =
  \int f_{\beta\mid X,W}(b\mid x,w)\,dF_{X,W}(x,w),
\]
which obtains ~\eqref{fbeta_main}. This concludes the proof.

\subsection{Proof of Corollary \ref{cor_fD}}
\label{appendix_proof_cor}

Define the observable quantity
\[
  Z_i^*
  \equiv
  \frac{\mu_i'\widetilde{Y}_i}{\|\lambda_i\|}.
\]
In the regular design, $\widetilde{Y}_i = \widetilde{W}_iD_i$ holds 
unconditionally, so $\mu_i'\widetilde{Y}_i = \lambda_i'D_i$ and hence 
$Z_i^* = S_i'D_i$ for all observations. In the irregular design, 
$\widetilde{Y}_i = \det(X_i)\beta_i + \widetilde{W}_iD_i$, so the 
equality $Z_i^* = S_i'D_i$ holds only on the event $\{\det(X_i) = 0\}$.

\medskip
\noindent\emph{Regular design.} Since $S_i$ is measurable with respect to $(X_i,W_i)$ and by  Assumption~\ref{assm:A1}(ii), we have
$D_i\perp S_i$. Hence the conditional distribution
of $Z_i^* = S_i'D_i$ given $S_i=s$ is the distribution of the scalar $s'D_i$, and since $f_D$ exists, its conditional density is
\[
  f_{Z^*\mid S}(u\mid s)
  =
  \int_{\{\delta\,:\,s'\delta=u\}}
    f_D(\delta)\,d\sigma_{s,u}(\delta)
  =
  \mathcal{R}f_D(s,\,u).
\]
Since $Z_i^*$ and $S_i$ are observable, and since $S_i$ has a density bounded away
from zero on $\mathbb{S}^{d-1}$, the conditional density
$f_{Z^*\mid S}(\cdot\mid s)$ is identified for every $s\in\mathbb{S}^{d-1}$.

\medskip
\noindent\emph{Irregular design.} Since $\widetilde{Y}_i =
\widetilde{W}_iD_i$ on the event $\det(X_i)=0$, similar arguments as in the regular case obtain
\[
  f_{Z^*\mid S,\,\det(X_i)=0}(u\mid s)
  =
  \int_{\{\delta\,:\,s'\delta=u\}}
    f_D(\delta)\,d\sigma_{s,u}(\delta)
  =
  \mathcal{R}f_D(s,\,u).
\]

\medskip
\noindent\emph{Inversion.} In both cases, $\mathcal{R}f_D(s,u)$ is identified
for every $s$ in the support of $S_i$. By
Assumption~\ref{assm:A1_double}, this support is all of
$\mathbb{S}^{d-1}$, unconditionally in the regular design and conditionally on
$\det(X_i)=0$ in the irregular design. The additional assumption that the density of
$S_i$ is bounded away from zero uniformly on $\mathbb{S}^{d-1}$ therefore implies
that $\mathcal{R}f_D(s,u)$ is identified for all
$(s,u)\in\mathbb{S}^{d-1}\times\mathbb{R}$.

For each fixed $s\in\mathbb{S}^{d-1}$, the Fourier transform of
$\mathcal{R}f_D(s,\cdot)$ with respect to $u$ equals the characteristic function of
$D_i$ along the ray $\{\omega s:\omega\in\mathbb{R}\}$ by the Fourier slice theorem:
\[
  \int_{\mathbb{R}} e^{\ii \omega u}\,\mathcal{R}f_D(s,u)\,du
  =
  \varphi_D(\omega s),
  \qquad \omega\in\mathbb{R}.
\]
Since $\mathcal{R}f_D(s,u)$ is identified for all $(s,u)$, it follows that
$\varphi_D(\xi)$ is identified for all $\xi\in\mathbb{R}^d$. Because $D_i$ has density
$f_D$, the characteristic function $\varphi_D$ uniquely determines $f_D$. Hence
$f_D$ is identified almost everywhere.

\section{Identification with Discrete Regressors}
\label{ID_with_discrete_regressors}

In this section, we extend the identification results extend discrete $(X_i, W_i)$. The analysis clarifies that identification hinges on a geometric support condition governing the set of directions along which the characteristic function of $D_i$ is observed.

Consider the outcome equation \eqref{eq:outcome_stack_t} and let Assumption \ref{assm:A1} hold. Let $\tau_1(X_i)$ denote the annihilator used in the first step of the identification argument, and consider the transformed random variables, as in the main text,
\[
\widetilde Y_i := \tau_1(X_i) Y_i, \qquad
\widetilde W_i := \tau_1(X_i) W_i.
\]
Additionally, as in Assumption \ref{assm:A1_double}, let $\mu_i = \mu(X_i, W_i)$ be a measurable function taking values in $\mathbb{R}^T$, and define
\[
\lambda_i := \widetilde W_i' \mu_i \in \mathbb{R}^d, \qquad
S_i := \frac{\lambda_i}{\|\lambda_i\|} \in \mathbb{S}^{d-1}.
\]

The first step identifies \(\varphi_D\) on
\[
  \mathcal C_1
  =
  \{v\lambda: v\in\mathbb R,\ \lambda\in\operatorname{supp}(\lambda_i)\}.
\]
The deconvolution step requires \(\varphi_D\) only on the second-step frequency set
\[
  \mathcal C_2
  =
  \{W_i'\tau_2(X_i)'u:
    u\in\mathbb R^p,\ (X_i,W_i)\in\operatorname{supp}(X_i,W_i)\}.
\]
A sufficient condition for compatibility is \(\mathcal C_2\subseteq\overline{\mathcal C_1}\).
The full-support condition \(\operatorname{supp}(S_i)=\mathbb S^{d-1}\) is a simple
sufficient condition because it implies \(\overline{\mathcal C_1}=\mathbb R^d\).

If \(\varphi_D\) is real analytic, values on any nonempty open subset of
\(\mathbb R^d\) determine \(\varphi_D\) globally. 

\subsection*{Discrete regressors} 

In the irregular design, suppose that exact stayers exist with positive probability:
\[
\Pr(\det(X_i)=0) > 0,
\]
and that $\Pr(\lambda_i \neq 0 \mid \det(X_i)=0)=1$. Then $\phi_D$ is identified from the conditional moments
\[
\mathbb{E}\!\left[
\exp\{iv \mu_i' \widetilde Y_i\}
\ \middle|\ 
\widetilde W_i' \mu_i = \lambda,\ \det(X_i)=0
\right]
= \phi_D(v\lambda).
\]
If, in addition,
\[
\overline{\operatorname{supp}\left(
\frac{\widetilde W_i' \mu_i}{\|\widetilde W_i' \mu_i\|}
\ \middle|\ \det(X_i)=0
\right)} = \mathbb{S}^{d-1},
\]
then $\phi_D$ is identified on $\mathbb{R}^d$, and hence $f_\beta$ is point identified.

In the regular design, the annihilator eliminates the $\beta_i$ term for all observations, yielding
\[
\widetilde Y_i = \widetilde W_i D_i.
\]
In this case, if $\Pr(\lambda_i \neq 0)=1$ and
\[
\overline{\operatorname{supp}\left(
\frac{\widetilde W_i' \mu_i}{\|\widetilde W_i' \mu_i\|}
\right)} = \mathbb{S}^{d-1},
\]
then $\phi_D$ and $f_\beta$ are point identified.

\subsection*{Finite support and compatibility}

If \((X_i,W_i)\) has finite support, then for any measurable
\(\mu_i=\mu(X_i,W_i)\), the vector
\[
  \lambda_i=\widetilde W_i'\mu_i
\]
also has finite support. Hence the first step identifies \(\varphi_D\)
only on the finite union of rays
\[
  \mathcal C_1
  =
  \{v\lambda:v\in\mathbb R,\lambda\in\operatorname{supp}(\lambda_i)\}.
\]
When \(d\ge2\), this is not enough to identify the full distribution of
\(D_i\), because finitely many one-dimensional projections do not determine
a multivariate distribution.

For identification of \(f_\beta\), however, full knowledge of
\(\varphi_D\) on \(\mathbb R^d\) is stronger than necessary. The deconvolution
step requires \(\varphi_D\) only on the second-step frequency set
\[
  \mathcal C_2
  =
  \{W_i'\tau_2(X_i)'u:
    u\in\mathbb R^p,\ (X_i,W_i)\in\operatorname{supp}(X_i,W_i)\}.
\]
Therefore, a sufficient compatibility condition is
\[
  \mathcal C_2\subseteq\overline{\mathcal C_1}.
\]
Under finite support, this condition requires the finitely many second-step
frequency rays to be contained in the finitely many rays recovered from the
first step. If this compatibility fails, then the denominator needed in the
deconvolution step is not identified at some required frequencies, and
\(f_\beta\) is not point identified without additional restrictions.

Thus, finite support does not mechanically rule out identification of
\(f_\beta\), but it makes identification depend on a restrictive geometric
alignment between the first-step rays and the second-step frequency set.
By contrast, the full-support condition
\[
  \operatorname{supp}(S_i)=\mathbb S^{d-1}
\]
is a simple sufficient condition because it implies
\[
  \overline{\mathcal C_1}=\mathbb R^d,
\]
and therefore guarantees the compatibility condition automatically.

The scalar case \(d=1\) is exceptional. Since
\(\mathbb S^0=\{-1,1\}\), any nonzero \(\lambda_i\) generates the entire
real line:
\[
  \{v\lambda_i:v\in\mathbb R\}=\mathbb R.
\]
Hence no directional richness condition is needed when \(d=1\).

\subsection*{Conditions restoring identification for \(d\ge2\)}

Identification can be restored in three ways.

First, even under finite support, \(f_\beta\) is identified if the first-step
rays cover all second-step frequencies $\mathcal C_2\subseteq\overline{\mathcal C_1}$. This is a compatibility condition between the directions generated by the
annihilated equation and the frequencies required by the deconvolution step.

Second, if \((X_i,W_i)\) contains continuous variation such that the induced
directions $S_i$ have full support on \(\mathbb S^{d-1}\), then
\(\overline{\mathcal C_1}=\mathbb R^d\), so \(\varphi_D\) is identified
globally and the compatibility condition holds automatically.

Third, one may impose additional structure on \(D_i\). For example, if the
components of \(D_i\) are mutually independent, then the joint characteristic
function factorizes, and limited directional information may be sufficient
provided the relevant marginal characteristic functions are identified. If
\(D_i\) belongs to a parametric or elliptical family, finitely many projections
may identify the finite-dimensional parameters. Finally, if \(\varphi_D\) is
real analytic, then values of \(\varphi_D\) on any nonempty open subset of
\(\mathbb R^d\) determine \(\varphi_D\) globally by analytic continuation.
Continuity alone is not sufficient for this conclusion.

\section{Feasibility of the Cross-Validation Environment}
\label{appendix:cv_feasibility}

Define the instability measure for fold $k$:
\begin{equation}
\Gamma_k
=
\max\!\left\{
\max_{j,\,i\in\mathcal M_{-k}(\tau_x)}
\frac{1}{|\widehat\varphi_D^{(-k)}(u_j/X_i)|},\;
\max_{j,\,i\in\mathcal M_k(\tau_x)}
\frac{1}{|\widehat\varphi_D^{(k)}(u_j/X_i)|}
\right\},
\label{eq:gammak_def}
\end{equation}
where the maxima are taken over pairs $(j,i)$ for which the denominator
exceeds $\tau_{\mathrm{den}}$. Let $\rho_k$ denote the fraction of
frequency--mover pairs $(u_j, X_i)$ in the validation fold that are
trimmed, i.e., 
\begin{equation}
\rho_k = \frac{1}{J \cdot N_{m,k}} 
\sum_{j=1}^{J} \sum_{i \in \mathcal{M}_k(\tau_x)} 
\mathbf{1}\!\left\{ 
|\widehat\varphi_D^{(k)}(u_j/X_i)| \le \tau_{\mathrm{den}} 
\right\}.
\label{eq:rhok_def}
\end{equation}

Define also
the retained count at frequency $u_j$ in fold $k$:
\begin{equation}
n_k(u_j)
=
\sum_{i\in\mathcal M_k(\tau_x)}
\mathbf 1\!\left\{
\bigl|\widehat\varphi_D^{(k)}(u_j/X_i)\bigr|>\tau_{\mathrm{den}}
\right\}.
\label{eq:nk_uj_def}
\end{equation}
 
The CV environment is infeasible if any of the following conditions
holds:
\begin{enumerate}[label=(\alph*), nosep]
\item Empty evaluation set:
      $N_{m,k}=0$ for some fold $k$;
\item Excessive trimming:
      $K^{-1}\sum_{k=1}^K \rho_k > \rho_{\max}$
      (we use $\rho_{\max}=0.5$);
\item Exploding instability:
      $\max_{k=1,\dots,K}\Gamma_k > \Gamma_{\max}$
      (typical values: $\Gamma_{\max}=100$ for simulations;
       relaxed to $\Gamma_{\max} \approx 1000$--$2000$ for empirical data
       with heavier-tailed errors);
\item Degenerate frequency:
      $\min_{k,j} n_k(u_j) = 0$, i.e., for some fold $k$ and frequency
      $u_j$, all validation movers are trimmed.
\end{enumerate}
Condition~(d) is essential: when $n_k(u_j)=0$, the validation target
$\widehat m_N^{(k)}(u_j)$ collapses mechanically to zero. This
is not a meaningful holdout target, and the CV comparison at frequency
$u_j$ becomes degenerate.
 
If any of conditions (a)--(d) fails, the procedure stops and the user must
revise $\tau_x$ or $\tau_{\mathrm{den}}$ before proceeding. Otherwise, the
CV search is conducted over the candidate grid $\Theta_{h_x}\times\Theta_S$.

\section{Algorithms}
\label{appendix:algorithms}

\begin{algorithm}[H]
\caption{First-stage estimator, scalar irregular design $(T,p,q)=(1,1,0)$}
\label{alg:irregular_estimator_short}
\begin{algorithmic}[1]
\small
\Require Data $\{(Y_i,X_i)\}_{i=1}^N$; tuning parameters
         $(h_0,h_x,\tau_x)$; common parameters
         $(\tau_{\rm den},U_N,S)$.
\Ensure First-stage estimates $\{\widehat{m}_N(u_\ell)\}_{\ell=1}^L$.

\State Partition the sample:
\(
  \mathcal{S}_N(\tau_x) = \{i : |X_i| < \tau_x\},
  \quad
  \mathcal{M}_N(\tau_x) = \{i : |X_i| \ge \tau_x\}.
\)

\State Precompute for each mover: \qquad
\(
  \widetilde{\widetilde{Y}}_i = \frac{Y_i}{X_i},
  \qquad i \in \mathcal{M}_N(\tau_x).
\)

\State Construct the stayer weights once:
\[
  \omega^{(0)}_j
  =
  \frac{K_{h_0}(X_j)}
       {\displaystyle\sum_{s \in \mathcal{S}_N(\tau_x)} K_{h_0}(X_s)},
  \qquad j \in \mathcal{S}_N(\tau_x).
\]

\For{each frequency $u_\ell \in \{u_1,\ldots,u_L\}$}

  \State Evaluate the denominator at each mover-specific argument:
  \[
    \widehat{\varphi}_D\!\left(\frac{u_\ell}{X_i}\right)
    =
    \sum_{j \in \mathcal{S}_N(\tau_x)}
    \omega^{(0)}_j\,
    \exp\!\left(\ii\,\frac{u_\ell}{X_i} Y_j\right),
    \qquad i \in \mathcal{M}_N(\tau_x).
  \]
  If $u_\ell=0$, set $\widehat{\varphi}_D(0)=1$.

  \For{each mover $i \in \mathcal{M}_N(\tau_x)$}

    \State Estimate the numerator by kernel regression:
    \begin{align*}
      \widehat\varphi_{\widetilde{\widetilde{Y}} \mid X}(u_\ell \mid X_i)
      &=
      \sum_{k \in \mathcal{M}_N(\tau_x)}
      \omega^{(x)}_{k,i}
      \exp\!\Bigl(\ii\, u_\ell \widetilde{\widetilde{Y}}_k\Bigr),
      \\
      \omega^{(x)}_{k,i}
      &=
      \frac{K_{h_x}(X_k - X_i)}
           {\displaystyle\sum_{s \in \mathcal{M}_N(\tau_x)}
            K_{h_x}(X_s - X_i)}.
    \end{align*}

    \State Form the trimmed conditional ratio:
    \[
      \widehat{R}_N(u_\ell \mid X_i)
      =
      \frac{
        \widehat{\varphi}_{\widetilde{\widetilde{Y}} \mid X}(u_\ell \mid X_i)
      }{
        \widehat{\varphi}_D(u_\ell / X_i)
      }
      \;\mathbf{1}\!\left\{
        \bigl|\widehat{\varphi}_D(u_\ell / X_i)\bigr| > \tau_{\mathrm{den}}
      \right\}.
    \]

  \EndFor

  \State Average over movers:
  \(
    \widehat{m}_N(u_\ell)
    =
    \dfrac{1}{N_m}
    \displaystyle\sum_{i \in \mathcal{M}_N(\tau_x)}
    \widehat{R}_N(u_\ell \mid X_i),\quad N_m = |\mathcal{M}_N(\tau_x)|.
  \)

\EndFor
\end{algorithmic}
\end{algorithm}

\begin{algorithm}[H]
\caption{First-stage estimator, regular design $(T,p,q)=(2,1,0)$}
\label{alg:regular_estimator_short}
\begin{algorithmic}[1]
\small
\Require Data $\{(Y_{i1}, Y_{i2}, X_{i1}, X_{i2})\}_{i=1}^N$; $(h_X, h_S)$; 
         $(\tau_{\rm den}, U_N, S)$.
\Ensure First-stage estimates $\{\widehat{m}_N(u_\ell)\}_{\ell=1}^L$.

\For{$i = 1, \dots, N$}
  \State Precompute:
  \[
    \widetilde{\widetilde{Y}}_i
    = \frac{X_i'Y_i}{X_i'X_i},
    \quad
    \lambda_i = (X_{i2}, -X_{i1})',
    \quad
    \|\lambda_i\| = \sqrt{X_{i1}^2 + X_{i2}^2},
    \quad
    S_i = \frac{\lambda_i}{\|\lambda_i\|},
    \quad
    Z_i^* = \frac{X_{i2}Y_{i1}-X_{i1}Y_{i2}}{\|\lambda_i\|}.
  \]
\EndFor

\For{each frequency $u_\ell \in \{u_1,\ldots,u_L\}$}

  \For{$i = 1, \dots, N$}

    \State Estimate the numerator by kernel regression:
    \[
      \widehat\varphi_{\widetilde{\widetilde{Y}} \mid X}(u_\ell \mid X_i)
      =
      \sum_{k=1}^N
      \omega^{(X)}_{k,i}
      \exp\!\Bigl(\ii\, u_\ell \widetilde{\widetilde{Y}}_k\Bigr),
      \qquad
      \omega^{(X)}_{k,i}
      =
      \frac{K_{h_X}(X_k - X_i)}
           {\displaystyle\sum_{s=1}^N K_{h_X}(X_s - X_i)}.
    \]

    \State Set $\xi_{i\ell} = u_\ell X_i / (X_i'X_i)$ and estimate
           the denominator:
    \If{$\xi_{i\ell} = 0$} \Comment{occurs whenever $u_\ell = 0$}
      \State Set $\widehat\varphi_D(0) = 1$.
    \Else
      \State Let $s(\xi_{i\ell}) = \xi_{i\ell}/\|\xi_{i\ell}\|$ and
             compute:
      \[
        \widehat\varphi_D(\xi_{i\ell})
        =
        \sum_{j=1}^N
        \omega^{(S)}_j\bigl(s(\xi_{i\ell})\bigr)
        \exp\!\Bigl(\ii\,\|\xi_{i\ell}\| Z_j^*\Bigr),
        \qquad
        \omega^{(S)}_j(s)
        =
        \frac{K\!\left(\|S_j - s\| / h_S\right)}
             {\displaystyle\sum_{k=1}^N K\!\left(\|S_k - s\| / h_S\right)}.
      \]
    \EndIf

    \State Compute:
    \(
      \widehat{R}_N(u_\ell \mid X_i)
      =
      \frac{
        \widehat\varphi_{\widetilde{\widetilde{Y}} \mid X}(u_\ell \mid X_i)
      }{
        \widehat\varphi_D(\xi_{i\ell})
      }
      \;\mathbf{1}\!\left\{
        \bigl|\widehat\varphi_D(\xi_{i\ell})\bigr| > \tau_{\rm den}
      \right\}.
    \)

  \EndFor

  \State Average over all observations:
  \(
    \widehat{m}_N(u_\ell)
    = \dfrac{1}{N}
    \displaystyle\sum_{i=1}^N
    \widehat{R}_N(u_\ell \mid X_i).
  \)

\EndFor
\end{algorithmic}
\end{algorithm}

\begin{algorithm}[H]
\caption{Second-stage sieve minimum distance estimator}
\label{alg:smd_second_stage}
\begin{algorithmic}[1]
\small
\Require First-stage estimates $\{\widehat{m}_N(u_\ell)\}_{\ell=1}^L$;
         frequency grid $\{u_\ell\}_{\ell=1}^L$;
         quadrature weights $\{w_\ell\}_{\ell=1}^L$;
         sieve dimension $S$.
\Ensure Estimated density $\widehat{f}_\beta$.

\State Evaluate the Hermite sieve basis on the frequency grid:
\[
  z_\ell
  =
  z^S(u_\ell)
  =
  \sqrt{2\pi}
  \bigl(
    q_0(u_\ell),\;
    \ii\,q_1(u_\ell),\;
    \ldots,\;
    \ii^{S-1} q_{S-1}(u_\ell)
  \bigr)',
  \qquad \ell = 1, \ldots, L.
\]

\State Form the quadrature matrix and vector:
\[
  \widehat{\Omega}
  =
  \sum_{\ell=1}^L
  w_\ell\,\mathrm{Re}\!\bigl(z_\ell \bar z_\ell'\bigr),
  \qquad
  \widehat{V}
  =
  \mathrm{Re}\!\left(
    \sum_{\ell=1}^L
    w_\ell\,\overline{\widehat{m}_N(u_\ell)}\, z_\ell
  \right).
\]
$\widehat{\Omega}$ is real symmetric and positive definite when the
$L\times S$ matrix with rows $z^S(u_\ell)'$ has full column rank $S$.

\State Set $A = z^S(0)\in\mathbb{R}^S$ and compute:
\[
  \widehat{\pi}
  =
  \widehat{\Omega}^{-1}
  \!\left(
    \widehat{V}
    +
    \frac{1 - A'\widehat{\Omega}^{-1}\widehat{V}}
         {A'\widehat{\Omega}^{-1}A}
    \,A
  \right).
\]
By construction $A'\widehat{\pi}=1$, so the unit-mass constraint
$\varphi_S(0;\widehat\pi)=1$ is satisfied exactly.

\State Return:
\[
  \widehat{f}_\beta(b)
  =
  q^S(b)'\widehat{\pi}
  =
  \sum_{s=0}^{S-1}\widehat{\pi}_s\,q_s(b),
  \qquad b\in\mathbb{R}.
\]
\end{algorithmic}
\end{algorithm}

For each repetition \(r\), fold \(k\), and candidate
\(\theta=(h_x,S)\), define the fold-specific validation loss
\begin{equation}
\label{eq:cv_fold_loss}
\mathrm{CV}_k^{(r)}(\theta)
=
\sum_{j=1}^J
\varpi_j
\left|
\widehat m_{\mathrm{val}}^{(k,r)}(u_j)
-
\widehat\varphi_{\beta}^{(-k,r)}(u_j;\theta)
\right|^2 .
\end{equation}
The repeated cross-validation criterion is
\begin{equation}
\label{eq:cv_mean}
\overline{\mathrm{CV}}(\theta)
=
\frac{1}{n_{\mathrm{rep}}K}
\sum_{r=1}^{n_{\mathrm{rep}}}\sum_{k=1}^K
\mathrm{CV}_k^{(r)}(\theta).
\end{equation}
Let
\begin{equation}
\label{eq:cv_rep_score}
\mathrm{CV}^{(r)}(\theta)
=
\frac{1}{K}\sum_{k=1}^K
\mathrm{CV}_k^{(r)}(\theta),
\qquad
\widehat{\mathrm{se}}(\theta)
=
\frac{
\mathrm{SD}\!\left(
\{\mathrm{CV}^{(r)}(\theta)\}_{r=1}^{n_{\mathrm{rep}}}
\right)
}{
\sqrt{n_{\mathrm{rep}}}
}.
\end{equation}
Let
\[
\widetilde\theta
=
(\widetilde h_x,\widetilde S)
\in
\arg\min_{\theta\in\Theta_{h_x}\times\Theta_S}
\overline{\mathrm{CV}}(\theta),
\qquad
\mathrm{CV}^{*}
=
\overline{\mathrm{CV}}(\widetilde\theta)
+
\widehat{\mathrm{se}}(\widetilde\theta),
\]
and define the one-standard-error set
\begin{equation}
\label{eq:one_se_set}
\Theta_{\mathrm{1SE}}
=
\left\{
\theta\in\Theta_{h_x}\times\Theta_S:
\overline{\mathrm{CV}}(\theta)\le \mathrm{CV}^{*}
\right\}.
\end{equation}
The selected pair is
\begin{equation}
\label{eq:one_se_selection}
h_x^*
=
\max\{h_x:\exists S\text{ such that }(h_x,S)\in\Theta_{\mathrm{1SE}}\},
\qquad
S^*
=
\min\{S:(h_x^*,S)\in\Theta_{\mathrm{1SE}}\}.
\end{equation}

\begin{algorithm}[H]
\caption{Repeated \(K\)-fold cross-validation for \((h_x,S)\) selection}
\label{alg:pcv_tuning}
\begin{algorithmic}[1]
\footnotesize
\Require Data \(\{(Y_i,X_i)\}_{i=1}^N\); fixed
\((\tau_x,h_0,\tau_{\mathrm{den}})\); pilot bandwidth
\(g_{\mathrm{pilot}}\); grids \(\Theta_{h_x},\Theta_S\);
folds \(K\); repetitions \(n_{\mathrm{rep}}\); frequencies
\(\{u_j,\varpi_j\}_{j=1}^J\); thresholds
\(\Gamma_{\max},\rho_{\max}\).
\Ensure Selected tuning parameters \((h_x^*,S^*)\).

\For{\(r=1,\ldots,n_{\mathrm{rep}}\)}
  \State Generate a random \(K\)-fold partition
  \(\{\mathcal I_k^{(r)}\}_{k=1}^K\).

  \For{\(k=1,\ldots,K\)}
    \State Define training/validation samples
    \(\mathcal I_{-k}^{(r)}\), \(\mathcal I_k^{(r)}\), and their
    stayer/mover subsets.
    \State Estimate
    \(\widehat\varphi_D^{(-k,r)}\) and
    \(\widehat\varphi_D^{(k,r)}\).
    \State Compute feasibility diagnostics
    \(\Gamma_k^{(r)}\), \(\rho_k^{(r)}\), and
    \(n_k^{(r)}(u_j)\), \(j=1,\ldots,J\).
    \State Construct the validation target
    \(\widehat m_{\mathrm{val}}^{(k,r)}(u_j)\) using the
    pilot bandwidth \(g_{\mathrm{pilot}}\).
  \EndFor

  \If{\(\exists k:N_{m,k}^{(r)}=0\), or
  \(K^{-1}\sum_k\rho_k^{(r)}>\rho_{\max}\), or
  \(\max_k\Gamma_k^{(r)}>\Gamma_{\max}\), or
  \(\min_{k,j}n_k^{(r)}(u_j)=0\)}
    \State \textbf{stop} and revise the fixed stability parameters.
  \EndIf

  \For{\(\theta=(h_x,S)\in\Theta_{h_x}\times\Theta_S\)}
    \For{\(k=1,\ldots,K\)}
      \State Estimate
      \(\widehat m_N^{(-k,r)}(u_j;h_x)\) on the training sample.
      \State Fit the sieve approximation
      \(\widehat\varphi_\beta^{(-k,r)}(u_j;\theta)\).
      \State Compute \(\mathrm{CV}_k^{(r)}(\theta)\) using
      \eqref{eq:cv_fold_loss}.
    \EndFor
  \EndFor
\EndFor

\State Compute \(\overline{\mathrm{CV}}(\theta)\) and
\(\widehat{\mathrm{se}}(\theta)\) using
\eqref{eq:cv_mean}--\eqref{eq:cv_rep_score}.
\State Form \(\Theta_{\mathrm{1SE}}\) using \eqref{eq:one_se_set}.
\State Select \((h_x^*,S^*)\) using \eqref{eq:one_se_selection}.
\State Re-estimate the density on the full sample using \((h_x^*,S^*)\).
\end{algorithmic}
\end{algorithm}

\section{Application}
\label{sec:applicationestimation}

\subsection{Details on Estimation and Tuning Parameters}

\paragraph{Scalar irregular estimator, $(T, p, q) = (1, 1, 0)$.}
When a single pair of periods is used, equation~\eqref{eq:app_fd} reduces to 
$Y_i = X_i \beta_i + D_i$ with scalar $D_i$. The characteristic function 
$\varphi_D$ is estimated from stayers, i.e., observations with 
$|X_i| < \tau_x$, via kernel smoothing at $X_i = 0$ 
(Algorithm~\ref{alg:irregular_estimator_short}). The ratio 
$\tTwo{Y}_i = Y_i / X_i = \beta_i + D_i / X_i$ is formed for movers 
($|X_i| \geq \tau_x$), and the density $f_\beta$ is recovered by 
deconvolution through the sieve minimum distance procedure 
(Algorithm~\ref{alg:smd_second_stage}).

The stayer threshold $\tau_x$ has a direct economic interpretation 
in this application. Because the regressor 
$X_i = \log(\text{Exp}_{is}) - \log(\text{Exp}_{i,s-1})$ is the change in 
log expenditure, and $\log(1 + r) \approx r$ for small $r$, the condition 
$|X_i| < \tau_x$ selects households whose expenditure changed by less than 
approximately $100 \times \tau_x$ percent between adjacent survey waves. 
Stayers are thus households that experienced relatively stable 
expenditure levels across periods, while movers are households that 
experienced more substantial expenditure fluctuations.

Following the discussion in Section~\ref{subsec:implementation}, the 
stayer threshold $\tau_x$ and the denominator bandwidth $h_0$ are 
fixed, while the mover bandwidth $h_x$ and the sieve dimension $S$ are 
selected jointly by the cross-validation procedure of 
Algorithm~\ref{alg:pcv_tuning}. Specifically, the stayer threshold is 
set as
\begin{equation}\label{eq:threshold}
  \tau_x = c_\tau \cdot \tau_x^{\mathrm{ref}}, \qquad
  \tau_x^{\mathrm{ref}} = N^{-1/3}\min\!\left\{
    \mathrm{SD}(X_i),\;\frac{\mathrm{IQR}(X_i)}{1.34}
  \right\},
\end{equation}
with $c_\tau = 4$ for the full sample and $c_\tau = 3$ for the RPS$=1$ 
and RPS$=0$ subsamples. The reference threshold 
$\tau_x^{\mathrm{ref}}$ is a robust measure of spread scaled by 
$N^{-1/3}$, which widens as the sample size decreases; reducing 
$c_\tau$ for the smaller subsamples offsets this widening, maintaining 
a stable stayer proportion and ensuring an adequate mover count for the 
second-stage estimator. In the application, the resulting thresholds 
$\tau_x$ correspond to expenditure changes of roughly 5--10 percent, 
yielding stayer proportions between 7 and 10 percent across 
specifications (Table~\ref{tab:tuning_parameters}).

The denominator bandwidth is set proportional to the threshold, 
$h_0 = \tau_x$, following the discussion after~\eqref{hx_ref}. This 
choice ensures that the kernel estimate of $\varphi_D$ at $X_i = 0$ 
draws primarily on observations within the stayer region. The 
denominator trimming threshold is $\tau_{\mathrm{den}} = 10^{-4}$.

With these parameters fixed, the cross-validation search is conducted 
over a grid of candidate values $\theta = (h_x, S)$. The candidate grid 
for $h_x$ is centered on the Silverman reference bandwidth~\eqref{hx_ref} 
computed from the mover subsample and multiplied by 
$\{0.5, 0.75, 1.0, 1.5, 2.0\}$, and $S$ ranges over odd integers from 
$3$ to $15$. The cross-validation criterion $\mathrm{PCV}(\theta)$ is 
evaluated using $K = 5$ folds with $n_{\mathrm{rep}} = 20$ repetitions 
for stability, and the selected tuning parameters are 
$\widehat\theta = \arg\min_{\theta \in \Theta} \mathrm{PCV}(\theta)$ 
subject to a one-standard-error rule that favors parsimony. We apply 
this estimator separately to the 2000--2001 and 2001--2002 differenced 
data.

As reported in Table~\ref{tab:tuning_parameters}, cross-validation 
selects $S = 3$ in all irregular specifications, reflecting the 
relatively smooth unimodal densities recovered from these data. The 
selected mover bandwidths $h_x$ vary across samples and periods, 
ranging from $0.120$ to $0.303$.

\paragraph{Regular estimator, $(T,p,q)=(2,1,0)$.}
When both differenced periods are stacked, the regressor matrix 
$X_i=(X_{i1},X_{i2})'\in\mathbb{R}^2$ has $T=2>p=1$, placing the problem in the 
regular design. The first-step transformation is the rank-one annihilator
\[
  \tau_1(X_i)
  =
  \begin{pmatrix} X_{i2} & -X_{i1} \\ 0 & 0 \end{pmatrix},
\]
which satisfies $\tau_1(X_i)X_i=0$ and yields the scalar first-step equation 
$Y_i^*=X_{i2}Y_{i1}-X_{i1}Y_{i2}=\lambda_i'D_i$ with $\lambda_i=(X_{i2},-X_{i1})'$ 
for every observation. The characteristic function 
$\varphi_D$ is then estimated from the full sample by smoothing over the 
normalized direction $S_i=\lambda_i/\|\lambda_i\|\in\mathbb{S}^1$ 
(Algorithm~\ref{alg:regular_estimator_short}). In the second step, the left inverse 
$\tau_2(X_i)=(X_i'X_i)^{-1}X_i'$ gives $\tTwo{Y}_i=\beta_i+\tTwo{W}_iD_i$, and 
deconvolution recovers $f_\beta$ via the sieve minimum distance procedure 
(Algorithm~\ref{alg:smd_second_stage}).

The tuning parameters for the regular design are selected by an analogous 
application of Algorithm~\ref{alg:pcv_tuning}. The directional bandwidth $h_S$, which governs the 
estimation of $\varphi_D$ by smoothing over $S_i\in\mathbb{S}^1$, plays the same 
role as $h_0$ in the irregular design: it is fixed by a rule of thumb. A natural default is Silverman's rule adapted to directional data on $\mathbb{S}^1$. 
The bandwidth $h_X$ for the numerator kernel regression and the sieve dimension $S$ 
are then selected jointly by cross-validation, with $h_X$ replacing $h_x$ in the 
candidate grid and the cross-validation loops of Algorithm~\ref{alg:pcv_tuning}.

The candidate grid for $h_X$ is centered on Silverman's bivariate reference 
bandwidth $h_X^{\mathrm{ref}}$ applied to $\{X_i\}_{i=1}^N$ and multiplied by 
$\{0.5, 0.75, 1.0, 1.5, 2.0\}$, while $S$ ranges over odd integers from $3$ to 
$15$. The weighting function $\nu$ is the standard normal density.

The tuning parameters selected by the CV algorithm can be found in Table \ref{tab:tuning_parameters}. These correspond to the tuning parameters used to compute the estimates in Table \ref{tab:moments_combined}.

\begin{table}[ht]
\centering
\caption{Selected tuning parameters and stayer shares}
\label{tab:tuning_parameters}
\renewcommand{\arraystretch}{1.15}
\begin{tabular}{llcc}
\toprule
Sample / Period & Bandwidth & $S$ & Stayers (\%) \\
\midrule
\multicolumn{4}{l}{\textit{Regular estimator (pooled 2000--2002)}} \\[2pt]
Full sample & $h_S = 0.093,\; h_X = 0.234$ & $3$ & --- \\
RPS $=1$    & $h_S = 0.106,\; h_X = 0.252$ & $3$ & --- \\
RPS $=0$    & $h_S = 0.111,\; h_X = 0.281$ & $3$ & --- \\
\addlinespace
\multicolumn{4}{l}{\textit{Irregular estimator: Full sample}} \\[2pt]
2000--2001 & $h_x = 0.276$ & $3$ & $8.1$ \\
2001--2002 & $h_x = 0.120$ & $3$ & $7.1$ \\
\addlinespace
\multicolumn{4}{l}{\textit{Irregular estimator: RPS $=1$}} \\[2pt]
2000--2001 & $h_x = 0.159$ & $3$ & $9.8$ \\
2001--2002 & $h_x = 0.262$ & $3$ & $9.1$ \\
\addlinespace
\multicolumn{4}{l}{\textit{Irregular estimator: RPS $=0$}} \\[2pt]
2000--2001 & $h_x = 0.160$ & $3$ & $9.6$ \\
2001--2002 & $h_x = 0.303$ & $3$ & $9.5$ \\
\bottomrule
\end{tabular}
\vspace{0.5em}

\footnotesize
\emph{Notes:} $S$ denotes the sieve dimension. Bandwidths $(h_S, h_X)$ correspond to the regular design, while $h_x$ denotes the scalar bandwidth in the irregular design. The regular estimator uses the full sample of $N = 1{,}358$ households in both stages. The irregular estimator is applied to three samples: the full sample ($N = 1{,}358$), RPS$\,{=}\,1$ ($N = 705$), and RPS$\,{=}\,0$ ($N = 653$). Common tuning settings across both designs: $\tau_{\mathrm{den}} = 10^{-4}$, $U_N = 4$, stabilized CV with $K = 5$ folds, $n_{\mathrm{rep}} = 20$ repetitions, pilot rate $N^{-1/7}$, $c_{\mathrm{pilot}} = 2$, and one-SE rule. The candidate grid for $S$ is $\{3,5,7,9,11,13,15\}$; in all six runs the one-SE rule selects the smallest sieve dimension within one standard error of the minimum cross-validation criterion.
\end{table}

\begin{remark}
The full-sample and subsample estimates in Table~\ref{tab:moments_combined} are not constrained to satisfy a mixture relationship: each is obtained by running the estimator independently on its own sample, with tuning parameters selected separately via cross-validation within that sample. The set of observations contributing to the directional smoother for $\widehat\varphi_D$ varies across runs, and the sample-specific bandwidths $(h_S, h_X)$ alter the shape of the denominator and numerator characteristic-function estimates. Moreover, the first-stage estimator $\widehat m_N(u)$ is an average of deconvolution ratios $\widehat\varphi_{\tTwo Y\mid X}(u\mid X_i)\big/\widehat\varphi_D\!\bigl(u\, X_i/(X_i'X_i)\bigr)$, each involving a sample-specific denominator; this nonlinearity in the data implies that $\widehat m_N(u)$ for the full sample is not a weighted average of its subsample counterparts. The nonlinearity of the second-stage sieve minimum-distance step then propagates these differences into the final density estimates. Consequently, neither $\widehat{\mathbb{E}}[\beta_i]$ nor $\widehat{\mathrm{Var}}[\beta_i]$ for the full sample is constrained to lie between the corresponding RPS$\,{=}\,1$ and RPS$\,{=}\,0$ values; a mixture identity is not enforced by the estimator. In the present realization, the full-sample point estimates do fall inside the subsample range ($\widehat{\mathbb{E}}[\beta_i]_{\mathrm{Full}} = 0.677$ between $0.625$ and $0.698$; $\widehat{\mathrm{Var}}[\beta_i]_{\mathrm{Full}} = 0.591$ between $0.586$ and $0.635$), but this is a feature of the particular sample rather than a property of the procedure, and need not be preserved across alternative samples or bootstrap draws.
\end{remark}

\subsection{Preliminary Assessment of the Support of $\beta_i$}
\label{subsec:beta_support}

Implementation of the sieve minimum distance estimator requires specification of a grid over which the density $f_\beta$ is evaluated. To inform this choice, we develop a simple diagnostic that provides indicative bounds on the support of $\beta_i$ using observable data.

Consider the scalar irregular design $Y_i = X_i \beta_i + D_i$. For stayers satisfying $|X_i| < \tau_x$, we have $Y_i \approx D_i$, so the distribution of $Y_i$ among stayers approximates the distribution of $D_i$. For movers with $|X_i| \geq \tau_x$, the individual effect satisfies $\beta_i = (Y_i - D_i)/X_i$, but $D_i$ is unobserved.

We proceed in two stages. First, we bound the disturbance using quantiles of the stayer outcome distribution, which provides robustness to outliers:
\begin{equation}
D^{\mathrm{lo}} = Q_{\alpha}(Y_i \mid |X_i| < \tau_x), \qquad D^{\mathrm{hi}} = Q_{1-\alpha}(Y_i \mid |X_i| < \tau_x),
\end{equation}
where $Q_\alpha(\cdot)$ denotes the $\alpha$-quantile. We set $\alpha = 0.05$ in the application.

Second, for each mover $i$, we compute bounds on $\beta_i$ by considering the extreme values of $D_i$ within the interval $[D^{\mathrm{lo}}, D^{\mathrm{hi}}]$:
\begin{equation}
\beta_i^L = \frac{Y_i - D^{\mathrm{hi}}}{X_i}, \qquad \beta_i^U = \frac{Y_i - D^{\mathrm{lo}}}{X_i} \qquad \text{for } X_i > 0,
\end{equation}
with the inequalities reversed when $X_i < 0$. This yields two vectors of length $N_m$: the collection of lower bounds $\{\beta_i^L\}_{i \in \mathcal{M}}$ and upper bounds $\{\beta_i^U\}_{i \in \mathcal{M}}$.

Even after trimming, individual $\beta$ bounds may exhibit extreme values due to outlier outcomes $Y_i$ or small values of $|X_i|$ near the threshold $\tau_x$. To address this, we summarize the bound distributions using their interquartile range:
\begin{equation}
\text{Typical range} = \left[Q_{0.25}(\beta^L), \; Q_{0.75}(\beta^U)\right].
\end{equation}
This doubly-robust summary trims both the stayer-derived $D$ bounds and the mover-derived $\beta$ bounds.

Using the $Q(5\%, 95\%)$ bounds on $D_i$, the range for the expenditure elasticity is approximately $[-1.8, 3]$. Table \ref{tab:beta_support} presents the results. These bounds assume $\mathrm{supp}(D_i \mid X_i \neq 0) \subseteq \mathrm{supp}(D_i \mid X_i = 0)$. The diagnostic should therefore be interpreted as indicative rather than as sharp identification bounds. The bounds in the table are used for display and for the internal normalization step described in Section~\ref{sec:estimation}.

\begin{table}[htbp]
\centering
\caption{Indicative Bounds on the Support of $\beta_i$}
\label{tab:beta_support}
\begin{tabular}{lcc}
\toprule
 & \multicolumn{2}{c}{Typical Range} \\
\cmidrule(lr){2-3}
Subsample / Year Pair & $Q_{.25}(\beta^L)$ & $Q_{.75}(\beta^U)$ \\
\midrule
RPS $= 1$, $\Delta$ 2000--2001 & $-1.70$ & $2.77$ \\
RPS $= 1$, $\Delta$ 2001--2002 & $-1.54$ & $2.12$ \\
RPS $= 0$, $\Delta$ 2000--2001 & $-1.81$ & $2.99$ \\
RPS $= 0$, $\Delta$ 2001--2002 & $-1.65$ & $2.43$ \\
\midrule
Combined & $-1.81$ & $2.99$ \\
Implemented & $-3$ & $3$ \\
\bottomrule
\end{tabular}

\smallskip
\parbox{0.85\textwidth}{\footnotesize\textit{Notes:} Bounds are computed using the two-stage quantile trimming procedure described in Section~\ref{subsec:beta_support}. The disturbance $D_i$ is bounded by the 5th and 95th percentiles of $Y_i$ among stayers ($|X_i| < \tau_x$). For each mover, individual bounds $[\beta_i^L, \beta_i^U]$ are computed by evaluating $\beta_i = (Y_i - D_i)/X_i$ at the extreme values of the $D$ interval. The ``Typical Range'' reports $Q_{.25}(\beta^L)$ and $Q_{.75}(\beta^U)$, the 25th percentile of lower bounds and 75th percentile of upper bounds across movers. The ``Combined'' 
row reports the envelope of the subgroup ranges, i.e., the minimum of the 
$Q_{.25}(\beta^L)$ values and the maximum of the $Q_{.75}(\beta^U)$ values 
across all subsamples, rather than the pooled quantiles. The "Combined" row reports the range used in the application.}
\end{table}

\FloatBarrier

\bibliographystyle{plainnat}
\bibliography{BotosaruPowell_refs}

\end{document}